\def\eqref#1{equation~\ref{#1}}
\def\1{\bm{1}}
\def\ra{{\textnormal{a}}}
\def\rx{{\textnormal{x}}}
\def\rva{{\mathbf{a}}}
\def\erva{{\textnormal{a}}}
\def\ervx{{\textnormal{x}}}
\def\rmA{{\mathbf{A}}}
\def\vmu{{\bm{\mu}}}
\def\vtheta{{\bm{\theta}}}
\def\va{{\bm{a}}}
\def\ve{{\bm{e}}}
\def\vx{{\bm{x}}}
\def\eva{{a}}
\def\mA{{\bm{A}}}
\def\mH{{\bm{H}}}
\def\mI{{\bm{I}}}
\def\mJ{{\bm{J}}}
\def\mX{{\bm{X}}}
\def\mSigma{{\bm{\Sigma}}}
\DeclareMathAlphabet{\mathsfit}{\encodingdefault}{\sfdefault}{m}{sl}
\SetMathAlphabet{\mathsfit}{bold}{\encodingdefault}{\sfdefault}{bx}{n}
\newcommand{\tens}[1]{\bm{\mathsfit{#1}}}
\def\tA{{\tens{A}}}
\def\tX{{\tens{X}}}
\def\gG{{\mathcal{G}}}
\def\sA{{\mathbb{A}}}
\def\sB{{\mathbb{B}}}
\def\sS{{\mathbb{S}}}
\def\emA{{A}}
\newcommand{\etens}[1]{\mathsfit{#1}}
\def\etA{{\etens{A}}}
\newcommand{\E}{\mathbb{E}}
\newcommand{\R}{\mathbb{R}}
\newcommand{\KL}{D_{\mathrm{KL}}}
\newcommand{\Var}{\mathrm{Var}}
\newcommand{\Cov}{\mathrm{Cov}}
\newcommand{\normltwo}{L^2}
\newcommand{\normlp}{L^p}
\newcommand{\parents}{Pa} % See usage in notation.tex. Chosen to match Daphne's book.
\title{Networked Communication for Decentralised Agents in Mean-Field Games}
\author{\name Patrick Benjamin \email patrick.benjamin@cs.ox.ac.uk \\
      \addr Department of Computer Science\\
      University of Oxford
      \AND
      \\\;\\
      \name Alessandro Abate \email alessandro.abate@cs.ox.ac.uk \\
      \addr Department of Computer Science\\
      University of Oxford
      }
\theoremstyle{plain}
\newtheorem{theorem}{Theorem}[section]
\newtheorem{proposition}[theorem]{Proposition}
\newtheorem{lemma}[theorem]{Lemma}
\theoremstyle{definition}  % normal upright text and paragraph spacing
\newtheorem{definition}[theorem]{Definition}
\newtheorem{assumption}[theorem]{Assumption}
\newtheorem{remark}[theorem]{Remark}
\let\AND\relax
\begin{document}

\maketitle

\begin{abstract}
Methods like multi-agent reinforcement learning struggle to scale with growing population size. Mean-field games (MFGs) are a game-theoretic approach that can circumvent this by finding a solution for an abstract infinite population, which can then be used as an approximate solution for the $N$-agent problem. However, classical mean-field algorithms usually only work under restrictive conditions. We take steps to address this by introducing networked communication to MFGs, in particular to settings that use a single, non-episodic run of $N$ decentralised agents to simulate the infinite population, as is likely to be most reasonable in real-world deployments.
%oracle-free settings where $N$ decentralised agents learn along a single, non-episodic run of the empirical system. %
We prove that our architecture's sample guarantees lie between those of earlier theoretical algorithms for the centralised- and independent-learning architectures, varying dependent on network structure and the number of communication rounds.
% We prove that our architecture has sample guarantees bounded between those of the centralised- and independent-learning cases. We provide the order of the difference in these bounds in terms of network structure and number of communication rounds, and also contribute a policy-update stability guarantee. 
However, the sample guarantees of the three theoretical algorithms do not actually result in practical convergence times. We thus contribute practical enhancements to all three algorithms allowing us to present their first empirical demonstrations%, where we do not need to enforce several of the theoretically required assumptions
. We then show that in practical settings where the theoretical hyperparameters are not observed, giving fewer loops but poorer estimation of the Q-function, our communication scheme still respects the earlier theoretical analysis: it considerably accelerates learning over the independent case, which hardly seems to learn at all, and % Indeed networked agents 
often performs similarly to the centralised case, while removing the restrictive assumption of the latter. We provide ablations and additional studies showing that our networked approach also has advantages over both alternatives in terms of robustness to update failures and to changes in population size.
%Similarly, we are happy to add lines to our abstract motivating our work in terms of MARL if you think this is preferable. The existing abstract is organised to follow the structure of the paper, and is intended to give a logical sequence of how to understand our contributions in relation to each other (algorithms -> theoretical analysis -> empirical modifications and studies). We respectfully remind you that significance and practical applications are not acceptance criteria for TMLR, which is why we do not highlight these in the abstract, but we are happy to do so if you prefer. We are also happy to simplify the sentence that you find confusing - it is meant to encapsulate that we don’t just bound the sample guarantees of our algorithm between those of the other architectures, but also prove how the different bounds relate to each other in terms of the communication network.
\iffalse
The abstract paragraph should be indented 1/2~inch on both left and
right-hand margins. Use 10~point type, with a vertical spacing of 11~points.
The word \textbf{\large Abstract} must be centered, in bold, and in point size 12. Two
line spaces precede the abstract. The abstract must be limited to one
paragraph.
\fi
\end{abstract}

\section{Introduction}\label{introduction}

Multi-agent reinforcement learning (MARL) \citep{4445757, Zhang2021} generalises reinforcement learning (RL) \citep{sutton2018reinforcement} to sequential decision problems in multi-agent systems, that is, temporally extended interactions amongst multiple agents within a common environment. Underpinned by breakthroughs in deep learning, MARL has recently seen empirical success in a variety of domains, including robotics \citep{LEOTTAU2018130, 10107729, multi_robot_survey}, autonomous driving and infrastructure \citep{shalev2016safe,Mannion2016}, games \citep{10.5555/3306127.3332052,alphastarblog,openai2019dota}, social science / cooperative AI \citep{social_dilemmas, cao2018emergent, jaques2019social,mckee2020social} and economics \citep{iet:/content/journals/10.1049/iet-gtd.2016.0075,SHAVANDI2022118124}. However, it has usually been difficult to scale MARL systems beyond configurations with agents numbering in the low tens, as the joint state and action spaces grow exponentially with the number of agents \citep{nash_complexity,Vinyals2019GrandmasterLI, continuous_fictitious_play,NEURIPS2020_e9bcd1b0, one_that_may_sample,scalable_deep,SHAVANDI2022118124,yardim2024exploiting,zeng2024single,lirevisiting}, which represents a computational bottleneck in representing or even exploring them. Nevertheless, the value of reasoning about interactions among very large populations of agents has been recognised, and an informal distinction is sometimes drawn between multi- and \textit{many}-agent systems \citep{zheng2018magent,curse,cui2022survey}. The latter situation can be more useful (as in cases where better solutions arise from the presence of more agents \citep{shiri2019massive,daniel_ants,multi_robot_survey,10.1002/aaai.12131}), more parallelisable \citep{andreen2016emergent}, more fault-tolerant \citep{CHANG2023102570}, or otherwise more reflective of certain real-world systems involving large numbers of decision makers, such as financial markets \citep{iet:/content/journals/10.1049/iet-gtd.2016.0075, TRIMBORN2018613,SHAVANDI2022118124}, cryptocurrency mining \citep{li2022mean}, smart infrastructures with large populations of autonomous vehicles \citep{DBLP:journals/corr/abs-2001-03232,Kuang_Huang_2020,dynamic_traffic}, cloud resource management \citep{nips2022mao}, smart grids, and other large-scale cyber-physical systems \citep{10093067}. 

One approach that has been used to circumvent the scalability issue in MARL is the game-theoretic mean-field framework \citep{huangMFG,lasry2007mean}, inspired by theoretical physics.\footnote{Note that the mean-field framework has an independent scientific lineage to MARL and encompasses stochastic control, Partial Differential Equations and others. While RL can be used as a learning method within the mean-field framework (as in our case), other (model-based) methods have traditionally been used, and the MFG framework has considered structural questions regarding the existence, uniqueness and stability of equilibria even without running a learning algorithm.} This models a representative agent as interacting not with the other individuals in the population on a per-agent basis, but instead with a distribution over the other agents, known as the \textit{mean field}. Defined formally in Sec. \ref{preliminaries}, the framework analyses the limiting case when the population consists of an infinite number of symmetric and anonymous agents, that is, they have identical reward and transition functions which depend on the mean field rather than on the actions of specific other players. Considering strategic interactions in populations that are so large that the individual identities of agents are not significant (i.e. they are anonymous) is a modelling choice that fits many systems, such as crowd motion, traffic flows, epidemics, financial systems, %electricity markets
electric vehicle charging and energy demand response. Anonymity facilitates the beneficial scaling properties of MFGs with respect to very large populations. It allows agents to be exchangeable and their strategies symmetrical, massively reducing complexity. This allows agents to interact via the aggregate distribution instead of via individual identities, which would hinder the simplification afforded by MFGs, and in turn their scalability. Variants have been developed beyond the classical symmetric/anonymous assumptions, including heterogeneous-type/multi-population mean-field problems, major-minor mean-field problems and graphon mean-field problems \citep{survey_learningMFGs}, though we consider the typical formulation in this work.

\begin{figure*}[t]
    \centering
    \begin{subfigure}[b]{0.9\columnwidth}
        \centering
        \includegraphics[width=\textwidth]{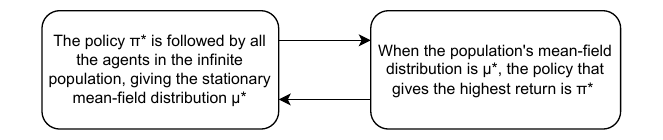}
        \caption{}
    \end{subfigure}
    
    \begin{subfigure}[b]{0.9\columnwidth}
        \centering
        \includegraphics[width=\textwidth]{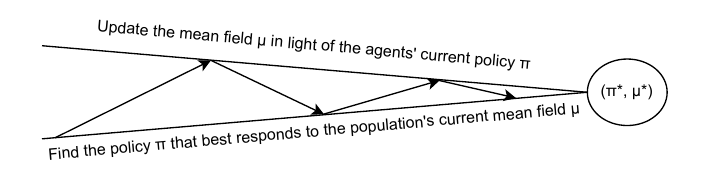}
        \caption{}
    \end{subfigure}
    \caption{a) As formalised in Def. \ref{Nash_equilibrium_of_stationary_MFG}, an optimal solution to a MFG is a policy that best responds to the stationary mean-field distribution that arises when all agents follow that same policy. b) The solution can therefore be seen as the fixed point of two operators: finding the mean-field distribution that arises when agents follow a given policy and finding the policy that best responds to a given mean field.}
    \label{MFG_NE_diagram}
\end{figure*}

In this work we focus on mean-field games (MFG), which are non-cooperative scenarios where each agent seeks to maximise its individual return; in particular we focus on MFGs with stationary population distributions (\textit{stationary MFGs}), where learning is more tractable than in non-stationary ones \citep{one_that_may_sample,Anahtarci2020QLearningIR,zaman2023oraclefree,policy_mirror_independent,li2025effect,osborne2025rates}. The solution to a stationary MFG is a MFG-Nash equilibrium (MFG-NE, see Fig. \ref{MFG_NE_diagram}), which reflects the situation when each agent responds optimally to the population distribution that arises when all other agents follow that same optimal behaviour.

The MFG-NE can be used as an approximation for the Nash equilibrium (NE) solution in the associated finite-agent game, which is computationally more difficult to solve in itself (hence the scalability issue in MARL), with the error in the solution reducing as the number of agents \textit{N} tends to infinity \citep{finite_proofs, doi:10.1137/20M1360700,10.5555/3586589.3586718,major_minor_control,cui2023learningdecentralisedcontrol,Anahtarci2020QLearningIR,yardim2024mean, TOUMI2024111420, hu2024mfoml,chen2024periodic,bayraktar2024learning, 10928821,yardim2025variational}%\citep{yardim2024mean}This is not a new result derived in Yardim et al., and has been known at least since https://arxiv.org/pdf/1612.07878, and possibly even earlier.
. For example, it might be very difficult to directly find a NE for a million agents. However this can be circumvented by finding the solution for the infinite population, where analysis is simplified by taking place in the mathematical limit of population size, and then applying that back to the million agents. Moreover, once the solution is found it does not depend on the size of the deployed population, so some of the million agents could leave without requiring the rest of the population to compute a new solution, or the population could grow to 10 million and the original solution would work even better than before. Furthermore, although the analysis assumes an infinite population, there are numerous ways to represent and generate its behaviour. While it might be calculated analytically or by extrapolating from the behaviour of a single generic agent that is assumed to represent the whole population, more recent work has involved simulating the infinite population by drawing finite random samples \citep{one_that_may_sample}, or indeed by deploying an empirical population consisting of a finite number of agents that is assumed to be representative of the infinite population \citep{policy_mirror_independent}. The latter case, i.e. using a finite population to simulate the infinite one, might be particularly desirable when the infinite limit is being used precisely to find an approximate solution for the NE of a finite-population problem - the same finite population that was of original interest might be able to be used find the mean-field solution that approximates its own solution. See Rem. \ref{two_conceptions} for further discussion.

Works have therefore considered applying MFGs to find approximate solutions for a wide variety of real-world problems involving a large but finite number of agents, which might otherwise have been too difficult to solve - note though that these are generally idealised simulated environments serving as proxies for real-world. %\citep{TRIMBORN2018613,li2022mean,aggarwal2024mean,10496451,app14093538,Kuang_Huang_2020,dynamic_traffic,nips2022mao,7300401,10093067,YOSHIOKA202488,10508221,doi:10.1142/S0219198924400085,10508811,10143734,9894296,10.1145/3576914.3587709,9299742}
Examples include: 
\begin{itemize}
    %\item financial/energy markets, ticket pricing and the green economy \citep{TRIMBORN2018613,FOGUENTCHUENDOM2024111878,becherer2024common,zhang2024mean,chen2024deciding,chen2024periodic,BERNASCONI2023101974,cecchin2025weak,bo2025mean,fu2025mean,wang2025primal, moll2025mean,tchuendom2025ranking,WANG2025112260,li_diffusion_2025,aydin2025fare,math13050691,aksamit2025switching,feng2025decentralized,he2025hybrid};
    \item financial markets \citep{TRIMBORN2018613,BERNASCONI2023101974,FOGUENTCHUENDOM2024111878,chen2024periodic,becherer2024common,zhang2024mean,chen2024deciding,cecchin2025weak,bo2025mean,fu2025mean,wang2025primal, moll2025mean,tchuendom2025ranking,WANG2025112260,li_diffusion_2025}; ticket pricing \citep{aydin2025fare}; the green economy \citep{math13050691,aksamit2025switching}; electricity markets \citep{feng2025decentralized,he2025hybrid};
    %\item autonomous vehicles, traffic signal control, ride-hailing platforms and electric vehicle charging \citep{Kuang_Huang_2020,dynamic_traffic,10.1145/3576914.3587709,10741829,10907167,mo2024game,pande2025generative,10456564,chen2024bridging,li2025repositioning};
    \item autonomous vehicles \citep{Kuang_Huang_2020,mo2024game,pande2025generative,10456564,chen2024bridging}; traffic signal control \citep{dynamic_traffic}; ride-hailing platforms \citep{li2025repositioning}; electric vehicle charging \citep{10.1145/3576914.3587709,10741829,10907167};
    %\item  cryptocurrency mining, edge computing, cloud resource management, smart grids, and other large-scale cyber-physical systems \citep{7300401,9894296,nips2022mao, 10093067,10186687,wang2024mean,wu2024hi,10636723, aggarwal2024mean,10496451, li2022mean,app14093538,aggarwal2025distributed, Kang2025,10928821,garcia2025mean}; 
    \item  cryptocurrency mining \citep{li2022mean,garcia2025mean}; edge computing \citep{aggarwal2024mean,10496451,app14093538,aggarwal2025distributed}; cloud resource management \citep{nips2022mao,Kang2025}; smart grids, and other large-scale cyber-physical systems \citep{7300401, 9894296,10093067,10186687,wang2024mean,wu2024hi,10636723,10928821}.
    %\item swarms, defence, communication networks and  data collection by UAVs \citep{9299742,10508221,doi:10.1142/S0219198924400085,10919071,10508811,10637582,10809790,Kang2025satellite,choutri2025backpressure,10689365,aerospace12040302};  
    \item swarms \citep{doi:10.1142/S0219198924400085,10919071}; defence \citep{aerospace12040302}; communication networks \citep{9299742,10508221,10637582,10809790,10689365,Kang2025satellite,choutri2025backpressure};  data collection by UAVs \citep{10508811}; 
    \item social network modelling \citep{Kang2025_Opinion_Evolution}; crowd modelling \citep{DU685643}; crowdsensing \citep{10143734};  
    \item pollution regulation \citep{del2024mean}; resource management in fisheries \citep{YOSHIOKA202488}; and  political governance \citep{dayanikli2025cooperation,chu2025mean}.
\end{itemize}

%\textcolor{red}{[\citet{hu2024mfoml} contains more links for one over root N]}

We argue that the fact that these studies have generally been modelling exercises rather than actually applied to real-world settings is at least in part due to classical algorithms being conceived in ways that may not be practical for real-world deployments (for example, they try to find solutions via analytical methods or oracles %/simulations of an infinite population
\citep{survey_learningMFGs})%\citep{scalable_deep,continuous_fictitious_play,one_that_may_sample,nips2022mao,Anahtarci2020QLearningIR,zaman2023oraclefree,survey_learningMFGs,doi:10.1137/090758477,doi:10.1137/120902987,Briceño_Arias,achdou2020mean,general_framework,NEURIPS2019_030e65da,flock,rl_stationary,angiuli2021unified,lauriere2021numerical,tembine2012meanfield,Cardaliaguet2017,geist2022concave,bonnans2021generalized,Mguni_Jennings_Munoz_de_Cote_2018})
. Though we do not claim to have finally solved this issue nor present real-world case studies ourselves, our work is motivated by the desire to remove some of the obstacles towards MFG algorithms that might be more realistic and practical.

For large, complex many-agent systems of physical decision makers deployed in the real world, such as swarm robotics or autonomous vehicle traffic, it may be infeasible to find MFG-NEs via analytical methods or oracles as they have been traditionally, such that learning must instead be conducted directly by an actual finite population in its deployed environment. In contrast to the restrictive assumptions of many previous methods, we argue that in such deployed scenarios desirable qualities for MFG algorithms include (these motivate the setting of the algorithms we present):

\begin{itemize}
    \item the ability to learn from an empirical distribution of $N$ agents (i.e. this distribution is generated only by the policies of the agents, rather than being updated or manipulated by the algorithm itself or an external oracle/simulator); 
    \item learning online from a single, non-episodic system run (also referred to in other works as a single sample path/trajectory \citep{zaman2023oraclefree,policy_mirror_independent}) - i.e. similar to the above, the population is not arbitrarily reset by an external controller, since it might be impractical to repeatedly reset the large deployed population; 
    \item  learning without reliance on a model of the system;
    \item  decentralisation;
    \item  fast practical convergence \citep{HUANG2025114057};
    \item  and robustness to unexpected failures of decentralised learners or changes in population size \citep{10100954}.
\end{itemize}

We give a detailed comparison with prior methods, which omit one or more of these desiderata, in Sec. \ref{related_work}. However we elaborate here on one quality in particular, namely the fact that almost all prior work relies on a centralised node to learn on behalf of all the agents. In this context `centralised' does not necessarily imply global observability of the whole population's actions - which would generally make computation infeasible given the complexity of the problem - but rather that learning is only conducted from the samples of a single representative agent, whose policy updates are assumed to be automatically pushed to the rest of the population by the central node \citep{Fornasier_Solombrino_2014,carmona2019linear,learningMFGs,doi:10.1073/pnas.1922204117,one_that_may_sample,survey_learningMFGs,angiuli2021unified,angiuli2023convergence,Anahtarci2020QLearningIR,zaman2023oraclefree,INOUE2023100217,policy_mirror_independent,lee2024mean,jeloka2025learning,yang2025discounted}. For this reason, whilst `centralised learning' is the term used in prior works, we generally refer to `central-agent learning' to reduce confusion. 

The use of a central learner in MFG algorithms naturally reflects the simplifying assumption of the framework, namely that since we are considering a limit distribution of symmetrical agents, we can consider a representative agent that interacts with this anonymous population. In central-agent algorithms, often the empirical mean field of the actual population is not even used to compute rewards or transitions, with the central learner instead updating an estimate of the mean field based only on its own policy, which is in turn used as input to its reward and transition functions \citep{carmona2019linear,angiuli2021unified,angiuli2023convergence}. 

However, recent works on MFGs, as in other areas of the multi-agent systems community, have recognised that the existence of a central coordinator is a very strong assumption in complex, real-world settings even without global observability, and one that can both restrict scalability by constituting a bottleneck for computation and communication, and reveal a vulnerable single point of failure for the whole system \citep{double,mainone,decentralised_review,marlreview,distributed_review,jiang2024fully,xu2025targets,AGYEMAN2025106183,10971233, policy_mirror_independent}. For example, if the single server coordinating all of a smart city's autonomous vehicles were to crash, the entire road network would cease to operate. 

As an alternative, some recent works have explored MFG algorithms involving the $N$ individual agents in the empirical population learning policies for themselves without relying on a central node \citep{parise2015network,7331083,7402908,7368131,Mguni_Jennings_Munoz_de_Cote_2018,9992399,subjective_equilibria,policy_mirror_independent,li2024incomplete,he2025hybrid}. However, those works do not meet one or more of our other desiderata for deployed algorithms: they generally focus on existence proofs for equilibria or theoretical sample guarantees, instead of practical convergence speed, and have largely not considered robustness in the senses we address, despite fault-tolerance being an original motivation behind many-agent systems - we compare with these other works more fully in the related work in Sec. \ref{related_work}.

\begin{figure*}[t]
    \centering
    \begin{subfigure}{0.329\textwidth}
        \centering
        \includegraphics[width=\linewidth]{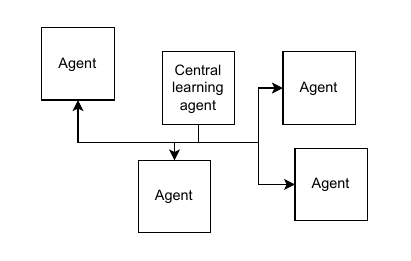}
        \caption{Central-agent.}
    \end{subfigure}
    \hfill
    \begin{subfigure}{0.329\textwidth}
        \centering
        \includegraphics[width=\linewidth]{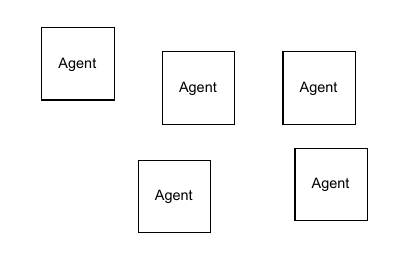}
        \caption{Independent.}
    \end{subfigure}
    \hfill
    \begin{subfigure}{0.329\textwidth}
        \centering
        \includegraphics[width=\linewidth]{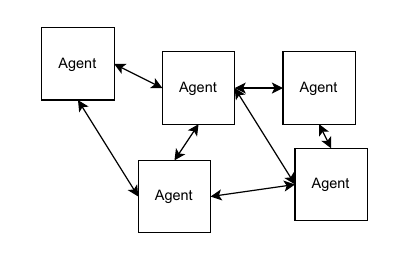}
        \caption{Decentralised network.}
    \end{subfigure}
    \caption{The three learning architectures for MFGs. The classical approach is the central-agent architecture, but this may be unrealistic in practice and presents a bottleneck and single point of failure. An independent learning architecture avoids these downsides, but has much worse theoretical sample guarantees and empirical learning speed due to bias caused by the divergence between independent agents' policies. We propose the decentralised networked architecture, which brings benefits over the other architectures in terms of both learning speed and robustness.}
    \label{Architectures_diagram}
\end{figure*}

Our main contribution is that we introduce decentralised, networked communication to the mean-field framework, allowing us to address all of our desiderata. Communication networks have had success in other multi-agent settings, removing the reliance on inflexible, centralised structures \citep{distributed_consensus,double,finite_consensus,comm_efficient_networked,finite_no_policy,off_policy,decentralised_review,marlreview,distributed_review}. We use the network in a scheme whereby agents can adopt policies communicated to them from neighbours. We show that as well as allowing us to learn without the assumption of access to centrally provided information, the communication network brings two important benefits. Firstly, populations using our networked communication scheme can learn faster than agents learning entirely independently. Secondly, the networked architecture affords robustness to unexpected failures of decentralised learners and changes in population size.
\iffalse
\begin{itemize}
    \item Populations using our networked communication scheme can learn faster than agents learning entirely independently.
    \item The networked architecture affords robustness to unexpected failures of decentralised learners and changes in population size.
\end{itemize}
\fi
It may appear counter-intuitive to offer a policy communication algorithm in a setting that is non-cooperative. We offer a number of responses to pre-empt such concerns:

\begin{itemize}
    \item As already mentioned, most methods for solving MFGs involve a central learner pushing its policy to the rest of the population. This is also a type of communication, and there is no reason selfish agents should want to accept identical policies from a central node any more than they should want to selectively adopt policies communicated by neighbours.
    \item Prior work by \citet{policy_mirror_independent} compared the theoretical sample guarantees of central-agent and independent-learning algorithms for solving stationary MFGs which satisfy more of our desired characteristics than any other works, in that they learn online from the empirical distribution of the finite population from a non-episodic run of the empirical system. This presents an excellent opportunity for us to compare the sample guarantees of our networked algorithm, of which these two alternative architectures can be seen as special cases, within this existing theoretical context. 
    \item In our experiments we focus on \textit{coordination games}, where agents can increase their individual rewards by following the same strategy as others and therefore inherently have an incentive to communicate policies, even if the MFG setting itself is technically non-cooperative. Thus our work can be applied to real-world problems in e.g. traffic signal control, formation control in swarm robotics, and consensus and synchronisation e.g. for sensor networks \citep{10757965}. Nevertheless we find no need to make a distinction in our theoretical analysis, which holds across all types of non-cooperative MFG.
\end{itemize}

{We also pre-empt objections that communication with neighbours might violate the anonymity that is characteristic of the mean-field paradigm, by emphasising that the communication in our algorithm takes place outside of the ongoing learning-and-updating parts of each iteration. Thus the core learning assumptions of the mean-field paradigm are unaffected, as they essentially apply at a different level of abstraction (a convenient approximation) to the reality we face of $N$ agents that interact within the same environment. Indeed, prior works have combined networks with mean-field theory in different ways, such as using a mean field to describe adaptive dynamical networks \citep{berner2023adaptive}.
} 
%REPRESENTATIVE AGENT: \citep{yang2025discounted}

In the idealised theoretical setting, we prove that our networked algorithm's theoretical sample guarantees lie between those of existing baseline centralised and independent algorithms. However we show empirically that these theoretical algorithms, although affording comparison with baseline sample guarantees, were not actually able to learn in practical time, and so we extend all three algorithms with experience replay buffers in order to compare the architectures experimentally. In our setting of learning from a continued, non-episodic run of the system, in which \citet{policy_mirror_independent} and \citet{subjective_equilibria} are the mostly closely related works, our experience replay buffer is a novel contribution. Much of the theoretical analysis in \citet{policy_mirror_independent} centres on methods to ensure the independence of samples that are collected along this continued system trajectory and used once before being discarded. This makes the inclusion of a buffer that is cycled through repeatedly not obvious a priori.

We show empirically that when the agents' Q-functions can be only roughly estimated due to fewer samples/updates, possibly leading to high variance in policy updates, then using the communication network to propagate better-performing policies through the population leads to faster learning than that achieved by agents learning entirely independently, which still hardly appear to learn at all. This is crucial in large complex environments that may be encountered in real applications, where the idealised hyperparameter choices (such as learning rates and numbers of iterations) required for the theoretical convergence guarantees will be infeasible in practice. As well as demonstrating our scheme's empirical benefits for learning speed, we conduct additional studies showing the advantages of communication for system robustness. In summary, our contributions include the following:

\begin{itemize}
    \item We prove that a theoretical version of our networked algorithm (Alg. \ref{networked_algorithm}) has sample guarantees bounded between those of central-agent and independent algorithms for learning with a non-episodic run of the empirical system. We provide the order of the difference in these bounds in terms of network structure and number of communication rounds, and contribute a policy-update stability guarantee (Sec. \ref{properties}).
    
    \item We show experimentally that all three theoretical algorithms do not seem to learn at all in any practical time (Sec. \ref{ablation_buffer_experiment_section}). We therefore modify all three (Alg. \ref{networked_algorithm_experience_replay}, Sec. \ref{practical_enhancements}) to make learning feasible by including an experience replay buffer, allowing us to {contribute the first empirical demonstrations of learning in all three architectures}. %An ablation study of the replay buffer is given in Sec. \ref{ablation_buffer_experiment_section} - agents do not seem to learn at all without it.

    \item Our experiments demonstrate that in practical settings our communication scheme can markedly benefit learning speed over the independent case, sometimes performing similarly to the centralised case while removing the restrictive assumption of the latter. We also show that via our practical modifications we can learn without enforcing several of the algorithms' other theoretical assumptions (a goal shared by other works on practical MFG algorithms \citep{cui2023learning}) (Sec. \ref{discussion}).

    \item  We provide ablations and additional empirical studies showing that our decentralised communication architecture brings {further benefits over both the central-agent and independent alternatives in terms of robustness to unexpected update failures and changes in population size}. For further discussion of the relevance of these scenarios in large multi-agent systems, see Sec. \ref{robustness_experiments}.
\end{itemize}

\begin{figure*}[t]
    \centering
    \includegraphics[width=\textwidth]{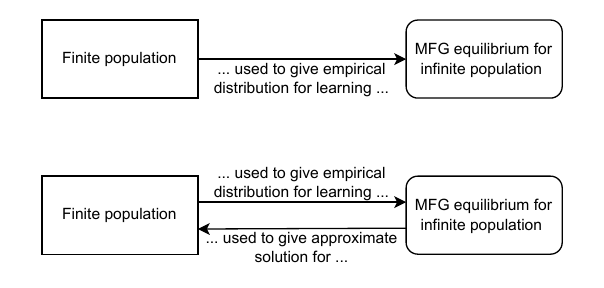}
    \caption{Two possible ways to conceive of our work regarding the relationship between the  infinite- and finite-population games. %(this is not same as trying to solve original problem by finding exact tuple that maximises)
    Note that using the finite empirical population to try to learn a single MFG-NE policy $\boldsymbol\pi=(\pi^*,\dots,\pi^*)$ that is to be followed by the whole infinite population (Def. \ref{Nash_equilibrium_of_stationary_MFG}) is \textit{not} the same as directly finding $\boldsymbol{\pi}^* = (\pi^1,\dots,\pi^N)$, i.e. the tuple of \textit{individual} policies that gives the finite-population NE in Def. \ref{delta_NE}, a problem known to be hard \citep{nash_complexity,Vinyals2019GrandmasterLI,NEURIPS2020_e9bcd1b0,SHAVANDI2022118124,yardim2024exploiting,lirevisiting}.}
    \label{conceptions_diagram}
\end{figure*}

The paper structure is as follows: we give further related work in Sec. \ref{related_work} and preliminaries in Sec. \ref{preliminaries}. We present our theoretical algorithms in Sec. \ref{algorithm_section} and theoretical results in Sec. \ref{properties}. We give enhancements to the algorithms necessary for learning in practical time in Sec. \ref{practical_enhancements}, and provide experiments and discussion in Sec. \ref{experiments}. We conclude by discussing limitations and suggestions for future work in Sec. \ref{future_work}.

\begin{remark}\label{two_conceptions}
Solving the theoretical MFG involves finding the single policy that, when given to all agents in the infinite population, best responds to the resulting mean-field distribution. We %pre-empt objections to our use of a finite population for learning in a mean-field context by giving 
give two ways to conceive of our work (illustrated in Fig. \ref{conceptions_diagram}), which mirror and make more explicit the similar motivations underpinning many other MFG works \citep{cui2023learningdecentralisedcontrol,dayanikli2024deep,zaman2024robust,bayraktar2024learning,subjective_equilibria,10928821,jeloka2025learning,cecchin2025weak,bo2025mean,tchuendom2025ranking,magninolearning,graber2025trembling,aggarwal2025distributed,yardim2025variational,hofer2025markov,feng2025decentralized,si2025decentralized,10928821}. Firstly, we contribute algorithms that allow the solution to a MFG to be learnt using the empirical distribution of a decentralised finite population, without needing to make unrealistic assumptions about access to an oracle for the infinite population. Note that it is unnecessary and possibly impractical to assume that the decentralised agents always follow a single identical policy throughout training, a logic also followed by earlier works \citep{policy_mirror_independent}. 

Alternatively, we may have originally been interested in finding a NE for a large, finite population, but, due to the scalability issues of learning approaches like MARL, forced to turn to the MFG framework to find a policy that gives an approximate solution to the finite-population problem. We contribute algorithms that allow the deployed finite population to find the MFG solution that in turn approximately solves the original problem, without unrealistic assumptions about centralised training. Under this framing, it may matter less whether all agents follow a single policy throughout training.
\end{remark}

\section{Related work}\label{related_work}

In Sec. \ref{introduction} we gave several qualities that we argue are desirable for mean-field algorithms if they are eventually to be more applicable to complex, deployed scenarios in the real-world. Conversely, works on MFGs have traditionally been largely theoretical \citep{huangMFG,lasry2007mean} (often works do not present any empirical results \citep{policy_mirror_independent,li2025effect,huang2025well,ferreira2025solving,lascu2025non}), and methods for finding equilibria have often relied on assumptions that are too strong for real-world applications. The MFG-NE is classically found by solving a coupled system of dynamical equations: a forward evolution equation for the mean-field distribution, and a backwards equation for the representative agent's optimal response to the mean field, as in Def. \ref{Nash_equilibrium_of_stationary_MFG} below \citep{YOSHIOKA202488,wang2024variational,li2024incomplete,10637582,CHEN2024409,10571600,si2024backward,federico2024mean,lee2024mean,10928821,aerospace12040302,cecchin2025weak,Sun14042025,dayanikli2025cooperation,wang2025linear,yang2025gaussian,ersland2025long,ghosh2025mean,aydin2025fare,yang2025discounted,pande2025generative,cao2025probabilistic,osborne2025rates,Martinez_Garcia_2025,opper2025mean,carlini2025semi,plank2025policy,hua2025extended,moll2025mean,ferreira2025solving,Progressive_Maximum_2025,tchuendom2025ranking,DEY2025122432,fedorov_boundary_2025,10907167,WANG2025112260,LI2026129830,xiang2025robust,li_diffusion_2025,ghosh2025federated,xu2025linear,si2025decentralized}; crucially, these methods generally rely on the assumption of an infinite population \citep{survey_learningMFGs}. Early work solved the coupled equations using numerical methods that did not scale well for more complex state and action spaces \citep{doi:10.1137/090758477,doi:10.1137/120902987,Briceño_Arias,achdou2020mean}; or, even if they could handle higher-dimensional problems, the methods were based on known models of the environment's dynamics (i.e. they were model-based) \citep{NEURIPS2019_030e65da, 10.3389/fams.2020.00011, cao2021connecting, carmona2021deep,germain2022numerical,Anahtarci2020QLearningIR,huang2024modelbased,pmlr_v238_huang24a,barreiro2025optimal}, and/or computed a best-response to the mean-field distribution \citep{huangMFG, NEURIPS2019_030e65da,Elie_Prolat_Laurire_Geist_Pietquin_2020,continuous_fictitious_play,flock,survey_learningMFGs,scalable_deep,algumaei2023regularization}. The latter approach is both computationally inefficient in non-trivial settings \citep{survey_learningMFGs,policy_mirror_independent}, and in many cases is not convergent (as in general it does not induce a contractive operator) \citep{approximately_entropy,scalable_deep}. Subsequent work, including our own, has therefore moved towards model-free and/or policy-improvement scenarios \citep{rl_stationary,9304340,Cacace,perolat2021scaling,pmlr_v130_lee21b,survey_learningMFGs,angiuli2021unified,10093067,general_framework}, possibly with learning taking place by observing \textit{N}-agent \textit{empirical} population distributions \citep{subjective_equilibria,policy_mirror_independent,hu2024mfoml}. 

Most prior works, including algorithms designed to solve MFGs using an \textit{N}-agent empirical distribution, have also assumed an oracle that can generate samples of the game dynamics (for any distribution) to be provided to the learning agent \citep{Anahtarci2019FittedQI,fu2019actor,NEURIPS2019_030e65da,general_framework,Anahtarci2020QLearningIR}, or otherwise that the algorithm (rather than agents' policies) has direct control over the population distribution at each time step \citep{10.5555/3545946.3598748,chen2024sampled,zhang2024stochastic}, such as cases where the agents' policies and distribution are updated on different timescales \citep{angiuli2023convergence,zeng2024single}, with the \textit{fictitious play} method being particularly popular \citep{tembine2012meanfield,Cardaliaguet2017,Mguni_Jennings_Munoz_de_Cote_2018,rl_stationary,continuous_fictitious_play,flock,one_that_may_sample,geist2022concave,bonnans2021generalized,lauriere2021numerical,angiuli2021unified,nips2022mao,scalable_deep,zaman2023oraclefree,cui2023learning,yu2024convergence}. In practice, many-agent problems may not admit such arbitrary generation or manipulation (for example, in the context of robotics or controlling vehicle traffic), and so a desirable quality of learning algorithms is that they update only the agents' policies, rather than being able to arbitrarily reset their states. Learning may thus also need to leverage continuing, rather than episodic, tasks \citep{sutton2018reinforcement}. \citet{policy_mirror_independent}, \citet{subjective_equilibria} and our own work therefore present algorithms that seek the MFG-NE using only a single run of the empirical population. 

% In our introduction in Sec. \ref{ch4_tabular_introduction} we place our work in the general context of the MFG algorithms that preceded it. We now discuss research specifically relevant to our own work in more detail. 

Naturally, decentralised communication is most applicable in settings where learning takes place along a continuing system run, rather than the distribution being manipulated by an oracle or arbitrarily reset for new episodes, since these imply a level of external control over the population that results in centralised learning. Equally, it is in situations of learning from finite numbers of real, deployed agents (rather than settings able to simulate infinite populations) that we are most likely to be concerned with fault tolerance. Networked communication therefore naturally fits within our desired qualities for mean-field algorithms, and our focus on this setting means that our work is most closely related to \citet{policy_mirror_independent} and \citet{subjective_equilibria}, which provide algorithms for centralised and independent learning with empirical distributions along non-episodic system runs. We contribute a networked learning algorithm in this setting. 
\citet{subjective_equilibria} empirically demonstrates an independent learning algorithm when agents observe compressed information about the mean-field distribution as well as their local state, but they do not compare this to any other algorithms or baselines. \citet{policy_mirror_independent} compares algorithms for centralised and independent learning theoretically, but does not provide empirical demonstrations. In contrast, in addition to providing theoretical guarantees, we empirically demonstrate our networked learning algorithm, where agents observe only their local state, in comparison to both centralised and independent baselines, as well as concerning ourselves with the speed of practical convergence and robustness, unlike these works. 

More generally, a number of works refer to `decentralisation' in MFGs, but often in a different sense to our understanding of it. In particular, many works that say they consider decentralisation actually learn/derive policies via a centralised method (often involving a representative player), and simply mean that agents' policies are \textit{executed} independently based on local information, which we take as a given across our learning architectures \citep{WANG2025112260,choutri2025backpressure,xiang2025robust,feng2025decentralized,si2025decentralized,si2025general}. \citet{he2025hybrid} use RL to solve a two-level mean-field problem, where there is a MFG between `aggregators', each of which is solving a local mean-field control (MFC) problem (the cooperative alternative to a MFG). They solve the MFG via decentralised learning by the $N$ aggregators, but each aggregator solves its MFC problem in a centralised manner via the assumption of a single agent that is representative of the heterogenous population. Moreover, they prove the existence of and convergence to a unique equilibrium, but do not provide sample guarantees or a convergence rate, as we do in Sec. \ref{properties}. Other works involve decentralisation in learning but under different MFG settings to our own: \citet{li_diffusion_2025,ghosh2025federated,xu2025linear} derive controls in a decentralised way, but rely on a model of the environment, while \citet{yardim2025variational} uses independent learning but not via RL, as they focus on repeated play of static, stateless games. 

%\citet{WANG2025112260} learns a set of `decentralised' policies that are in fact just individual policies but require centralised training for a representative player before then constructing the decentralised controls.
%\citet{choutri2025backpressure,xiang2025robust, si2025decentralized} refers to decentralisation in decision making, i.e. controls are based on local information but are learnt in a centralised way. 
%\citet{xiang2025robust} refers to policies as being decentralised but meaning independent, which are actually generated with respect to a centralised leader.
%\citet{feng2025decentralized} agents' beliefs about the population determines the action they choose, and they independently update these beliefs, rather than learning a policy in a decentralised way. 
%In \citet{si2025general}, the strategies they describe as decentralised are executed in such a way that each agent uses their local information plus `common  noise' - which is assumed to be globally observable - to compute their control. The strategies are constructed centrally with full access to the model.

Improving the training speed and sample efficiency of (deep) (multi-agent) RL is gaining increasing attention \citep{wiggins2023characterizing,yu2024cheaper, 10502122,patel2024global}, though our own work is one of the only on MFGs to be concerned with this. \citet{huang2024unsupervised} trains on a distribution of MFG configurations to speed up inference on unseen problems, but does not learn online in a decentralised manner as in our own work. %also, desire to reduce mixing times but this has theoretical basis I think \citep{patel2024global}]}
 Similarly, while some attention has been given to the robustness of multi-agent systems to changes in population size, where it is sometimes referred to as `ad-hoc teaming', `open-agent systems', `scalability' or `generalisation' \citep{10.1002/aaai.12131}, it has more commonly been addressed in MARL \citep{dawood2023safe,10.1007/978-981-97-1087-4_9} than in MFGs \citep{wu2024populationaware}. \citet{wu2024populationaware} presents an MFG approach that allows new agents to join the population during \textit{execution}, but training itself takes place offline in a centralised, episodic manner. Our networked communication framework, on the other hand, allows decentralised agents to join the empirical population during online learning and to have minimal impact on the learning process by adopting policies from existing members of the population through communication (Sec. \ref{robustness_experiments}).  

%\textcolor{red}{[concern with varying numbers of agents during training in conventional deep MARL (increasing and decreasing) \citep{10.1007/978-981-97-1087-4_9}]} \textcolor{red}{[\citep{,wu2024population}["additional agent groups join the existing team during execution, resembling spontaneous and temporary formations without centralized control or \textbf{predefined} network topology"]- the other methods depending on population would allow to join population without communication but they assume that a policy has been learnt that can be given to all new ones in a centralised way ie assumes learnt policy is fine but not how to get it to new agents.]}

An existing area of work called \textit{robust mean-field games} studies the robustness of these games to uncertainty in the transition and reward functions \citep{BAUSO2012454,bauso2016robust,7300401,7488259,doi:10.1137/15M1014437,8270667,9867331,aydin2023robustness}, but does not consider resilience to agent update failures, despite fault tolerance being one of the original motivations behind many-agent systems. On the other hand, we focus on robustness to failures and changes in the agent population itself.

For the sake of defining the scope of terms, we do not consider what we refer to as the `mean-field framework' to encompass the related but distinct research area called \textit{mean-field RL} \citep{mfrl_yang,pomfrl,subramanian2022multi}. While drawing inspiration from similar sources, mean-field RL falls outside of the game-theoretic MFG paradigm. It is instead a type of MARL, and considers a mean over actions (originally by averaging pairwise interactions between agents \citep{mfrl_yang}) rather than a distribution over states, as in our case. This generally leads to lag and a chicken-and-egg problem, whereby agents respond to the other agents' previous actions, rather than their current states. The existence of this distinct area can be a source of confusion in nomenclature: while we use RL as a model-free learning approach in MFGs, this is not the same as doing mean-field RL, and as we discuss above MFGs can be, and classically were, solved without RL. Some works have considered similar features to those we are interested in, such as decentralisation and estimation from local neighbourhoods, but in this distinct area of mean-field RL \citep{pomfrl}. We draw attention in particular to the work by \citet{DMFG}, who develop a framework they refer to as `Decentralized Mean Field Games [sic]' but which they also emphasise is distinct from both MFGs and mean-field RL, and which removes the symmetry and anonymity of agents (agents may have different reward functions, and agent indices are retained). Therefore, despite its name, this setting is different from our own.

We note a similarity between 1. our method for deciding which policies to propagate through the population (described in Sec. \ref{practical_generation}) and 2. the computation of evaluation/fitness functions within evolutionary algorithms to indicate which solutions are desirable to keep in the population for the next generation \citep{Eiben2015,sissodia2025evolutionary}. Moreover, the research avenue broadly referred to as \textit{distributed embodied evolution} involves swarms of agents independently running evolutionary algorithms while operating within a physical/simulated environment and communicating behaviour parameters to neighbours \citep{haasdijk2014combining,10.1145/2739482.2768490}, and is therefore even more similar to our setting, where decentralised RL updates are computed locally and then shared with neighbours. In distributed embodied evolution, the computed fitness of solutions helps determine both which are preserved by agents during local updates, and also which are chosen for broadcast or adoption between neighbours \citep{survivability,Maintaining_Diversity,10.1007/978-3-030-16692-2_38}. Indeed, some works on distributed embodied evolution specifically consider features or rewards relating to the joint behaviour of the whole population \citep{10.1145/2463372.2463398,prieto2016real}, similar to MFGs. The adjacent research area of cultural/language evolution for swarm robotics \citep{Cultural_Propagation,10.3389/frobt.2020.00012,CAMBIER2021108010} has similarly demonstrated the combination of evolutionary approaches and multi-agent communication networks for self-organised behaviours in swarms. However, unlike our own work, none of these areas employ RL in the update of policies or the computation of the fitness functions.

Our work also shares parallels with \textit{population-based training} \citep{jaderberg2017population}, an approach that is likewise related to evolutionary algorithms. Population-based training involves optimising neural networks by performance-based transfer of parameters and hyperparameters among a population of concurrent processes. Our algorithms are tabular\footnote{This is because we give theoretical comparisons of our networked architecture with the previous SOTA algorithms for the central-agent and independent architectures \citep{policy_mirror_independent}, which were also tabular. Nevertheless this potentially limits scalability, and we discuss the extension to non-tabular algorithms in the future work in Sec. \ref{future_work}.} rather than neural network-based, and we are also interested in the interactive behaviour of the population itself rather than simply using it for parallelising the optimisation.

%\textcolor{red}{[\citep{zhang2024ppsqmix} uses reward gained during exploration to give a weight when averaging Q networks ie more reward is more weight, but that is multi-agent not mean field and experiments only with 6/10 agents]}

%\textcolor{red}{[PPO followed by some evolution to explore state space more fine grained - its a possible idea for me but might not add much \citep{10485410}]}

\section{Preliminaries}\label{preliminaries}

We use the following notation. $N$ is the number of agents in a population, with $\mathcal{S}$ and $\mathcal{A}$ representing the finite state and common action spaces, respectively. {The sets $\mathcal{S}$ and $\mathcal{A}$ are equipped with the discrete metric $d(x,y) = {\mathds{1}}_{x \neq y}$}. The set of probability measures on a finite set $\mathcal{X}$ is denoted $\Delta_\mathcal{X}$, and $\mathbf{e}_x \in \Delta_\mathcal{X}$ for $x \in \mathcal{X}$ is a one-hot vector with only the entry corresponding to $x$ set to 1, and all others set to 0. For time $t \geq 0$, $\hat{\mu}_t$ = $\frac{1}{N}\sum^N_{i=1}\sum_{s\in\mathcal{S}}$ $\mathds{1}_{s^i_t=s}\mathbf{e}_s$ $\in$ $\Delta_\mathcal{S}$ is a vector of length $|\mathcal{S}|$ denoting the empirical categorical state distribution of the $N$ agents at time $t$. The set of policies is $\Pi$ = \{$\pi$ : $\mathcal{S} \rightarrow \Delta_\mathcal{A}$\}, and the set of Q-functions is denoted $\mathcal{Q} = \{q : \mathcal{S} \times \mathcal{A} \rightarrow \mathbb{R}\}$.   
%\textcolor{red}{[question why do we need randomised (non-det. in general) policies?]}\textcolor{blue}{[allows us to regularise?]}\textcolor{red}{[this seems more like a benefit related to randomised policies; but are they needed in the context of general, mean-field games? Technically: in def 2.5, under optimality, is the max necessarily a randomised policy? (or are deterministic ones sufficient?)]}
For $\pi,\pi' \in \Pi$ and $q,q' \in \mathcal{Q}$, we have the norms $||\pi - \pi'||_1$ := $\sup_{s \in \mathcal{S}}||\pi(s) - \pi'(s)||_1$ and $||q - q'||_{\infty}$ := $\sup_{s\in\mathcal{S},a\in\mathcal{A}}|q(s,a) - q'(s,a)|$. 

Function $h : \Delta_{\mathcal{A}} \rightarrow \mathbb{R}_{\geq0}$ denotes a strongly concave function, which we implement in our experiments as the scaled entropy regulariser $\lambda h_{ent}(u) = - \lambda\sum_a u(a)\log u(a)$, for $a \in \mathcal{A}$, $u \in \Delta_\mathcal{A}$ and $\lambda > 0$. As in many earlier works \citep{approximately_entropy,10.1287/moor.2021.1238,Anahtarci2020QLearningIR,algumaei2023regularization,yu2023time,policy_mirror_independent,yardim2025variational,lu2025convergence,ferreira2025solving,lascu2025non,he2025hybrid}, regularisation is theoretically required to ensure the contractivity of operators and continued exploration, and hence algorithmic convergence. However, it has been recognised that modifying the RL objective in this way can bias the NE \citep{scalable_deep,pmlr-v162-su22b,policy_mirror_independent, hu2024mfoml,lu2025convergence}. We show in our experiments that we are able to reduce $\lambda$ to 0 with no detriment to convergence.

We now define, for $h_{\mathrm{max}} > 0$ and $h : \Delta_{\mathcal{A}} \rightarrow [0,h_{\mathrm{max}}]$, $u_{\mathrm{max}} \in \Delta_{\mathcal{A}}$ such that $h(u_{\mathrm{max}}) = h_{\mathrm{max}}$. We further define $Q_{\mathrm{max}} := \frac{1 + h_{\mathrm{max}}}{1 - \gamma}$, and set $\pi_{\mathrm{max}} \in \Pi$ such that $\pi_{\mathrm{max}}(s) = u_{\mathrm{max}}, \forall s \in \mathcal{S}$. For any $\Delta h \in \mathbb{R}_{>0}$, we also define the convex set $\mathcal{U}_{\Delta h}$ := \{$u \in \Delta_{\mathcal{A}} : h(u) \geq h_{\mathrm{max}} - \Delta h\}$. 
%\[\mathcal{U}_{\Delta h} := \{u \in \Delta_{\mathcal{A}} : h(u) \geq h_{\mathrm{max}} - \Delta h\}. \]

Having given this notation, we now formalise the symmetric anonymous game involving $N$ agents.

\begin{definition}[$N$-player symmetric anonymous games] 
An N-player stochastic game with symmetric, anonymous agents is given by the tuple $\langle$$N$, $\mathcal{S}$, $\mathcal{A}$, $P$, $R$, $\gamma$$\rangle$, where $\mathcal{A}$ is the action space, identical  for each agent; $\mathcal{S}$ is the identical  state space of each agent, such that their  initial states are \{$s^i_0$\}$_{i=1}^N \in \mathcal{S}^N$ and their policies are \{$\pi^i$\}$_{i=1}^N \in \Pi^N$. $P$ : $\mathcal{S}$ $\times$ $\mathcal{A}$ $\times$ $\Delta_{\mathcal{S}}$ $\rightarrow$ $\Delta_{\mathcal{S}}$ is the transition function and $R$ : $\mathcal{S}$ $\times$ $\mathcal{A}$ $\times$ $\Delta_{\mathcal{S}}$ $\rightarrow$ [0,1] is the reward function,  which map each agent's local state and action and the population's empirical distribution to transition probabilities and bounded rewards, respectively, i.e. $\forall i \in \{1,\dots,N\}$\[
s^i_{t+1} \sim P(\cdot|s^i_{t},a^i_{t},\hat{\mu}_t) \;\;\;\; and \;\;\;\; r^i_{t} = R(s^i_{t},a^i_{t},\hat{\mu}_t).\]\end{definition}
The policy of an agent is given by $a^i_t\sim \pi^i(s^i_t)$, that is, each agent only observes its own state, and not the joint state or empirical distribution of the population.  

We now formalise the expected discounted returns of each agent in an $N$-player symmetric anonymous game.

\begin{definition}[$N$-player discounted regularised return]\label{psi_definition}
With joint policies $\boldsymbol\pi$ := ($\pi^1,\dots,\pi^N$) $\in \Pi^N$, initial states sampled from a distribution $\upsilon_0 \in \Delta_\mathcal{S}$ and $\gamma$ $\in$ [0,1) as a discount factor, the expected discounted regularised returns of each agent $i$ in the symmetric anonymous game are given by, $\forall i,j \in \{1,\dots,N\}$,  
\begin{align*} 
%\[\resizebox{1\linewidth}{!}{$
\Psi^i_h(\boldsymbol\pi,\upsilon_0) =  \mathbb{E}\left[\sum^{\infty}_{t=0}\gamma^{t}(R(s^i_t,a^i_t,\hat{\mu}_t) + %\right.\\ &\left. 
h(\pi^i(s^i_t)))\bigg|\substack{s^j_0\sim\upsilon_0 \\ a^j_t\sim \pi^j(s^j_t)\\ s^j_{t+1} \sim P(\cdot|s^j_{t},a^j_{t},\hat{\mu}_t)}\right].  %$}\]
\end{align*}
\end{definition}

This allows us to formalise the solution concept for the competitive $N$-player symmetric anonymous game, namely the (approximate) NE.

\begin{definition}[$\delta$-NE]\label{delta_NE}
Say $\delta > 0$ and $(\pi,\boldsymbol\pi^{-i})$ := $(\pi^1,\dots\pi^{i-1},\pi,\pi^{i+1},\dots,\pi^{N}) \in \Pi^N$. An initial distribution $\upsilon_0 \in \Delta_\mathcal{S}$ and an $N$-tuple of policies $\boldsymbol\pi$ := ($\pi^1,\dots,\pi^N$) $\in \Pi^N$ form a $\delta$-NE ($\boldsymbol\pi$, $\upsilon_0$) if %$\forall i$ = 1,$\dots, N$ 
\[\Psi^i_h(\boldsymbol\pi,\upsilon_0) \geq \max_{\pi \in \Pi}\Psi^i_h((\pi,\boldsymbol\pi^{-i}),\upsilon_0) - \delta \;\;\; \forall i \in \{1,\dots, N\}.  \] 
%\textcolor{red}{[above, max or sup? similar query for later. we should check a game theory book and inherit the answer from there.]}
\end{definition}
At the limit as $N \rightarrow \infty$, the population of infinitely many agents can be characterised as a limit distribution $\mu \in \Delta_\mathcal{S}$. We denote the expected discounted return of the representative agent in the infinite-agent game - termed a MFG - as $V$, rather than $\Psi$ as in the finite $N$-agent case. 
\begin{definition}[Mean-field discounted regularised return]\label{mean_field_return_def}
For a policy-population pair ($\pi, \mu$) $\in \Pi \times \Delta_\mathcal{S}$, 
\begin{align*}
V_h(\pi,\mu) = \mathbb{E}\left[
\sum^{\infty}_{t=0}\gamma^{t}(R(s_t,a_t,\mu)+h(\pi(s_t)))\bigg|
\substack{s_0\sim\mu\\ 
a_t\sim \pi(s_t)\\ s_{t+1} \sim P(\cdot|s_{t},a_{t},{\mu})}\right].
\end{align*}
\end{definition}
A stationary MFG is one that has a unique population distribution that is stable with respect to a given policy, and the agents' policies are not time- or population-dependent. We now introduce the solution concept of this stationary MFG.

\begin{definition}[NE of stationary MFG]\label{Nash_equilibrium_of_stationary_MFG}
For a policy $\pi^* \in \Pi$ and a population distribution $\mu^* \in \Delta_\mathcal{S}$, the pair ($\pi^*,\mu^*$) is a stationary MFG-NE if the following optimality and stability conditions hold: \begin{align*}
    & \textrm{optimality:} \quad V_h(\pi^*,\mu^*) = \max_\pi V_h(\pi,\mu^*), \\
     & \textrm{stability:} \quad \mu^* (s) = \sum_{s',a'}\mu^* (s')\pi^* (a'|s')P(s|s',a',\mu^*).
\end{align*} If the optimality condition is only satisfied with $V_h(\pi^*_\delta,\mu^*_\delta) \geq \max_\pi V_h(\pi,\mu^*_\delta) - \delta$, then ($\pi^*_\delta,\mu^*_\delta$) is a $\delta$-NE of the MFG, where $\mu^*_\delta$ is obtained from the stability equation and $\pi^*_\delta$.  
\end{definition}

%\[Optimality: V_h(\pi^*,\mu^*) = \max_\pi V_h(\pi,\mu^*), Stability: \mu^* (s) = \sum_{s',a'}\mu^* (s')\pi^* (a'|s')P(s|s',a',\mu^*).\]

The MFG-NE is an approximate NE of the finite $N$-player game, in which we may have originally been interested but which is difficult to solve in itself \citep{scalable_deep,policy_mirror_independent}: 
%We reiterate here the result from \cite[Thm. 4.1]{finite_proofs}.

\begin{proposition}[\textit{N}-player NE and MFG-NE (Thm. 1, \citep{Anahtarci2020QLearningIR})] 
If ($\pi^*,\mu^*$) is a MFG-NE, then, under certain Lipschitz conditions \citep{Anahtarci2020QLearningIR}, for any $\delta > 0$, there exists $N(\delta) \in \mathbb{N}_{>0}$ such that, for all $N \geq N(\delta)$, the joint policy $\boldsymbol\pi = \{\pi^*,\pi^*,\dots,\pi^*\} \in \Pi^N$ is a $\delta$-NE of the $N$-player %symmetric anonymous
game. 
\end{proposition}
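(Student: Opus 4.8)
The plan is to establish the $\delta$-NE inequality $\Psi^i_h(\boldsymbol\pi,\upsilon_0) \geq \max_{\pi\in\Pi}\Psi^i_h((\pi,\boldsymbol\pi^{-i}),\upsilon_0) - \delta$ by sandwiching the finite-agent returns around the mean-field return $V_h(\cdot,\mu^*)$ via two vanishing approximation errors, and then invoking the optimality half of the MFG-NE definition. By symmetry of the game and of the joint policy $\boldsymbol\pi = \{\pi^*,\dots,\pi^*\}$, it suffices to treat a single representative agent $i$; the chosen $N(\delta)$ then works for all $i$ simultaneously.

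First I would prove a concentration lemma: when $N-1$ agents play $\pi^*$ and agent $i$ plays an arbitrary $\pi \in \Pi$, the empirical distribution stays close to the invariant mean field, $\mathbb{E}\,\|\hat{\mu}_t - \mu^*\|_1 \leq C(N)$ for some $C(N) \to 0$. This step combines three ingredients: (i) the stability condition $\mu^*(s) = \sum_{s',a'}\mu^*(s')\pi^*(a'|s')P(s|s',a',\mu^*)$, which makes $\mu^*$ a fixed point of the conforming dynamics; (ii) a law-of-large-numbers / McDiarmid-type estimate, since the $N-1$ conforming agents are exchangeable and each contributes only $O(1/N)$ to $\hat\mu_t$; and (iii) the observation that the lone deviator also shifts $\hat\mu_t$ by at most $O(1/N)$. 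The assumed Lipschitz continuity of $P$ in its measure argument ensures the per-step discrepancy does not amplify as it is pushed forward through the coupled dynamics, and $\gamma \in [0,1)$ discounts contributions from later steps.

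Second, using this lemma with the Lipschitz continuity of $R$ (and $P$) in $\mu$, I would bound the gap between the finite- and infinite-agent returns uniformly in the deviation, $|\Psi^i_h((\pi,\boldsymbol\pi^{-i}),\upsilon_0) - V_h(\pi,\mu^*)| \leq \epsilon(N)$ with $\epsilon(N) \to 0$. The mechanism is to replace $\hat\mu_t$ by $\mu^*$ inside each reward $R(s_t,a_t,\hat\mu_t)$ and transition kernel, accumulating the resulting Lipschitz error geometrically over the discounted horizon; note that the regulariser terms $h(\pi(s_t))$ coincide in both returns for a fixed policy and so contribute nothing to this gap.

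Third, I apply the two ends of the bound. Setting $\pi = \pi^*$ gives $\Psi^i_h(\boldsymbol\pi,\upsilon_0) \geq V_h(\pi^*,\mu^*) - \epsilon(N)$, while for an arbitrary deviation $\pi$ the same bound together with optimality gives $\Psi^i_h((\pi,\boldsymbol\pi^{-i}),\upsilon_0) \leq V_h(\pi,\mu^*) + \epsilon(N) \leq V_h(\pi^*,\mu^*) + \epsilon(N)$, using $V_h(\pi^*,\mu^*) = \max_{\pi}V_h(\pi,\mu^*)$. Taking the supremum over $\pi$ and chaining yields
\[\max_{\pi\in\Pi}\Psi^i_h((\pi,\boldsymbol\pi^{-i}),\upsilon_0) \leq \Psi^i_h(\boldsymbol\pi,\upsilon_0) + 2\epsilon(N),\]
so it suffices to pick $N(\delta)$ with $2\epsilon(N) \leq \delta$ for all $N \geq N(\delta)$. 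The main obstacle is the concentration lemma over the infinite horizon: although a single deviation perturbs $\hat\mu_t$ by only $O(1/N)$ at each step, this perturbation feeds back through the mean-field coupling at every subsequent step, and the argument must show this feedback stays summable. This is exactly where the Lipschitz/contractivity hypotheses (effectively a condition such as $\gamma L_P < 1$) are indispensable, guaranteeing that the propagated error remains controlled under discounting so that $\epsilon(N)$ genuinely vanishes rather than accumulating without bound.
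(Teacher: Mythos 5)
The paper itself contains no proof of this proposition: it is imported verbatim as Theorem~1 of \citet{Anahtarci2020QLearningIR} and used as a black box (the ``certain Lipschitz conditions'' are precisely the assumptions of that reference), so there is no in-paper argument to compare yours against. Your reconstruction follows the standard route used in that cited literature --- concentration of the empirical distribution $\hat{\mu}_t$ around the stationary mean field $\mu^*$ when $N-1$ agents conform and one deviates, a Lipschitz transfer of that bound to the discounted returns uniformly over deviating policies, then a two-sided sandwich chained with the optimality condition of Definition~\ref{Nash_equilibrium_of_stationary_MFG} --- and the logic of your third step is exactly right; you also correctly locate the crux, namely keeping the deviation's feedback through the mean-field coupling summable, which is what conditions such as the stable-population assumption $L_{pop,\mu}<1$ (Assumption~\ref{stable_population}) buy. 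Two imprecisions are worth fixing. First, the claim that the regulariser terms ``contribute nothing to this gap'' is wrong as stated: the integrand $h(\pi(s_t))$ is the same function in both returns, but the law of $s_t$ differs between the $N$-agent system (transitions driven by $\hat{\mu}_t$) and the mean-field system (driven by $\mu^*$), so $\mathbb{E}\left[h(\pi(s_t))\right]$ does differ; since $h$ is bounded by $h_{\mathrm{max}}$, this difference is absorbed by the same state-coupling estimate and costs only a constant in $\epsilon(N)$, but it must be accounted for rather than dismissed. Second, your sandwich compares $\Psi^i_h(\cdot,\upsilon_0)$, whose initial states are drawn from $\upsilon_0$, with $V_h(\cdot,\mu^*)$, whose initial state is drawn from $\mu^*$ (Definition~\ref{mean_field_return_def}); the argument implicitly requires $\upsilon_0 = \mu^*$ (natural in the stationary setting, and in effect what the cited result assumes) or an additional transient term, and this should be stated explicitly.
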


\begin{remark}\label{1overrootN}
We can show that $\delta$ can be characterised further in terms of $N$, with ($\pi^*,\mu^*$) being an $\mathcal{O}$($\frac{1}{\sqrt{N}}$)-NE of the $N$-player symmetric anonymous game \citep{policy_mirror_independent,chen2024periodic,yardim2025variational}. 
\end{remark}

For our new, networked learning algorithm, we also introduce the concept of a time-varying communication network, where the links between agents that make up the network may change at each time step $t$. Most commonly we might think of such a network as depending on the spatial locations of decentralised agents, such as physical robots, which can communicate with neighbours that fall within a given broadcast radius. When the agents move in the environment, their neighbours and therefore communication links may change. However, the dynamic network is general to all settings, and can depend on other factors that may not depend on the agents' position in space or state $s^i_t$. For example, agents may be connected over long distances via satellites or the internet, and even a network of fixed-location agents can change depending on which agents are active and broadcasting at a given time $t$, or if their broadcast radius changes, perhaps in relation to signal or battery strength. 

\begin{definition}[Time-varying communication network] 
The time-varying communication network ($\mathcal{G}_t$)$_{t\geq0}$ is given by $\mathcal{G}_t$ = ($\mathcal{N}, \mathcal{E}_t$), where $\mathcal{N}$ is the set of vertices each representing an agent $i \in \{1,\dots,N\}$, and the edge set $\mathcal{E}_{t}$ $\subseteq$ \{(\textit{i},\textit{j}) : \textit{i},\textit{j} $\in$ $\mathcal{N}$, \textit{i} $\neq$ \textit{j}\} is the set of undirected communication links by which information can be shared at time \textit{t}. A network's \textit{diameter} $d_{\mathcal{G}_t}$ is the maximum of the shortest path length between any pair of nodes.
\end{definition}

A network is \textit{connected} if there is a sequence of distinct edges forming a path between each distinct pair of vertices. The \textit{union} of a collection of graphs \{$\mathcal{G}_t, \mathcal{G}_{t+1}, \cdots, \mathcal{G}_{t+\omega}\}$ ($\omega \in \mathbb{N}$) is the graph with vertices and edge set equalling the union of the vertices and edge sets of the graphs in the collection \citep{1205192}. A collection is \textit{jointly connected} if its members' union is connected. %When considering changes in population size, for simplicity of notation we presume these occur outside of communication rounds, but this is not required by our algorithm. If agents are both leaving and (re)joining the population, we assume vertices/agents are indexed uniquely, rather than strictly from $1,\dots,N$.

\subsection{Further technical conditions for algorithms and theorems% in Sec. \ref{properties}
}\label{Further_definitions_and_assumptions}

Our theoretical results, which compare our networked algorithm with the centralised and independent alternatives from \citet{policy_mirror_independent}, rely on several further definitions and assumptions from their work. We give these now to allow us to introduce our learning operator for our algorithm in Sec. \ref{algorithm_section}, in advance of the theoretical analysis in Sec. \ref{properties}. These formalisations lay the groundwork that allows the optimality and stability conditions, which define the MFG-NE in Def. \ref{Nash_equilibrium_of_stationary_MFG}, to hold.

%In our empirical experiments we do not verify that these assumptions hold, suggesting that our practical algorithms are able to learn and converge even outside these conditions.

%LIPSCHITZ ASSUMPTIONS: \citep{osborne2025rates,LI2026129830}

\subsubsection{Population update operators}

The following characterisations ensure that the evolution of the population is convergent for a given policy. 

Assumption \ref{lip_cont_assumpt} gives Lipschitz constants that provide smoothness conditions on the transition and reward functions $P$ and $R$ - this is a standard assumption in previous work \citep{policy_mirror_independent}. These in turn ensure that the population-evolution and policy-update operators below are smooth and hence contractive, guaranteeing convergence.
\begin{assumption}[Lipschitz continuity of $P$ and $R$%, from Assumption 1, \citet{policy_mirror_independent}
]\label{lip_cont_assumpt}
There exist constants $K_{\mu},K_{s},K_{a},L_{\mu},L_{s},L_{a} \in \mathbb{R}_{\geq 0}$ such that $\forall s,s' \in \mathcal{S},\forall a,a' \in \mathcal{A},\forall \mu,\mu' \in \Delta_{\mathcal{S}}$,
\[%\resizebox{1\linewidth}{!}{$
||P(\cdot|s,a,\mu) - P(\cdot|s',a',\mu')||_{1} \leq K_{\mu}||\mu - \mu'||_{1} + K_{s}d(s,s') + K_{a}d(a,a'),%$}
\]
\[%\resizebox{1\linewidth}{!}{$
|R(s,a,\mu) - R(s',a',\mu')| \leq L_{\mu}||\mu - \mu'||_{1} + L_{s}d(s,s') + L_{a}d(a,a'). %$}
\]
\end{assumption}

The following single-step operator tells us how the mean field evolves by one step when the whole population uses a certain policy, which allows us in turn to give the stable population operator as the fixed point of repeated updates. This is later plugged into the policy-improvement operator, allowing us to obtain the fixed-point consistency and hence the stationary MFG-NE.

\begin{definition}[Population update operator%, from Def. 3.1, \citet{policy_mirror_independent}
]
The single-step population update operator $\Gamma_{pop} : \Delta_{\mathcal{S}} \times \Pi \rightarrow \Delta_{\mathcal{S}}$ is defined as, $\forall s \in \mathcal{S}$: \[\Gamma_{pop}(\mu,\pi)(s) := \sum_{s'\in\mathcal{S}}\sum_{a'\in\mathcal{A}}\mu(s')\pi(a'|s')P(s|s',a',\mu).\]
We will use the short hand notation $\Gamma^n_{pop}(\mu,\pi) := \underbrace{\Gamma_{pop}(\dots\Gamma_{pop}(\Gamma_{pop}(\mu,\pi),\pi),\dots,\pi)}_{n\text{  times}}$. 
\end{definition}

We recall in the following lemma that $\Gamma_{pop}$ is known to be Lipschitz \citep{learningMFGs,Anahtarci2020QLearningIR}. By ensuring that the population updates are smooth, we can in turn ensure that they are contractive, giving a unique and stable steady population via Assumption \ref{stable_population} below. This in turn ensures convergence. 

\begin{lemma}[Lipschitz population updates%, from Lem. 3.2, \citet{policy_mirror_independent}
]
$\Gamma_{pop}$ is Lipschitz with \[%\resizebox{1\linewidth}{!}{$
||\Gamma_{pop}(\mu,\pi) - \Gamma_{pop}(\mu',\pi')||_{1} \leq L_{pop,\mu}||\mu - \mu'||_{1} + \frac{K_a}{2}||\pi -  \pi'||_1,%$}
\]

where $L_{pop,\mu}$ := $\left(\frac{K_s}{2} + \frac{K_a}{2} + K_{\mu}\right)$, $\forall \pi \in \Pi, \mu \in \Delta_{\mathcal{S}}$. 
\end{lemma}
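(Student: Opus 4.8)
The plan is to expand both occurrences of $\Gamma_{pop}$ with its definition and split the difference into a term isolating the dependence of the transition kernel $P$ on its population argument and a term isolating the change in the joint state–action ``input'' distribution $\mu\pi$ versus $\mu'\pi'$. Adding and subtracting $\sum_{s',a'}\mu(s')\pi(a'|s')P(s|s',a',\mu')$ inside the norm and using the triangle inequality gives
\begin{align*}
\|\Gamma_{pop}(\mu,\pi) - \Gamma_{pop}(\mu',\pi')\|_1 &\leq \sum_{s',a'}\mu(s')\pi(a'|s')\,\|P(\cdot|s',a',\mu) - P(\cdot|s',a',\mu')\|_1 \\
&\quad + \sum_s\Bigl|\sum_{s',a'}\bigl(\mu(s')\pi(a'|s') - \mu'(s')\pi'(a'|s')\bigr)P(s|s',a',\mu')\Bigr|.
\end{align*}
For the first term, Assumption \ref{lip_cont_assumpt} applied with identical state and action arguments (so the discrete-metric contributions vanish) bounds each summand by $K_\mu\|\mu-\mu'\|_1$, and since $\sum_{s',a'}\mu(s')\pi(a'|s') = 1$ the whole term is at most $K_\mu\|\mu-\mu'\|_1$.

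The heart of the argument is the second term, which I would control with a coupling argument. Writing $\rho := \mu\pi$ and $\rho' := \mu'\pi'$ as probability measures on $\mathcal{S}\times\mathcal{A}$, I introduce an arbitrary coupling $\kappa$ of $\rho$ and $\rho'$, so the term equals $\sum_s\bigl|\sum_{(s',a'),(s'',a'')}\kappa((s',a'),(s'',a''))\,(P(s|s',a',\mu') - P(s|s'',a'',\mu'))\bigr|$; pulling the absolute value inside the coupling sum and applying Assumption \ref{lip_cont_assumpt} (now with the common population $\mu'$, so only the state and action metrics survive) bounds it by $K_s\,\mathbb{E}_\kappa[d(s',s'')] + K_a\,\mathbb{E}_\kappa[d(a',a'')]$. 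It then remains to exhibit a single coupling making both expectations small.

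I would construct $\kappa$ in two stages. First take the optimal coupling $\chi$ of the state marginals $\mu,\mu'$, which under the discrete metric realises $\mathbb{E}_\chi[d(s',s'')] = \mathrm{TV}(\mu,\mu') = \tfrac12\|\mu-\mu'\|_1$ and places mass $\min(\mu(s),\mu'(s))$ on each diagonal pair $(s,s)$. Then, conditional on a diagonal pair $s'=s''=s$, couple the action conditionals $\pi(\cdot|s)$ and $\pi'(\cdot|s)$ optimally, contributing $\mathrm{TV}(\pi(s),\pi'(s)) = \tfrac12\|\pi(s)-\pi'(s)\|_1$; off the diagonal, couple actions arbitrarily. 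This yields $\mathbb{E}_\kappa[d(s',s'')] = \tfrac12\|\mu-\mu'\|_1$, while splitting $\mathbb{E}_\kappa[d(a',a'')]$ over diagonal and off-diagonal state pairs bounds it by $\tfrac12\|\pi-\pi'\|_1 + \tfrac12\|\mu-\mu'\|_1$. Substituting back, the second term is at most $(\tfrac{K_s}{2}+\tfrac{K_a}{2})\|\mu-\mu'\|_1 + \tfrac{K_a}{2}\|\pi-\pi'\|_1$, and combining with the first term recovers exactly $L_{pop,\mu} = \tfrac{K_s}{2}+\tfrac{K_a}{2}+K_\mu$ and the coefficient $\tfrac{K_a}{2}$.

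The main obstacle is obtaining the sharp $\tfrac12$ factors rather than the constant $1$ that a naive term-by-term triangle inequality produces: these require genuinely exploiting that $\mu-\mu'$ and $\pi(s)-\pi'(s)$ are differences of probability vectors (hence sum to zero), which is precisely what the coupling / total-variation reformulation encodes. The delicate point is that the state- and action-disagreement costs cannot be minimised independently, so the two-stage coupling must be chosen with care; bounding the off-diagonal action contribution crudely by the off-diagonal state-disagreement mass (which is already paid for in the $K_s$ cost) is what allows one coupling to serve both the $K_s$ and $K_a$ terms at once.
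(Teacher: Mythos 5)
Your proof is correct, but note that this paper never proves the statement at all: it is recalled verbatim from Lem.~3.2 of \citet{policy_mirror_independent} (the ``We recall that $\Gamma_{pop}$ is known to be Lipschitz'' preceding it signals an imported result), so there is no in-paper proof to compare against. Judged on its own merits, your argument is sound and yields exactly the claimed constants. The initial decomposition correctly isolates the $K_\mu\|\mu-\mu'\|_1$ contribution (with coefficient $1$, matching the lemma, since the probability weights $\mu(s')\pi(a'|s')$ sum to one). The coupling treatment of the second term is the right device for the sharp $\tfrac12$ factors: your two-stage coupling $\kappa((s',a'),(s'',a'')) = \chi(s',s'')\,\nu_{s',s''}(a',a'')$ is a genuine coupling of $\mu\otimes\pi$ and $\mu'\otimes\pi'$ because each $\nu_{s',s''}$ has the correct conditionals as marginals, and the bookkeeping checks out: $\mathbb{E}_\kappa[d(s',s'')] = \tfrac12\|\mu-\mu'\|_1$ from the maximal coupling of the state marginals, while $\mathbb{E}_\kappa[d(a',a'')] \leq \tfrac12\sup_s\|\pi(s)-\pi'(s)\|_1 + \tfrac12\|\mu-\mu'\|_1$, the second piece coming from crudely charging the off-diagonal state mass (which totals $\mathrm{TV}(\mu,\mu')$) at disagreement probability one. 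Summing gives $\bigl(K_\mu + \tfrac{K_s}{2} + \tfrac{K_a}{2}\bigr)\|\mu-\mu'\|_1 + \tfrac{K_a}{2}\|\pi-\pi'\|_1$, which is precisely $L_{pop,\mu}\|\mu-\mu'\|_1 + \tfrac{K_a}{2}\|\pi-\pi'\|_1$. Your closing observation is also the right diagnosis of why a naive term-by-term triangle inequality fails: it loses the factor $\tfrac12$ that comes from $\mu-\mu'$ and $\pi(s)-\pi'(s)$ being differences of probability vectors, which the total-variation/coupling formulation exploits.
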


For stationary MFGs the population distribution must be stable with respect to a policy, requiring that $\Gamma_{pop}(\cdot,\pi)$ is contractive $\forall \pi \in \Pi$. We therefore give the following assumption, which is common in previous works \citep{learningMFGs, policy_mirror_independent,Anahtarci2020QLearningIR,zaman2023oraclefree}:

\begin{assumption}[Stable population%, from Assumption 2, \citet{policy_mirror_independent}
]\label{stable_population}
Population updates are stable, i.e. $L_{pop,\mu} < 1$.
\end{assumption}

Assumption \ref{stable_population} allows us to give the following operator that maps policies to their stable mean-field distributions.

\begin{definition}[Stable population operator $\Gamma_{pop}^{\infty}$%, from Def. 3.3, \citet{policy_mirror_independent}
]
Given Assumption \ref{stable_population}, the operator $\Gamma_{pop}^{\infty} : \Pi \rightarrow \Delta_{\mathcal{S}}$ maps a given policy to its unique stable population distribution such that $\Gamma_{pop}(\Gamma_{pop}^{\infty}(\pi),\pi) = \Gamma_{pop}^{\infty}(\pi)$, i.e. the unique fixed point of $\Gamma_{pop}(\cdot,\pi) : \Delta_{\mathcal{S}} \rightarrow \Delta_{\mathcal{S}}.$
\end{definition}

\subsubsection{Policy improvement operators} 

We now introduce the policy improvement operators, which we use for policy improvement in place of a direct best-response operator (see Sec. \ref{related_work}). As with the population evolution operators, these must also be Lipschitz, to ensure smoothness and hence convergence.

We first define the regularised Q-functions.

\begin{definition}[$Q_h$ and $q_h$ functions] 
We define, for any pair $(s,a) \in \mathcal{S} \times \mathcal{A}$:
%\begin{align*}
 \[Q_h(s,a|\pi,\mu) := \mathbb{E}\left[\sum^{\infty}_{t=0}\gamma^t(R(s_t,a_t,\mu) + %\right. \\ &\left. 
 h(\pi(s_t)))\left|\substack{s_0 = s, \\ a_0 = a}, \substack{s_{t+1} \sim P(\cdot|s_t,a_t,\mu), \\ a_{t+1} \sim \pi(\cdot|s_{t+1})}, \forall t \geq 0 \vphantom{\sum^{\infty}_{t=0}\gamma^t(R(s_t,a_t,\mu) +} \right] \right. \]
%\end{align*} 
and %\begin{align*}
    \[q_h(s,a|\pi,\mu) := R(s,a,\mu) + \gamma\sum_{s',a'}P(s'|s,a,\mu)\pi(a'|s')Q_h(s',a'|\pi,\mu).\]
%\end{align*}
\end{definition}

We can now give the operator that maps policy-population pairs to Q-functions, i.e. it gives the Q-function when an agent uses a certain policy when the population has a certain mean-field distribution. We are able to approximate this operator via Def. \ref{CTD_operator} below, which learns online from samples taken along a trajectory of the current policy.

\begin{definition}[$\Gamma_q$ operator]\label{gamma_q_operator} 
The operator $\Gamma_q : \Pi \times \Delta_{\mathcal{S}} \rightarrow \mathcal{Q}$, which maps policy-population pairs to Q-functions, is defined as $\Gamma_q(\pi,\mu) := q_h(\cdot,\cdot|\pi,\mu) \in \mathcal{Q}$ $\forall \pi \in \Pi, \mu \in \Delta_{\mathcal{S}}$.
\end{definition}

We now define the policy mirror ascent (PMA) operator for policy improvement. Agents update a policy with respect to a given Q-function by selecting, for each state, a probability distribution over their actions that maximises the combination of three terms (Def. \ref{PMA_operator}): 1. the value of the given state with respect to the Q-function; 2. a regulariser over the action probability distribution (in practice, we maximise the scaled entropy of the distribution); 3. a metric of similarity between the new action probabilities for the given state and those of the previous policy, given by the squared two-norm of the difference between the two distributions. We can alter the importance of the similarity metric relative to the other two terms by varying a parameter $\eta$, which is equivalent to changing the learning rate of the policy update. The three terms in the maximisation function can be seen in the PMA operator:

\begin{definition}[Policy mirror ascent operator (Def. 3.5, \citep{policy_mirror_independent})]\label{PMA_operator}
For a learning rate $\eta > 0$ and $L_h := L_a + \gamma\frac{L_{s}K_{a}}{2-\gamma K_{s}}$ (where these constants are defined in Assumption \ref{lip_cont_assumpt}), the PMA update operator $\Gamma^{md}_\eta : \mathcal{Q} \times \Pi \rightarrow \Pi$ is defined as, $\forall s \in \mathcal{S}, \forall Q \in \mathcal{Q}, \forall \pi \in \Pi$
\begin{align*}
\Gamma^{md}_\eta(Q,\pi)(s) :=\underset{u \in \mathcal{U}_{L_h}}{\arg\max}\left(\langle u,q(s,\cdot)\rangle + h(u) - \frac{1}{2\eta}||u - \pi(s)||^2_2 \right).
\end{align*}\end{definition}

$\Gamma_q$ and $\Gamma^{md}_\eta$ are both known to be Lipschitz continuous \citep{policy_mirror_independent}.

We can now define the theoretical learning operator $\Gamma_{\eta}$, which is used in the fixed-point iterations to find the MFG-NE. $\Gamma_{\eta}$ takes a PMA step to update a policy with respect to the Q-function of that policy when the population has a stable mean-field distribution arising from following that policy.

\begin{definition}[Nested learning operator] 
For a learning rate $\eta > 0$, $\Gamma_{\eta} : \Pi \rightarrow \Pi$ is defined as
 \[\Gamma_{\eta}(\pi) := \Gamma_{\eta}^{md}(\Gamma_{q}(\pi,\Gamma_{pop}^{\infty}(\pi)),\pi).\]
\end{definition}

The following lemma demonstrates that the fixed points of $\Gamma_{\eta}$ are MFG-NE policies.

\begin{lemma}[Fixed points of $\Gamma_{\eta}$ are MFG-NE] For arbitrary $\eta>0$, a pair ($\pi^*,\mu^*$) is a MFG-NE if and only if $\pi^*= \Gamma_{\eta}(\pi^*)$ and $\mu^* = \Gamma_{pop}^{\infty}(\pi^*)$.
\end{lemma}

We now recall that $\Gamma_{\eta}$ is Lipschitz continuous (\citet{policy_mirror_independent} establishes the conditions under which it is contractive, which we omit here for simplicity).

 \begin{lemma}[Lipschitz continuity of $\Gamma_{\eta}$%, from Lem. 3.7, \citep{policy_mirror_independent})
 ]\label{learning_operator_lipschitz_lemma}
 For any $\eta > 0$, the operator $\Gamma_{\eta} : \Pi \rightarrow \Pi$ is Lipschitz with constant $L_{\Gamma_{\eta}}$ on ($\Pi,||\cdot||_1$).
\end{lemma}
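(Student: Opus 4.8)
The idea is to exhibit $\Gamma_{\eta}$ as a composition of three constituent maps, each of which I will argue is Lipschitz, and then combine their moduli. Writing $\Gamma_{\eta}(\pi) = \Gamma^{md}_{\eta}\big(\Gamma_{q}(\pi,\Gamma_{pop}^{\infty}(\pi)),\,\pi\big)$, the three pieces are: (i) the stable-population map $\Gamma_{pop}^{\infty} : \Pi \to \Delta_{\mathcal{S}}$; (ii) the Q-function map $\Gamma_{q} : \Pi \times \Delta_{\mathcal{S}} \to \mathcal{Q}$; and (iii) the policy-mirror-ascent map $\Gamma^{md}_{\eta} : \mathcal{Q} \times \Pi \to \Pi$. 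The target constant $L_{\Gamma_{\eta}}$ will emerge as a product-and-sum of the three individual Lipschitz constants.

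First I would bound $\Gamma_{pop}^{\infty}$. Fixing $\pi,\pi'$ and letting $\mu=\Gamma_{pop}^{\infty}(\pi)$, $\mu'=\Gamma_{pop}^{\infty}(\pi')$ be the respective fixed points, I would apply the Lipschitz estimate for $\Gamma_{pop}$ to the identities $\mu = \Gamma_{pop}(\mu,\pi)$ and $\mu'=\Gamma_{pop}(\mu',\pi')$, then rearrange using $L_{pop,\mu}<1$ from Ass. \ref{stable_population}, obtaining $\|\mu-\mu'\|_1 \le \frac{K_a/2}{1-L_{pop,\mu}}\|\pi-\pi'\|_1$. This is the standard argument that the fixed point of a parametrised contraction depends Lipschitz-continuously on the parameter, and is routine.

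Next I would treat $\Gamma^{md}_{\eta}$. Because the maximised objective in Def. \ref{PMA_operator} contains the term $-\frac{1}{2\eta}\|u-\pi(s)\|_2^2$ (and $h$ is itself strongly concave), it is strongly concave over the fixed convex feasible set $\mathcal{U}_{L_h}$, so the maximiser is unique and depends Lipschitz-continuously on the parameters. Sensitivity of the maximiser of a strongly concave objective then yields, state-wise, Lipschitz continuity in $q$ (entering only through the linear term $\langle u, q(s,\cdot)\rangle$, with modulus controlled by the strong-concavity constant) and in the proximal centre $\pi(s)$ (the proximal/mirror step being nonexpansive in its centre); taking the supremum over $s$ converts these into bounds in $\|\cdot\|_{\infty}$ and $\|\cdot\|_1$ respectively.

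The main obstacle is the joint Lipschitz continuity of $\Gamma_{q}$ in $(\pi,\mu)$, since $q_h$ is defined self-referentially through $Q_h$ and a Bellman recursion. Here I would first show that $Q_h(\cdot,\cdot|\pi,\mu)$ is Lipschitz in $(\pi,\mu)$ by telescoping the recursion: a perturbation of $\mu$ enters through both $R$ and $P$ (controlled by $L_{\mu}$ and $K_{\mu}$ of Ass. \ref{lip_cont_assumpt}), a perturbation of $\pi$ enters through the action-averaging and the regulariser $h(\pi(s_t))$, and the discount factor $\gamma<1$ geometrically damps the per-step perturbations, yielding a finite constant; boundedness ($0 \le Q_h \le Q_{\mathrm{max}}$) keeps every term well-defined. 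Substituting this into the one-step expression for $q_h$ produces the Lipschitz modulus of $\Gamma_q$. Finally I would compose: apply the $\Gamma^{md}_{\eta}$ bound, then control $\|\Gamma_q(\pi,\Gamma_{pop}^{\infty}(\pi)) - \Gamma_q(\pi',\Gamma_{pop}^{\infty}(\pi'))\|_{\infty}$ via the $\Gamma_q$ and $\Gamma_{pop}^{\infty}$ constants, collecting everything into a single $L_{\Gamma_{\eta}}$ depending on $\eta$, $\gamma$, and the Lipschitz constants of $P$ and $R$. The only bookkeeping subtlety is assembling the chain correctly (products across the composition, sums across the two coupled arguments of $\Gamma^{md}_{\eta}$), but no estimate beyond the $Q_h$ telescoping is genuinely hard.
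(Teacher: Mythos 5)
This lemma is never proved in the paper at all: it is imported verbatim from the cited reference (Lem.\ 3.7 of \citet{policy_mirror_independent}) and used as a black box in Thms.\ \ref{random_adoption_theorem}--\ref{order_of_difference_theorem}, so there is no in-paper proof to compare against. Your plan is, in substance, the proof given in that reference: the same decomposition $\Gamma_{\eta}(\pi) = \Gamma^{md}_{\eta}\bigl(\Gamma_{q}(\pi,\Gamma_{pop}^{\infty}(\pi)),\pi\bigr)$, the same fixed-point-perturbation argument for $\Gamma_{pop}^{\infty}$ (your modulus $\frac{K_a/2}{1-L_{pop,\mu}}$ is exactly what Ass.\ \ref{stable_population} delivers), strong-concavity sensitivity analysis for $\Gamma^{md}_{\eta}$, a Bellman-type perturbation bound for $\Gamma_q$, and a final composition of the constants into $L_{\Gamma_{\eta}}$.

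One point you must make explicit, because as written it is a genuine (if repairable) gap: the entropy regulariser $h$ is \emph{not} Lipschitz on all of $\Delta_{\mathcal{A}}$ --- its gradient blows up at the boundary of the simplex --- so your step in which ``a perturbation of $\pi$ enters through \dots the regulariser $h(\pi(s_t))$'' has no finite modulus for arbitrary policies, and the telescoping bound for $Q_h$ does not close. This is precisely why Def.\ \ref{PMA_operator} restricts the maximisation to the set $\mathcal{U}_{L_h}$ (entropy bounded below by $h_{\mathrm{max}} - L_h$) and why Ass.\ \ref{Persistence_excitation} guarantees all iterates satisfy $\pi(a|s) \geq p_{inf}$: on that restricted region $h$ \emph{is} Lipschitz, and your $Q_h$ estimate goes through. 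So you should state that $\Gamma_{\eta}$ is only applied to, and its Lipschitz constant only claimed on, policies lying in the range of $\Gamma^{md}_{\eta}$ (equivalently, with action probabilities bounded away from zero); your dismissal of the $\Gamma^{md}_{\eta}$-into-$\mathcal{U}_{L_h}$ restriction as mere bookkeeping hides the one place where the argument would otherwise fail. With that restriction in place, the composition you describe is correct and yields the claimed constant.
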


\subsubsection{Conditions when learning online from samples collected along a single run with $N$ agents}

General theoretical guarantees on online learning from a single run require mixing conditions on the samples along the path. As per \citet{policy_mirror_independent}, we decompose these into the two assumptions below.

The first of these assumptions presumes that throughout training along the single system run, the regulariser $h$ ensures that policies continue taking each action in each state with probability bounded away from zero. \citet{policy_mirror_independent} provides conditions on $h$ that are sufficient for Assumption \ref{Persistence_excitation} to hold; in practice it is achieved for a large class of strongly concave $h$, including those implemented as entropy regularisation, which is how we implement $h$ for our experiments.

\begin{assumption}[Persistence of excitation%, from Assumption 3, \citet{policy_mirror_independent}
]\label{Persistence_excitation}
We assume there exists $p_{inf} > 0$ such that:
\begin{enumerate}
    \item $\pi_{\mathrm{max}}(a|s) \geq p_{inf}$ $\forall s \in \mathcal{S},a \in \mathcal{A}$,
    \item For any $\pi \in \Pi$ and $q \in \mathcal{Q}$ that satisfy, $\forall(s,a) \in \mathcal{S} \times \mathcal{A}$, $\pi(a|s) \geq p_{inf}$ and $0 \leq q(s,a) \leq Q_{\mathrm{max}}$, it holds that $\Gamma^{md}_{\eta}(q,\pi)(a|s) \geq p_{inf}$, $\forall(s,a) \in \mathcal{S} \times \mathcal{A}$.  
\end{enumerate} 
\end{assumption}

The next assumption presumes that, for a persistently excited policy, all agents have a uniformly positive probability of being in any state after a finite time.

\begin{assumption}[Sufficient mixing%, from Assumption 4, \citet{policy_mirror_independent}
]\label{Sufficient_mixing}
For any $\pi\in\Pi$ satisfying $\pi(a|s)\geq p_{inf}>0$ $\forall s\in\mathcal{S},a\in\mathcal{A}$, and any initial states \{$s^i_0$\}$_i \in \mathcal{S}^N$, there exist $T_{mix} > 0, \delta_{mix} > 0$ such that $\mathbb{P}(s^j_{T_{mix}} = s'|$\{$s^i_0$\}$_i)$ $\geq\delta_{mix}$, $\forall s' \in \mathcal{S},j\in[N]$. 
\end{assumption}

\section{Learning with networked, decentralised agents}\label{algorithm_section}

\textbf{Roadmap} We first introduce theoretical versions of our operators and algorithm (Secs. \ref{core_learning}, \ref{communication_section}), in order to show that {our networked framework has sample guarantees bounded between those of the centralised- and independent-learning cases} (Sec. \ref{properties}). We then show that our novel incorporation of an experience replay buffer (Sec. \ref{buffer_section}), along with networked communication, means that empirically we can {remove many of the theoretical assumptions and practically infeasible hyperparameter choices} that are required by the sample guarantees of the theoretical algorithms. In such cases we demonstrate experimentally that {our modified networked algorithm  still respects the theoretical guarantees: it can significantly outperform the independent algorithm, often performing similarly to the central-agent one} (Sec. \ref{experiments}).

\subsection{Learning with $N$ agents from a single run}\label{core_learning}

We begin by outlining the basic procedure for solving the MFG using the $N$-agent empirical distribution and a single, non-episodic system run. The two underlying learning operators are the same for the centralised, independent and networked architectures; in the latter two cases all agents apply the operators individually, while in the centralised setting a single representative agent (the agent with arbitrary index $i=1$) estimates the Q-function and computes an updated policy that is pushed to all the other agents.

Learning agents use the stochastic temporal difference (TD)-learning operator to repeatedly update an estimate of the Q-function of their current policy with respect to the current empirical distribution, i.e. to approximate the operator $\Gamma_q$ (Def. \ref{gamma_q_operator}, Sec. \ref{Further_definitions_and_assumptions}):

\begin{definition}[Stochastic TD-learning operator, simplified from Def. 4.1 in \citet{policy_mirror_independent}]\label{CTD_operator} 
We define $\mathcal{Z} := \mathcal{S} \times \mathcal{A} \times [0,1] \times \mathcal{S} \times\mathcal{A}$, and say that $\zeta^i_t$ is the transition observed by agent $i$ at time $t$, given by $\zeta^i_t = (s^i_t,a^i_t,r^i_t,s^i_{t+1},a^i_{t+1})$. The TD-learning operator $\tilde{F}^{\pi}_{\beta} : \mathcal{Q} \times \mathcal{Z} \rightarrow \mathcal{Q}$ is defined, 
for any $Q \in \mathcal{Q},\zeta_t \in\mathcal{Z},\beta \in \mathbb{R}$, as \[
    \tilde{F}^\pi_{\beta}(Q,\zeta_t) =  Q(s_t,a_t) - \beta\Bigl(Q(s_t,a_t) - r_t - h(\pi(s_t))- \gamma  Q(s_{t+1},a_{t+1})\Bigr).
\]
\end{definition} 

Having estimated the Q-function of their current policy, agents use the Q-function to update this policy via the PMA operator from Def. \ref{PMA_operator}.

The theoretical learning algorithm 
%underlying all three architectures 
has three nested loops (see Lines \ref{k_loop_theoretical}, \ref{mpg_loop_theoretical} and \ref{mtd_loop_theoretical} of Alg. \ref{networked_algorithm}). The policy update is applied $K$ times. Before the policy update in each of the $K$ loops, agents update their estimate of the Q-function by applying the stochastic TD-learning operator $M_{pg}$ times. Prior to the TD update in each of the $M_{pg}$ loops, agents take $M_{td}$ steps in the environment without updating. The $M_{td}$ loops exist to create a delay between each TD update to reduce bias when using the empirical distribution to approximate the mean field in a non-episodic system run \citep{Kotsalis}. However, we find in our experiments that we are able to essentially remove the inner $M_{td}$ loops (Sec. \ref{discussion}).

\subsection{Decentralised communication between agents}\label{communication_section}

\begin{algorithm}[t]
    \caption{Networked learning with single system run}\label{networked_algorithm}
    \begin{algorithmic}[1]
    \REQUIRE loop parameters $K,M_{pg},M_{td},C$, learning parameters
    $\eta,\{\beta_m\}_{m \in \{0,\dots,M_{pg}-1\}}$, $\lambda, \gamma$, $\{\tau_k\}_{k \in \{0,\dots,K-1\}}$
    
    \REQUIRE initial states \{$s^i_0$\}$_{i=1}^{N}$
    \STATE Set $\pi^i_0 = \pi_{\mathrm{max}}, \forall i$ and $t \leftarrow 0$ \label{init_policy_theoretical}
    \FOR{$k = 0,\dots,K-1$}\label{k_loop_theoretical}
        \STATE $\forall s,a,i : \hat{Q}^i_0 (s,a) = Q_{\mathrm{max}}$ \label{begin_core_theoretical}
        \FOR{$m = 0,\dots,M_{pg}-1$}\label{mpg_loop_theoretical}
            \FOR{$M_{td}$ iterations}\label{mtd_loop_theoretical}
                \STATE Take step $\forall i : a^i_t \sim \pi^i_{k}(\cdot|s^i_t), r^i_{t} = R(s^i_{t},a^i_{t},\hat{\mu}_t),s^i_{t+1} \sim P(\cdot|s^i_{t},a^i_{t},\hat{\mu}_t)$; $t \leftarrow t + 1$
                %\STATE $t \leftarrow t + 1$
            \ENDFOR
            %\STATE Compute TD update ($\forall i$): $\hat{Q}^i_{m+1} = \hat{Q}^i_{m} - \beta_m \tilde{F}^{\pi^i_k}(\hat{Q}^i_{m},\zeta^i_{t-2})$ (see Def. \ref{CTD_operator})
            \STATE Compute TD update ($\forall i$): $\hat{Q}^i_{m+1} = \tilde{F}^{\pi^i_k}_{\beta_m}(\hat{Q}^i_{m},\zeta^i_{t-2})$ (Def. \ref{CTD_operator})
        \ENDFOR \label{end_Qlearning_theoretical}
        
        \STATE PMA step $\forall i : \pi^i_{k+1} = \Gamma^{md}_\eta(\hat{Q}^i_{M_{pg}},\pi^i_k)$ (Def. \ref{PMA_operator}) \label{end_core_theoretical}
        %\STATE PMA step $\forall i : $\pi^i_{k+1}(s) = \underset{u \in \mathcal{U}_{L_h}}{\arg\max}\langle u,\hat{Q}^i_{M_{pg}}(s,\cdot)\rangle + h(u) - \frac{1}{2\eta}||u - \pi^i_k(s)||^2_2, \forall s \in \mathcal{S}.$
        \STATE $\forall i :$ Generate $\sigma^i_{k+1}$ associated with $\pi^{i}_{k+1}$ \label{start_comm_theoretical}
        
        \FOR{$C$ rounds}
            \STATE $\forall i :$ Broadcast $\sigma^i_{k+1}, \pi^{i}_{k+1}$\label{broadcast_theoretical}
            
            \STATE $\forall i : J^i_t = i \cup \{j \in \mathcal{N} : (i,j) \in \mathcal{E}_{t}$\} \label{neighbours_theoretical}
            
            \STATE $\forall i:$ Select $\mathrm{adopted}^i \sim$ Pr$\left(\mathrm{adopted}^i = j\right)$ = $\frac{\exp{(\sigma^j_{k+1}}/\tau_k)}{\sum_{x\in J^i_t}\exp{(\sigma^x_{k+1}}/\tau_k)}$ $\forall j \in J^i_t$  \label{select_theoretical}
                    
            \STATE $\forall i : \sigma^i_{k+1} \leftarrow \sigma^{\mathrm{adopted}^i}_{k+1}, \pi^i_{k+1} \leftarrow \pi^{\mathrm{adopted}^i}_{k+1}$\label{adopt_theoretical}
            
            \STATE Take step $\forall i : a^i_t \sim \pi^i_{k+1}(\cdot|s^i_t), r^i_{t} = R(s^i_{t},a^i_{t},\hat{\mu}_t),s^i_{t+1} \sim P(\cdot|s^i_{t},a^i_{t},\hat{\mu}_t)$; $t \leftarrow t + 1$\label{stepInComm_theoretical}
            %\STATE $t \leftarrow t + 1$
            
            \ENDFOR\label{end_comm_theoretical}

    \ENDFOR
    \STATE \textbf{return} policies \{$\pi^i_K$\}$_{i=1}^{N}$
    \end{algorithmic}
\end{algorithm}

In our {novel algorithm Alg. \ref{networked_algorithm}}, agents compute policy updates in a decentralised way as in the independent case (Lines \ref{begin_core_theoretical}-\ref{end_core_theoretical}), before exchanging policies with neighbours in Lines \ref{start_comm_theoretical}-\ref{end_comm_theoretical} by the following method, which allows policies to spread through the population.\footnote{As discussed in Sec. \ref{related_work}, our communication method is reminiscent of the use of fitness functions in distributed evolutionary algorithms \citep{Eiben2015,survivability}.} Coupled to their updated policy $\pi^{i}_{k+1}$, agents generate a scalar value $\sigma^{i}_{k+1}$ (Line \ref{start_comm_theoretical}). The value provides information that helps agents decide between policies that they may wish to adopt from neighbours. Different methods for choosing between values received from neighbours, %$\left(\sigma_{k+1}^{j} \forall j \in J^i_t \right)$ 
and for generating the values in the first place, lead to different policies spreading through the population. For example, generating or choosing $\sigma_{k+1}^{i}$ at random leads to policies being exchanged at random (required in Thm. \ref{random_adoption_theorem_full}), whereas generating $\sigma_{k+1}^{i}$ as an approximation of the return of $\pi_{k+1}^{i}$ and then selecting the highest received value of $\sigma_{k+1}^{j}$ leads to better performing policies spreading through the population. The latter is the approach we use for accelerating learning empirically (described in Sec. \ref{practical_generation} on the practical running of our algorithm), albeit we use a softmax rather than a max function for selecting between received values. However, for generality in our theoretical results, we do not focus on a specific method for generating $\sigma^{i}_{k+1}$, such that it can be arbitrary for Thms. \ref{random_adoption_theorem_full} and \ref{Consecutive_stability_guarantee_theorem} below, and with few restrictions for Thms. \ref{non_random_adoption_theorem_full} and \ref{order_of_difference_theorem}. %In Sec. \ref{practical_generation} on the practical running of our algorithm, we introduce a specific mechanism for generating $\sigma^{i}_{k+1}$ which advantageously impacts empirical convergence. 

Agents broadcast their policy $\pi^{i}_{k+1}$ and the associated $\sigma^{i}_{k+1}$ value to their neighbours (Line \ref{broadcast_theoretical}). Agents have a certain broadcast radius, defining the structure of the possibly time-varying communication network. Of the policies and associated values received by a given agent (including its own) (Line \ref{neighbours_theoretical}), the agent selects a $\sigma^{j}_{k+1}$ with a probability defined by a softmax function over the received values, and \textit{adopts} the policy associated with this $\sigma^{i}_{k+1}$, i.e. it sets its own current $\pi^{i}_{k+1}$ and $\sigma^{i}_{k+1}$ to the ones it has selected (Lines \ref{select_theoretical}, \ref{adopt_theoretical}). This process repeats for $C$ communication rounds, before the Q-function estimation steps begin again. After each communication round, the agents take a step in the environment (Line \ref{stepInComm_theoretical}), such that if the communication network is affected by the agents' states, then agents that are unconnected from any others in a given communication round might become connected in the next. ({In our experiments we set $C$ as 1 to show the benefits to convergence speed brought by even a single communication round}.) We assume the softmax function is subject to a possibly time-varying temperature parameter $\tau_k$. We discuss the effects of the values of $C$ and $\tau_k$, and the mechanism for generating $\sigma^{i}_{k+1}$, in subsequent sections.

\begin{remark}\label{generalisation_remark}
    Our networked architecture is effectively a generalisation of both the central-agent and independent settings (Algs. 2, 3, \citet{policy_mirror_independent}). The independent setting is the special case where there is no communication, i.e. $C = 0$ - this serves as an implicit ablation of our communication scheme. The central-agent setting is the special case when $\sigma^i_{k+1}$ is generated from a unique ID for each agent, with the central learner agent assumed to generate the highest value by default. In this case we assume $\tau_k \rightarrow 0$ (such that the softmax becomes a max function), and that the communication network becomes jointly connected repeatedly, so the central learner's policy is always adopted by the entire population, assuming $C$ is large enough that the number of jointly connected collections of graphs occurring within $C$ is equal to the largest diameter of the union of any collection \citep{Distributed_Averaging,average_belief}.  
\end{remark}

\begin{remark}\label{centralised_version}
    In practice, when referring to a central-agent version of the networked Alg. \ref{networked_algorithm}, for simplicity we assume there is no networked communication and instead that the updated policy $\pi^1_{k+1}$ of the representative learner $i = 1$ is pushed to all agents after Line \ref{end_core_theoretical}, as in Alg. 2 of \citep{policy_mirror_independent}.  
\end{remark}

\section{Theoretical results}\label{properties}

In this section we first give two theoretical results comparing the sample guarantees of our networked case with those of the other settings; the results respectively depend on whether the networked agents select which communicated policies to adopt at random or not. We then provide the order of the difference in these bounds in the non-random case in terms of the network structure and number of communication rounds. We finally give %in Sec. \ref{Consecutive_stability_guarantee_section} 
a policy-update stability guarantee, which applies in all scenarios. %Sec. \ref{Further_definitions_and_assumptions} contains the full set of technical assumptions on which these theorems rely.

We begin by recalling the sample guarantees of the architecture where agents learn entirely independently \citep{policy_mirror_independent}.

\begin{lemma}[Independent learning, from Thm. 4.5, \citet{policy_mirror_independent}]\label{independent_learning_theorem}
For $p_{inf}$ and $\delta_{mix}$ defined in Assumptions \ref{Persistence_excitation} and \ref{Sufficient_mixing} respectively, define $t_0 := \frac{16(1+\gamma)^2}{((1 - \gamma)\delta_{mix}p_{inf})^2}$.
Assume that Assumptions \ref{lip_cont_assumpt}, \ref{stable_population}, \ref{Persistence_excitation} and \ref{Sufficient_mixing} hold, and that $\pi^*$ is the unique MFG-NE policy. For $L_{\Gamma_{\eta}}$ defined in Lem. \ref{learning_operator_lipschitz_lemma}, we assume $\eta > 0$ satisfies $L_{\Gamma_{\eta}} < 1$. The learning rates are $\beta_m = \frac{2}{(1 - \gamma)(t_0 + m -1)}$ $\forall m \geq 0$, and let $\varepsilon > 0$ be arbitrary. There exists a problem-dependent constant $a \in [0,\infty)$ such that if $K =\frac{\log8\varepsilon^{-1}}{\log L^{-1}_{\Gamma_\eta}}$, $M_{pg} > \mathcal{O}(\varepsilon^{-2-a})$ and $M_{td} > \mathcal{O}(log^{2}\varepsilon^{-1})$, then the random output \{$\pi^{i}_{K}$\}$_{i}$ of Alg. \ref{networked_algorithm} when run with $C = 0$ (such that there is no communication) satisfies for all agents $i \in \{1,\dots,N\}$,\[\mathbb{E}\left[||\pi^{i}_K - \pi^*||_1\right] \; \leq \; \varepsilon + \mathcal{O}\left(\frac{1}{\sqrt{N}}\right).\] 
\end{lemma}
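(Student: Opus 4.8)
The plan is to exploit that $\pi^*$ is the unique fixed point of the exact nested learning operator $\Gamma_{\eta}$ (a consequence of the optimality and stability conditions of the MFG-NE in Def. \ref{Nash_equilibrium_of_stationary_MFG}), together with the contraction property $L_{\Gamma_{\eta}} < 1$ from Lem. \ref{learning_operator_lipschitz_lemma}, and then to control the gap between this idealised operator and the noisy update actually performed in Alg. \ref{networked_algorithm} with $C = 0$. First I would derive a one-step error recursion for each agent $i$ by adding and subtracting $\Gamma_{\eta}(\pi^i_k) = \Gamma^{md}_\eta(\Gamma_q(\pi^i_k, \Gamma_{pop}^{\infty}(\pi^i_k)), \pi^i_k)$:
\[
\pi^i_{k+1} - \pi^* = \big[\Gamma^{md}_\eta(\hat{Q}^i_{M_{pg}}, \pi^i_k) - \Gamma^{md}_\eta(\Gamma_q(\pi^i_k, \Gamma_{pop}^{\infty}(\pi^i_k)), \pi^i_k)\big] + \big[\Gamma_{\eta}(\pi^i_k) - \Gamma_{\eta}(\pi^*)\big].
\]
The second bracket is at most $L_{\Gamma_{\eta}}\|\pi^i_k - \pi^*\|_1$ by the contraction, while the first bracket is bounded, via the Lipschitz continuity of the PMA operator in its $Q$-argument (Def. \ref{PMA_operator}), by a constant times the Q-estimation error $\|\hat{Q}^i_{M_{pg}} - \Gamma_q(\pi^i_k, \Gamma_{pop}^{\infty}(\pi^i_k))\|_{\infty}$.

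Second, I would split this Q-error into a mean-field term and a temporal-difference term by inserting $\Gamma_q(\pi^i_k, \hat{\mu})$:
\[
\hat{Q}^i_{M_{pg}} - \Gamma_q(\pi^i_k, \Gamma_{pop}^{\infty}(\pi^i_k)) = \underbrace{\big[\hat{Q}^i_{M_{pg}} - \Gamma_q(\pi^i_k, \hat{\mu})\big]}_{\text{TD error}} + \underbrace{\big[\Gamma_q(\pi^i_k, \hat{\mu}) - \Gamma_q(\pi^i_k, \Gamma_{pop}^{\infty}(\pi^i_k))\big]}_{\text{mean-field error}}.
\]
The mean-field error is handled by the Lipschitz dependence of $\Gamma_q$ on the population argument (which follows from Ass. \ref{lip_cont_assumpt}) combined with a concentration estimate $\mathbb{E}\|\hat{\mu} - \Gamma_{pop}^{\infty}(\pi^i_k)\|_1 = \mathcal{O}(1/\sqrt{N})$; this is the only place the finite population size enters, and it produces the $\mathcal{O}(1/\sqrt{N})$ term. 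The TD error is the finite-time stochastic-approximation bound for the operator $\tilde{F}^{\pi}_{\beta}$ run for $M_{pg}$ steps with the decaying step sizes $\beta_m$: using persistence of excitation (Ass. \ref{Persistence_excitation}) and sufficient mixing (Ass. \ref{Sufficient_mixing}), the variance of the estimate decays like $\mathcal{O}(1/M_{pg})$, while the $M_{td}$ waiting steps push the sampled transition $\zeta^i_{t-2}$ close enough to stationarity that the Markovian bias is negligible. Choosing $M_{pg} > \mathcal{O}(\varepsilon^{-2-a})$ and $M_{td} > \mathcal{O}(\log^2\varepsilon^{-1})$ makes both contributions $\mathcal{O}(\varepsilon)$, with the problem-dependent exponent $a$ absorbing the mixing constants.

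Combining these gives the recursion $\mathbb{E}\|\pi^i_{k+1} - \pi^*\|_1 \leq L_{\Gamma_{\eta}}\,\mathbb{E}\|\pi^i_k - \pi^*\|_1 + \mathcal{O}(\varepsilon) + \mathcal{O}(1/\sqrt{N})$, which I would unroll over $k = 0, \dots, K-1$ from $\pi^i_0 = \pi_{\mathrm{max}}$ to obtain
\[
\mathbb{E}\|\pi^i_K - \pi^*\|_1 \leq L_{\Gamma_{\eta}}^{K}\|\pi_{\mathrm{max}} - \pi^*\|_1 + \frac{\mathcal{O}(\varepsilon) + \mathcal{O}(1/\sqrt{N})}{1 - L_{\Gamma_{\eta}}}.
\]
The choice $K = \log(8\varepsilon^{-1})/\log L_{\Gamma_{\eta}}^{-1}$ yields $L_{\Gamma_{\eta}}^{K} = \varepsilon/8$, and since the $\ell_1$ policy diameter is bounded (each $\pi(s) \in \Delta_{\mathcal{A}}$, so the distance is at most $2$), the transient term is $\mathcal{O}(\varepsilon)$; rescaling constants recovers the claimed $\varepsilon + \mathcal{O}(1/\sqrt{N})$. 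I expect the main obstacle to be the TD-error step: obtaining a clean finite-time bound for TD learning driven by a single, non-i.i.d. trajectory whose transitions are governed by the random, time-varying empirical distribution $\hat{\mu}_t$. The crux is to disentangle the mean-field fluctuation from the trajectory bias and variance, and to justify precisely why the $M_{td}$-step delay suppresses the Markovian correlation to the required $\log^2\varepsilon^{-1}$ order — this is the technically delicate part and the reason the result leans on the sufficient-mixing assumption.
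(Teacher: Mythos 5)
First, note that the paper does not prove this lemma itself: it is imported verbatim from Thm.~4.5 of \citet{policy_mirror_independent} (proof in their Thm.~D.9), and the paper only reveals that proof's structure indirectly, through Lem.~\ref{ctd_theorem} and the inequality quoted in Appx.~\ref{non_random_adoption_proof}. Measured against that revealed structure, your proposal has the right outer skeleton (contraction of $\Gamma_\eta$ around its fixed point $\pi^*$, Lipschitzness of the PMA step in its $Q$-argument, a TD stochastic-approximation bound, unrolling the recursion, and the choice $L_{\Gamma_\eta}^K = \varepsilon/8$), but it contains a genuine gap at the decisive step. You bound the ``mean-field error'' by asserting $\mathbb{E}\|\hat{\mu} - \Gamma_{pop}^{\infty}(\pi^i_k)\|_1 = \mathcal{O}(1/\sqrt{N})$, i.e.\ that the empirical distribution concentrates around the stable distribution of \emph{agent $i$'s own policy}. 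In the independent case ($C=0$) this is false: $\hat{\mu}_t$ is generated by $N$ \emph{heterogeneous} policies $\{\pi^j_k\}_j$, which have drifted apart because each agent's PMA update is driven by its own independent Q-estimation noise. The resulting bias is governed by the population's policy divergence, and it does not vanish as $N \to \infty$. This is exactly what Lem.~\ref{ctd_theorem} records: the Q-estimation error is $\varepsilon + \mathcal{O}\bigl(\frac{1}{\sqrt{N}} + \frac{1}{N}\Delta + \|\pi^i - \bar{\pi}\|_1\bigr)$, with $\Delta := \sum_j \|\pi^j - \bar{\pi}\|_1$, and the cited proof's final inequality carries these terms as $\xi\sum^{K-1}_{k=1}L_{\Gamma_{\eta}}^{K-k-1}\mathbb{E}\left[\Delta_{k}\right]$ rather than folding them into a per-step $\mathcal{O}(\varepsilon) + \mathcal{O}(1/\sqrt{N})$ as your recursion does.

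The consequence is that the actual technical core of the result --- absent from your plan --- is a second, coupled recursion controlling $\mathbb{E}[\Delta_k]$: the divergence is created by the agents' independent estimation noise, and it feeds back into every agent's Q-bias at the next iteration, so one must show this feedback loop stays summable under the stated choice of $M_{pg}$. That coupling is precisely what produces the problem-dependent exponent $a$ in $M_{pg} > \mathcal{O}(\varepsilon^{-2-a})$; your attribution of $a$ to ``absorbing the mixing constants'' is incorrect, since mixing enters through $t_0$ and $\beta_m$ identically in the centralised analysis, which needs only $M_{pg} > \mathcal{O}(\varepsilon^{-2})$. Your sketch, as written, is essentially a proof of the \emph{centralised} guarantee (where all agents share one policy and $\hat{\mu}$ genuinely concentrates around that policy's stable distribution); to recover the independent-learning lemma you would need to add the divergence bookkeeping that distinguishes the two cases --- which is also the very quantity the paper's own Thms.~\ref{non_random_adoption_theorem} and \ref{order_of_difference_theorem} are built on.
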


%\subsection{Full Version of Thm. \ref{random_adoption_theorem}}\label{random_adoption_theorem_full_section}
%\setcounter{theorem}{0}

We first give a result for the trivial situation of random adoption to provide an intuition that networked communication preserves the sample guarantees of independent learning, before showing the conditions under which the latter can be outperformed. 

\begin{theorem}[Networked learning with random adoption]\label{random_adoption_theorem_full}
For $p_{inf}$ and $\delta_{mix}$ defined in Assumptions \ref{Persistence_excitation} and \ref{Sufficient_mixing} respectively, define $t_0 := \frac{16(1+\gamma)^2}{((1 - \gamma)\delta_{mix}p_{inf})^2}$.
Assume that Assumptions \ref{lip_cont_assumpt}, \ref{stable_population}, \ref{Persistence_excitation} and \ref{Sufficient_mixing} hold, and that $\pi^*$ is the unique MFG-NE policy. For $L_{\Gamma_{\eta}}$ defined in Lem. \ref{learning_operator_lipschitz_lemma}, we assume $\eta > 0$ satisfies $L_{\Gamma_{\eta}} < 1$. The learning rates are $\beta_m = \frac{2}{(1 - \gamma)(t_0 + m -1)}$ $\forall m \geq 0$, and let $\varepsilon > 0$ be arbitrary.  Let us set $C > 0$ and $\tau_k \rightarrow \infty$. There exists a problem-dependent constant $a \in [0,\infty)$ such that if $K =\frac{\log8\varepsilon^{-1}}{\log L^{-1}_{\Gamma_\eta}}$, $M_{pg} > \mathcal{O}(\varepsilon^{-2-a})$ and $M_{td} > \mathcal{O}(log^{2}\varepsilon^{-1})$, then the random output \{$\pi^{i}_{K}$\}$_{i}$ of {Alg. \ref{networked_algorithm} preserves the sample guarantees of the independent-learning case} given in Lem. \ref{independent_learning_theorem}, i.e. the output satisfies, for all agents $i \in \{1,\dots,N\}$,
\[\mathbb{E}\left[||\pi^{i}_K - \pi^*||_1\right] \; \leq \; \varepsilon + \mathcal{O}\left(\frac{1}{\sqrt{N}}\right).  \] 
%(Proof in Sec. \ref{random_adoption_proof}.)
\end{theorem}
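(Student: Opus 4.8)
The plan is to exploit the fact that the \emph{only} difference between the networked run with $\tau_k \rightarrow \infty$ and the independent run of Lem.~\ref{independent_learning_theorem} (the case $C=0$) is the adoption step on Lines~\ref{select_theoretical}--\ref{adopt_theoretical}. As $\tau_k \rightarrow \infty$ the softmax collapses to the uniform distribution on the closed neighbourhood $J^i_t$, so each agent adopts a policy chosen uniformly at random from $J^i_t$, independently of any quantity related to the policies themselves. I would show that this policy-blind averaging neither biases the population relative to $\pi^*$ nor inflates any agent's expected error, so that the same per-phase contraction recursion that proves Lem.~\ref{independent_learning_theorem} continues to hold, yielding the identical bound.

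First I would recall from the proof of Lem.~\ref{independent_learning_theorem} that, under the stated learning rates and loop parameters, each agent obeys a contraction-plus-error recursion of the form $\mathbb{E}[||\pi^i_{k+1} - \pi^*||_1] \leq L_{\Gamma_\eta}\,\mathbb{E}[||\pi^i_k - \pi^*||_1] + \xi_k$, where $L_{\Gamma_\eta} < 1$ is the Lipschitz constant of Lem.~\ref{learning_operator_lipschitz_lemma} and $\xi_k$ collects the TD estimation error (governed by $M_{pg}, M_{td}$ and $\{\beta_m\}$) together with the empirical-distribution-versus-mean-field deviation of order $\mathcal{O}(1/\sqrt{N})$. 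Rather than tracking a single agent, I would track the population maximum $D_k := \max_i \mathbb{E}[||\pi^i_k - \pi^*||_1]$ and argue it satisfies the same recursion. For the learning sub-phase (Lines~\ref{begin_core_theoretical}--\ref{end_core_theoretical}) the recursion is inherited verbatim, since every agent applies exactly the operators $\tilde{F}^\pi_\beta$ and $\Gamma^{md}_\eta$ used in the independent case. For the adoption sub-phase I would condition on the pre-adoption policies and use that $\mathrm{adopted}^i$ is uniform on $J^i_t$:
\[
\mathbb{E}\!\left[||\pi^{i,\mathrm{post}}_{k+1} - \pi^*||_1 \,\big|\, \{\pi^{j,\mathrm{pre}}_{k+1}\}_j\right] = \frac{1}{|J^i_t|}\sum_{j \in J^i_t} ||\pi^{j,\mathrm{pre}}_{k+1} - \pi^*||_1 \;\leq\; \max_j ||\pi^{j,\mathrm{pre}}_{k+1} - \pi^*||_1.
\]
Taking expectations shows averaging over neighbours cannot increase the population maximum, i.e. $D_{k+1}^{\mathrm{post}} \leq D_{k+1}^{\mathrm{pre}}$, so $D_k$ obeys exactly $D_{k+1} \leq L_{\Gamma_\eta} D_k + \xi_k$. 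Unrolling over the prescribed $K = \log 8\varepsilon^{-1}/\log L_{\Gamma_\eta}^{-1}$ phases then reproduces $\mathbb{E}[||\pi^i_K - \pi^*||_1] \leq D_K \leq \varepsilon + \mathcal{O}(1/\sqrt{N})$ for every $i$.

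\textbf{The main obstacle} is justifying that the error term $\xi_k$, and in particular the $\mathcal{O}(1/\sqrt{N})$ empirical-distribution concentration, is genuinely unchanged by adoption. I would discharge this by an induction: because $D_k$ stays small, all agents' policies remain close to $\pi^*$ in expectation, so the induced empirical distribution $\hat{\mu}_t$ stays within $\mathcal{O}(1/\sqrt{N})$ of $\mu^* = \Gamma_{pop}^\infty(\pi^*)$ exactly as in the independent analysis, and the Q-estimation error is controlled identically. The one genuinely new point to check is that the correlation between the (possibly state-dependent) neighbourhoods $J^i_t$ and the pre-adoption policies does not invalidate the averaging inequality above; this is cleanest under the exchangeability induced by symmetric agents and symmetric initial conditions, under which each $\pi^{j}_{k+1}$ shares a common marginal law and the adoption step acts as a policy-blind relabelling that leaves every marginal---and hence the expectation in the statement---unchanged, recovering the exact (not merely dominated) preservation asserted in the proof sketch.
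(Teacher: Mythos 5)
Your core instinct---that policy-blind, uniform adoption should not degrade the independent-case guarantee---is exactly the paper's argument: its proof of Thm.~\ref{random_adoption_theorem_full} (Appx.~\ref{random_adoption_proof}) is a one-line symmetry claim that random exchange ``does not affect the random output of the algorithm, such that the random output satisfies the same expectation as if $C=0$.'' However, your attempt to make this rigorous fails at its central step. You define $D_k := \max_i \mathbb{E}\left[||\pi^i_k - \pi^*||_1\right]$, a \emph{maximum of expectations}, but the adoption inequality you display, once you take full expectations, only yields $\mathbb{E}\left[||\pi^{i,\mathrm{post}}_{k+1} - \pi^*||_1\right] \leq \mathbb{E}\left[\max_j ||\pi^{j,\mathrm{pre}}_{k+1} - \pi^*||_1\right]$, an \emph{expectation of a maximum}; since $\mathbb{E}[\max_j X_j] \geq \max_j \mathbb{E}[X_j]$ in general, the claimed conclusion $D^{\mathrm{post}}_{k+1} \leq D^{\mathrm{pre}}_{k+1}$ does not follow and the recursion does not close. (The step would be valid if $J^i_t$ were deterministic or independent of the policies, because then $\mathbb{E}\bigl[\frac{1}{|J^i_t|}\sum_{j \in J^i_t}||\pi^{j,\mathrm{pre}}_{k+1} - \pi^*||_1\bigr]$ is a convex combination of the per-agent expectations.) Redefining $D_k$ as $\mathbb{E}[\max_i ||\pi^i_k - \pi^*||_1]$ fixes the adoption step trivially but breaks the learning sub-phase: the independent-case analysis (Lem.~\ref{independent_learning_theorem}, via Lem.~\ref{ctd_theorem}) controls the Q-estimation error \emph{per agent in expectation}, and lifting it to an expected maximum over $N$ agents requires concentration plus a union bound, which would pollute the $\mathcal{O}(1/\sqrt{N})$ term.

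Your proposed patch---that exchangeability makes adoption a ``policy-blind relabelling that leaves every marginal unchanged''---is false under the theorem's actual assumptions, which allow $\mathcal{G}_t$ to depend on the agents' states and hence to be correlated with their policies. Concrete counterexample: $N=3$; exactly one agent, chosen uniformly, holds a deviant policy (this joint law is exchangeable); and $\mathcal{E}_t$ is the star centred at the deviant agent. After one round of uniform adoption over closed neighbourhoods, the probability that a fixed agent holds the deviant policy is $\frac{1}{3}\cdot\frac{1}{3} + \frac{2}{3}\cdot\frac{1}{2} = \frac{4}{9} > \frac{1}{3}$: the marginal is not preserved, and the expected policy divergence---the quantity that actually drives all of these bounds (cf.\ Appx.~\ref{non_random_adoption_proof})---strictly \emph{increases}. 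Exact marginal preservation under uniform adoption holds only when the neighbourhoods are independent of the policies, which nothing in the statement guarantees; so the very point you flag as ``the one genuinely new point to check'' is the point at which the argument breaks, and you resolve it with a claim that has a counterexample. To be fair, the paper's own one-line proof silently glosses over this same network--policy correlation; but it does not commit to your two concrete mechanisms (the max-recursion and exact marginal invariance), and as written both of yours fail.
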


%\subsection{Proof of Thm. \ref{random_adoption_theorem}}\label{random_adoption_proof}

\begin{proof}
If $\tau_k \rightarrow \infty$, the softmax function that defines the probability of a received policy being adopted in Line \ref{select_theoretical} of Alg. \ref{networked_algorithm} gives a uniform distribution. Policies are thus exchanged at random between communicating agents for an arbitrary $C > 0$ rounds, which does not affect the random output of the algorithm, such that the random output satisfies the same expectation as if $C = 0$. \end{proof}

If $\sigma^{i}_{k+1}$ is generated arbitrarily and uniquely for each $i$, then for $\tau_k \in \mathbb{R}_{>0}$ (such that the softmax function gives a non-uniform distribution and adoption of received policies is therefore non-random), {the sample complexity of the networked algorithm is bounded between that of the centralised and independent algorithms}:

\begin{theorem}[Networked learning with non-random adoption]\label{non_random_adoption_theorem_full}
Assume that Assumptions \ref{lip_cont_assumpt}, \ref{stable_population}, \ref{Persistence_excitation} and \ref{Sufficient_mixing} hold, and that Alg. \ref{networked_algorithm} is run with learning rates and constants as defined in Thm. \ref{random_adoption_theorem_full}, except now let us set $\tau_k \in \mathbb{R}_{>0}$. Assume that $\sigma^{i}_{k+1}$ is generated uniquely for each $i$, in a manner independent of any metric related to $\pi^{i}_{k+1}$, e.g. $\sigma^{i}_{k+1}$ is random or related only to the index $i$ (so as not to bias the spread of any particular policy). Let the random output of this Algorithm be denoted as \{$\pi^{i,net}_{K}$\}$_{i}$. Also consider an independent-learning version of the algorithm (i.e. with the same parameters except $C = 0$) and denote its random output \{$\pi^{i,ind}_{K}$\}$_{i}$; and a central-agent version of the algorithm with the same parameters (see Rem. \ref{centralised_version}) and denote its random output as $\pi^{cent}_{K}$. Then for all agents $i \in \{1,\dots,N\}$, the random outputs \{$\pi^{i,net}_K$\}$_i$, \{$\pi^{i,ind}_K$\}$_i$ and $\pi^{cent}_{K}$ satisfy the following relations, where $ub_{net}$, $ub_{ind}$ and $ub_{cent}$ are respective upper bounds for each case: 
\[\mathbb{E}\left[||\pi^{cent}_K - \pi^*||_1\right]   \leq  ub_{cent}, \;\;\;\;\;\mathbb{E}\left[||\pi^{i,net}_K - \pi^*||_1\right] \leq  ub_{net},\;\;\;\;\;\mathbb{E}\left[||\pi^{i,ind}_K - \pi^*||_1\right] \leq ub_{ind},\]
%\begin{align*} 
\[\text{where } \;\;\;\; ub_{cent}  \; \leq \; ub_{net} \leq \; ub_{ind} \; = \; \varepsilon + \mathcal{O}\left(\frac{1}{\sqrt{N}}\right).\]
%\end{align*}
%(Proof in Sec. \ref{non_random_adoption_proof}.)
\end{theorem}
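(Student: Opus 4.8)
The plan is to reduce the three-way comparison to a single statement about how much \emph{policy divergence} survives in the population under each architecture, and then to exploit the fact that this divergence is the only architecture-dependent quantity entering the per-iteration learning error. Throughout I work with the nested operator $\Gamma_{\eta}$, which by Lem. \ref{learning_operator_lipschitz_lemma} contracts toward the unique fixed point $\pi^*$ with rate $L_{\Gamma_{\eta}} < 1$, and I treat the realised update $\pi^i_{k+1} = \Gamma^{md}_\eta(\hat{Q}^i_{M_{pg}},\pi^i_k)$ as an inexact version of $\Gamma_{\eta}(\pi^i_k)$.

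First I would set up the per-agent error recursion. For agent $i$ at iteration $k$, apply Lem. \ref{ctd_theorem} with the reference $\bar\pi = \pi^i_k$, so that $||\pi^i_k - \bar\pi||_1 = 0$ and the only remaining divergence term is $\Delta^i_k := \sum_{j}||\pi^j_k - \pi^i_k||_1$; this bounds $\mathbb{E}[||\hat{Q}^i_{M_{pg}} - Q_h(\cdot,\cdot|\pi^i_k,\mu_{\pi^i_k})||_\infty]$ by $\varepsilon' + \mathcal{O}(1/\sqrt{N} + \Delta^i_k/N)$. Passing this through the Lipschitz dependence of the PMA operator (Def. \ref{PMA_operator}) on its $Q$-argument and then using the contraction $||\Gamma_{\eta}(\pi^i_k) - \pi^*||_1 \le L_{\Gamma_{\eta}}||\pi^i_k - \pi^*||_1$ yields a recursion $\mathbb{E}[||\pi^i_{k+1}-\pi^*||_1] \le L_{\Gamma_{\eta}}\,\mathbb{E}[||\pi^i_k-\pi^*||_1] + c\bigl(\varepsilon' + \mathcal{O}(1/\sqrt{N}) + \mathbb{E}[\Delta^i_k]/N\bigr)$. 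Unrolling over $k=0,\dots,K-1$ with $K = \log(8\varepsilon^{-1})/\log(L_{\Gamma_{\eta}}^{-1})$ collapses the transient $L_{\Gamma_{\eta}}^{K}||\pi^i_0-\pi^*||_1$ to order $\varepsilon$ (the sample-size parameters $M_{pg},M_{td}$ controlling $\varepsilon'$ and the $\mathcal{O}(1/\sqrt{N})$ part exactly as in Lem. \ref{independent_learning_theorem}), producing an upper bound of the shape
\[
ub = \varepsilon + \mathcal{O}\!\left(\frac{1}{\sqrt{N}}\right) + \frac{c}{N}\sum_{k=0}^{K-1} L_{\Gamma_{\eta}}^{\,K-1-k}\,\mathbb{E}[\Delta^i_k].
\]
The only term that differs across the three architectures is the last one, the discounted sum of expected population divergences.

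The comparison then follows from three observations about $\mathbb{E}[\Delta^i_k]$. In the centralised case (Rem. \ref{centralised_version}) every agent runs the pushed policy, so $\Delta^i_k \equiv 0$ and the divergence term vanishes, giving the smallest bound $ub_{cent}$. In the independent case ($C=0$) no coalescence occurs, reproducing exactly the bound $ub_{ind} = \varepsilon + \mathcal{O}(1/\sqrt{N})$ of Lem. \ref{independent_learning_theorem}, whose proof already establishes the self-consistent $\mathcal{O}(1/\sqrt{N})$ bound on $\mathbb{E}[\Delta^i_k]/N$. In the networked case with finite $\tau_k$ and \emph{uniquely} generated $\sigma^i_{k+1}$, the softmax adoption in Lines \ref{select_theoretical}--\ref{adopt_theoretical}, run during the $C$ rounds that follow each PMA step, is a coalescence (gossip) process that replaces distinct policies by copies of neighbours' policies; since $\sigma^i_{k+1}$ is assumed independent of any quality metric of $\pi^i_{k+1}$, this introduces no directional bias interfering with the contraction toward $\pi^*$ (contrast Thm. \ref{random_adoption_theorem}, where $\tau_k\to\infty$ removes all coalescence pressure, and the practical regime of Thm. \ref{order_of_difference_theorem}, where correlation with quality gives the even stronger max-consensus effect). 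Establishing $0 \le \mathbb{E}[\Delta^{i,net}_k] \le \mathbb{E}[\Delta^{i,ind}_k]$ for every $k$, together with the evident monotonicity of the displayed bound in each $\mathbb{E}[\Delta^i_k]$, then delivers the claimed sandwich $ub_{cent} \le ub_{net} \le ub_{ind}$ at the level of the upper bounds, which are all built from the same formula.

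The main obstacle is making the divergence ordering rigorous in the presence of inter-iteration coupling: $\Delta^i_k$ controls the $Q$-estimation error, which perturbs the policies entering iteration $k+1$, which in turn determines $\Delta^i_{k+1}$, so the networked and independent divergences cannot be compared pointwise without care. I would handle this by a monotone coupling / induction on $k$, running the networked and independent systems with shared learning randomness and showing the networked expected divergence stays dominated by the independent one and above the centralised zero. The delicate sub-step is that a \emph{single} adoption round need not decrease the divergence $\sum_{i,j}||\pi^i-\pi^j||_1$ (copying a fringe policy can transiently increase it); the clean monotone surrogate is instead the \emph{number of distinct policies}, which copying can only (weakly) reduce, and one must relate the net effect of the full $C$-round process on this count back to the $\Delta$-dependent bound. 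Further care is needed because $\mathcal{G}_t$ is time-varying and an agent may be momentarily isolated, so coalescence is only guaranteed weakly per round; quantifying its strength as a function of $C$ and the network diameter is exactly what is deferred to Thm. \ref{order_of_difference_theorem}, whereas here it suffices to establish the qualitative ordering.
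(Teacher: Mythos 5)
Your proposal follows essentially the same route as the paper's proof: both reduce the three-way comparison to the bound $\mathbb{E}\left[||\pi^i_K - \pi^*||_1\right] \leq 2L_{\Gamma_{\eta}}^K + \frac{\chi}{1-L_{\Gamma_{\eta}}} + \xi\sum^{K-1}_{k=1}L_{\Gamma_{\eta}}^{K-k-1}\mathbb{E}\left[\Delta_{k}\right]$, in which the only architecture-dependent quantity is the discounted sum of expected policy divergences (the paper simply imports this display from Thm.~D.9 of \citet{policy_mirror_independent} rather than re-deriving it, as you do, from Lem.~\ref{ctd_theorem} and the contraction of $\Gamma_{\eta}$), and then conclude by ordering the divergences $\Delta_{k,cent} = 0 \leq \mathbb{E}\left[\Delta_{k,C}\right] \leq \mathbb{E}\left[\Delta_{k,0}\right]$. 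If anything you are more careful than the paper on the one delicate step: the paper asserts this ordering from the expected decrease in the number of distinct policies under softmax adoption, whereas you correctly flag that a single copying round can transiently increase the $\ell_1$ divergence and that a coupling or distinct-policy-count surrogate argument is needed to make that inference rigorous.
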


%\subsection{Proof of Thm. \ref{non_random_adoption_theorem}}\label{non_random_adoption_proof}

\begin{proof}

We build off the proof of our Lem. \ref{independent_learning_theorem}, given in Thm. D.9 of \citet{policy_mirror_independent}. There the sample guarantees of the independent case are worse than those of the centralised algorithm as a result of the divergence between the decentralised policies due to the stochasticity of the PMA updates. For an arbitrary policy $\bar{\pi}_k\in\Pi$, for all $k = 0,1,\dots,K$ define the policy divergence as the random variable $\Delta_{k}$ := $\sum_{i=1}^N ||\pi^i_k -\bar{\pi}_k||_1$. We can say that $\Delta_{k,cent} = 0$ $\forall k$ is the divergence in the central-agent case, while in the networked case the policy divergence is $\Delta_{k+1,c}$ after communication round $c \in 1,\dots,C$. The independent case is equivalent to the scenario when $C = 0$, such that its policy divergence can be written $\Delta_{k+1,0}$.

For $\tau_k \in \mathbb{R}_{>0}$, the adoption probability Pr$\left(\mathrm{adopted}^i = \sigma^j_{k+1}\right)$ = $\frac{\exp{(\sigma^j_{k+1}}/\tau_k)}{\sum^{[J^i_t]}_{x=1}\exp{(\sigma^x_{k+1}}/\tau_k)}$ (as in Line \ref{select_theoretical} of Alg. \ref{networked_algorithm}) is higher for some $j \in J^i_t$ than for others. This means that for $c > 0$ for which there are communication links in the population, in expectation the number of unique policies in the population will decrease, as it will likely become that $\pi^{i}_{k+1} = \pi^{j}_{k+1}$ for some $i,j\in \{1,\dots,N\}$. As such, $\Delta_{k+1,cent} \leq \mathbb{E}\left[\Delta_{k+1,C}\right] \leq \mathbb{E}\left[\Delta_{k+1,0}\right]$, i.e. the policy divergence in the independent-learning case is expected to be greater than or equal to that of the networked case. 

The proof of Lem. \ref{independent_learning_theorem} given in Thm. D.9 of \citet{policy_mirror_independent} ends with, for constants $\chi$ and $\xi$, 
%\begin{align*}
    \[\mathbb{E}\left[||\pi^i_K - \pi^*||_1\right] \leq 2L_{\Gamma_{\eta}}^K + \frac{\chi}{1-L_{\Gamma_{\eta}}} + \xi\sum^{K-1}_{k=1}L_{\Gamma_{\eta}}^{K-k-1}\mathbb{E}\left[\Delta_{k}\right],\]
%\end{align*}
where in our context the policy divergence in the independent case $\mathbb{E}\left[\Delta_{k+1}\right]$ is equivalent to $\mathbb{E}\left[\Delta_{k+1,C}\right]$ when $C = 0$, i.e. $\mathbb{E}\left[\Delta_{k+1,0}\right]$.

Thus, for all agents $i \in \{1,\dots,N\}$, the random outputs \{$\pi^{i,net}_K$\}$_i$, \{$\pi^{i,ind}_K$\}$_i$ and $\pi^{cent}_{K}$ satisfy: 
 \[\mathbb{E}\left[||\pi^{i,ind}_K - \pi^*||_1\right] \leq ub_{ind} = 2L_{\Gamma_{\eta}}^K + \frac{\chi}{1-L_{\Gamma_{\eta}}} + \xi\sum^{K-1}_{k=1}L_{\Gamma_{\eta}}^{K-k-1}\mathbb{E}\left[\Delta_{k,0}\right],\]
 \[\mathbb{E}\left[||\pi^{i,net}_K - \pi^*||_1\right] \leq ub_{net} = 2L_{\Gamma_{\eta}}^K + \frac{\chi}{1-L_{\Gamma_{\eta}}} + \xi\sum^{K-1}_{k=1}L_{\Gamma_{\eta}}^{K-k-1}\mathbb{E}\left[\Delta_{k,C}\right],\]
 \[\;\;\;\;\;\mathbb{E}\left[||\pi^{cent}_K - \pi^*||_1\right] \leq ub_{cent} = 2L_{\Gamma_{\eta}}^K + \frac{\chi}{1-L_{\Gamma_{\eta}}} + \xi\sum^{K-1}_{k=1}L_{\Gamma_{\eta}}^{K-k-1}\mathbb{E}\left[\Delta_{k,cent}\right].\]

Since $\Delta_{k+1,cent} \leq \mathbb{E}\left[\Delta_{k+1,C}\right] \leq \mathbb{E}\left[\Delta_{k+1,0}\right]$, we obtain our result, i.e. 
\[ub_{cent}  \; \leq \; ub_{net} \leq \; ub_{ind} \; = \; \varepsilon + \mathcal{O}\left(\frac{1}{\sqrt{N}}\right).\]\end{proof}

%\subsection{Conditional TD Learning from a Single Continuous Run of the Empirical Distribution of $N$ agents}\label{}

We recall the following lemma from \citet{policy_mirror_independent}; we use it in Rem. \ref{qbias} to aid intuitive understanding of the result given in Thm. \ref{non_random_adoption_theorem_full}.

\begin{lemma}[Conditional TD learning from a single continuous run of the empirical distribution of $N$ agents, from Thm. 4.2, \citet{policy_mirror_independent}]\label{ctd_theorem}
Define $t_0 := \frac{16(1+\gamma)^2}{((1 - \gamma)\delta_{mix}p_{inf})^2}$. Assume that Assumption \ref{Sufficient_mixing} holds and let policies \{$\pi^i$\}$_i$ be given such that $\pi^i(a|s) \geq p_{inf}$ $\forall i$. Assume Lines \ref{begin_core_theoretical}-\ref{end_Qlearning_theoretical} of Alg. \ref{networked_algorithm} are run with policies \{$\pi^i$\}$_i$, arbitrary initial agents states \{$s^i_0$\}$_i$, learning rates $\beta_m = \frac{2}{(1 - \gamma)(t_0 + m -1)}$, $\forall m \geq 0$ and $M_{pg} > \mathcal{O}(\varepsilon^{-2})$, $M_{td} > \mathcal{O}(\log \varepsilon^{-1})$. If $\bar{\pi}\in\Pi$ is an arbitrary policy, $\Delta$ := $\sum_{i=1}^N ||\pi^i -\bar{\pi}||_1$ and $Q^*$ := $Q_{h}(\cdot,\cdot|\bar{\pi},\mu_{\bar{\pi}})$, then the random output $\hat{Q}^i_{M_{pg}}$ of Lines \ref{begin_core_theoretical}-\ref{end_Qlearning_theoretical} satisfies
\[\mathbb{E}\left[||\hat{Q}^i_{M_{pg}} - \; Q^{*}||_{\infty}\right] \leq \varepsilon + \mathcal{O}\left(\frac{1}{\sqrt{N}} +\frac{1}{N}\Delta + ||\pi^{i} - \bar{\pi}||_{1}\right).\]

\end{lemma}

\begin{remark}\label{qbias}
It may help to see that our Thm. \ref{non_random_adoption_theorem_full} is a consequence of the following. Denote $\hat{Q}^{i,net}_{M_{pg}}$,  $\hat{Q}^{i,ind}_{M_{pg}}$ and $\hat{Q}^{cent}_{M_{pg}}$ as the random outputs of Lines \ref{begin_core_theoretical}-\ref{end_Qlearning_theoretical} of Alg. \ref{networked_algorithm} in the networked, independent and central-agent cases respectively. In Lem. \ref{ctd_theorem}, we can see that policy divergence gives bias terms in the estimation of the Q-value. Therefore, given $\Delta_{k+1,cent} \leq \mathbb{E}\left[\Delta_{k+1,C}\right] \leq \mathbb{E}\left[\Delta_{k+1,0}\right]$, we can also say 
%\begin{align*}
    \[\mathbb{E}\left[||\hat{Q}^{cent}_{M_{pg}} - Q^{*}||_{\infty}\right] \; \leq \; \mathbb{E}\left[||\hat{Q}^{i,net}_{M_{pg}} - Q^{*}||_{\infty}\right] \\
     \leq \; \mathbb{E}\left[||\hat{Q}^{i,ind}_{M_{pg}} - Q^{*}||_{\infty}\right].\]
%\end{align*}
In other words, the networked case will require the same or fewer outer iterations $K$ to reduce the variance caused by this bias than the independent case requires (where the bias is non-vanishing), and the same or more iterations than the central-agent case requires.
\end{remark}

We now provide a result that shows how the sample guarantees of our networked architecture varies along the spectrum between the central-agent and independent cases depending on the amount of communication that occurs.

\begin{theorem}[Relation between communication network structure and order of difference between the architectures' bounds]\label{order_of_difference_theorem} In addition to the assumptions in Thm. \ref{non_random_adoption_theorem_full}, now also assume that the communication network $\mathcal{G}_t$ remains static and connected during the $C$ communication rounds. Assume also the diameter $d_\mathcal{G}$ of the network is equal for all $k$. Let us set $\tau_k \;\forall k$ as a small positive constant chosen to be sufficiently close to zero that the softmax essentially becomes a max function. Then, for the tight bound big Theta ($\Theta$), we can say that the difference in the upper bounds $ub_{net}$, $ub_{ind}$ and $ub_{cent}$ from Thm. \ref{non_random_adoption_theorem_full} depends on $C$ and the network diameter $d_\mathcal{G}$ as follows (where the `$\approx$' relation comes from the approximate spread of policies through the network as explained in the proof):
\begin{align*}
    ub_{cent} + \Theta\left(f(C, d_\mathcal{G})\right) \;\;\;\; \approx \;\;\;\;  ub_{net} \;\;\;\;  \approx \;\; \;\; ub_{ind} - \Theta\left(1- f(C, d_\mathcal{G})\right),
\end{align*}
for the piecewise function $f(C, d_\mathcal{G})$ defined as 
\[f(C, d_\mathcal{G}) = 
\begin{cases} 
 \left(1 - \frac{1}{d_\mathcal{G}}\right)^C &\text{if } C < d_\mathcal{G},\\  
0 &\text{if } C \geq d_\mathcal{G}
\end{cases}
.\]
%\[ub_{cent} + \Theta\left(\left(\frac{1}{d_\mathcal{G}}\right)^C\right) \;\; \approx \;\; ub_{net} \;\; \approx \;\; ub_{ind} - \Theta\left(1 - \left(\frac{1}{d_\mathcal{G}}\right)^C\right).\]
%\[ub_{net} \approx ub_{ind} - \Theta\left(1 - \left(\frac{1}{d_\mathcal{G}}\right)^C\right), \;\text{and}\]
%\[ub_{net} \approx ub_{cent} + \Theta\left(\left(\frac{1}{d_\mathcal{G}}\right)^C\right).\]
When $C \geq d_\mathcal{G}$, $ub_{net} = ub_{cent}$, so for $C > d_\mathcal{G}$ there is no additional improvement over the centralised bound. Equally when $C=0$, we have exactly $ub_{net} = ub_{ind}$.
\end{theorem} 

\iffalse

\textbf{Proof sketch}. 
The difference in the architectures' convergence rates depends on the different amounts of divergence between the population's policies. In a static, connected network, max-consensus (i.e. the population shares a single policy) is reached when the number of communication rounds $C$ is equal to the diameter $d_\mathcal{G}$, giving 0 divergence as in the centralised case. The rate of \textit{divergence} decrease is \textit{approximately} $\frac{1}{d_\mathcal{G}}$ for each of the $C$ rounds as the \textit{number} of policies decreases at this rate, giving the exponential relation involving $C$ and $d_\mathcal{G}$. Full proof in Sec. \ref{order_of_difference_proof}.

See Rems. \ref{dynamic_graph_remark} and \ref{non_static_tau_remark} in Sec. \ref{order_of_difference_proof} for comments on loosening the assumptions on the nature of the communication network and the assumption that $\tau_k \rightarrow 0 \;\forall k$.

\fi 

%\subsection{Proof of Thm. \ref{order_of_difference_theorem}}\label{order_of_difference_proof}

\begin{proof}
From the proof of Thm. \ref{non_random_adoption_theorem_full} %in Sec. \ref{non_random_adoption_proof} 
we have: 
 \[\mathbb{E}\left[||\pi^{i,ind}_K - \pi^*||_1\right] \leq ub_{ind} = 2L_{\Gamma_{\eta}}^K + \frac{\chi}{1-L_{\Gamma_{\eta}}} + \xi\sum^{K-1}_{k=1}L_{\Gamma_{\eta}}^{K-k-1}\mathbb{E}\left[\Delta_{k,0}\right],\]
 \[\mathbb{E}\left[||\pi^{i,net}_K - \pi^*||_1\right] \leq ub_{net} = 2L_{\Gamma_{\eta}}^K + \frac{\chi}{1-L_{\Gamma_{\eta}}} + \xi\sum^{K-1}_{k=1}L_{\Gamma_{\eta}}^{K-k-1}\mathbb{E}\left[\Delta_{k,C}\right],\]
 \[\;\;\;\;\;\mathbb{E}\left[||\pi^{cent}_K - \pi^*||_1\right] \leq ub_{cent} = 2L_{\Gamma_{\eta}}^K + \frac{\chi}{1-L_{\Gamma_{\eta}}} + \xi\sum^{K-1}_{k=1}L_{\Gamma_{\eta}}^{K-k-1}\mathbb{E}\left[\Delta_{k,cent}\right].\]
 
 Say that $\sigma^{\mathrm{max}}_{k+1}$ is the highest $\sigma^{i}$ value in the population before the communication rounds at ${k+1}$. With a static, connected network and $\tau_k$ close to 0 for all $k$, max-consensus will always be reached on $\sigma^{\mathrm{max}}_{k+1}$ after $C = d_\mathcal{G}$ communication rounds, such that $\Delta_{k,cent} = \Delta_{k,d_\mathcal{G}} = 0$ \citep{5348437}. The convergence rate of the max-consensus algorithm is $\frac{1}{d_\mathcal{G}}$ \citep{5348437}, i.e. there is a decrease in the \textit{number of policies in the population} by a factor of \textit{approximately} $\frac{1}{d_\mathcal{G}}$ with each communication round up to $C = d_\mathcal{G}$, and therefore there is also a decrease in the \textit{policy divergence} $\mathbb{E}\left[\Delta_{k,c}\right]$ by a factor of approximately $\frac{1}{d_\mathcal{G}}$ with each communication round. Thus 
 \[ \mathbb{E}\left[\Delta_{k,c+1}\right] \approx \mathbb{E}\left[\Delta_{k,c}\right] - \left(\mathbb{E}\left[\Delta_{k,c}\right]  \times \frac{1}{d_\mathcal{G}}\right) \text{, simplifying to}\]
\[\mathbb{E}\left[\Delta_{k,c+1}\right] \approx \mathbb{E}\left[\Delta_{k,c}\right] \times \left(1 - \frac{1}{d_\mathcal{G}}\right).\] 
By induction
\[\mathbb{E}\left[\Delta_{k,C}\right] \approx \mathbb{E}\left[\Delta_{k,0}\right] \times \left(\left(1 - \frac{1}{d_\mathcal{G}}\right)^C\right),\]
however, we know that $\Delta_{k,d_\mathcal{G}} = 0$, so we can more accurately use the piecewise function $f(C, d_\mathcal{G})$, defined as:
\[f(C, d_\mathcal{G}) = 
\begin{cases} 
\left(1 - \frac{1}{d_\mathcal{G}}\right)^C &\text{if } C < d_\mathcal{G},\\
0 &\text{if } C \geq d_\mathcal{G}
\end{cases}, 
\]
giving
\[\mathbb{E}\left[\Delta_{k,C}\right] \approx \mathbb{E}\left[\Delta_{k,0}\right] \times f(C, d_\mathcal{G}).\]
 %\[ \mathbb{E}\left[\Delta_{k,c+1}\right] \approx \mathbb{E}\left[\Delta_{k,c}\right] \times \frac{1}{d_\mathcal{G}}\text{, \;giving} \]
 %\[
 %\mathbb{E}\left[\Delta_{k,C}\right] \approx \mathbb{E}\left[\Delta_{k,0}\right] \times \left(\frac{1}{d_\mathcal{G}}\right)^C.
 %\]
 We can therefore also say: 
  \[ub_{ind} = 2L_{\Gamma_{\eta}}^K + \frac{\chi}{1-L_{\Gamma_{\eta}}} + \xi\sum^{K-1}_{k=1}L_{\Gamma_{\eta}}^{K-k-1}\mathbb{E}\left[\Delta_{k,0}\right],\]
 \[ub_{net} \approx 2L_{\Gamma_{\eta}}^K + \frac{\chi}{1-L_{\Gamma_{\eta}}} + \xi\sum^{K-1}_{k=1}L_{\Gamma_{\eta}}^{K-k-1}\mathbb{E}\left[\Delta_{k,0}\right] \times f(C, d_\mathcal{G}),\] 
 \[\;\;\;\;\;ub_{cent} = 2L_{\Gamma_{\eta}}^K + \frac{\chi}{1-L_{\Gamma_{\eta}}}.\]
We therefore firstly have 
\[ub_{ind} - ub_{net} \approx \xi\sum^{K-1}_{k=1}L_{\Gamma_{\eta}}^{K-k-1}\mathbb{E}\left[\Delta_{k,0}\right] - \xi\sum^{K-1}_{k=1}L_{\Gamma_{\eta}}^{K-k-1}\mathbb{E}\left[\Delta_{k,0}\right] \times f(C, d_\mathcal{G}),\]
which simplifies to 
\[ub_{ind} - ub_{net} \approx \xi\sum^{K-1}_{k=1}L_{\Gamma_{\eta}}^{K-k-1}\mathbb{E}\left[\Delta_{k,0}\right] \times \left(1 - f(C, d_\mathcal{G})\right).\]
This gives us one of the results, where we focus on the functional dependence on $C$ and $d_\mathcal{G}$ by using the tight bound big Theta ($\Theta$): 
\[ub_{net} \approx ub_{ind} - \Theta\left(1 - f(C, d_\mathcal{G})\right).\]
Secondly, we have
\[ub_{net} \approx ub_{cent} + \xi\sum^{K-1}_{k=1}L_{\Gamma_{\eta}}^{K-k-1}\mathbb{E}\left[\Delta_{k,0}\right] \times f(C, d_\mathcal{G}),\]
 giving us the second result 
\[ub_{net} \approx ub_{cent} + \Theta\left(f(C, d_\mathcal{G})\right).\]
\end{proof}

\begin{remark}
    If it is always $\sigma^1_{k+1}$ and $\pi^{1}_{k+1}$ that is adopted by the whole population (i.e. $i=1$), then this is exactly the same as the central-agent case. If the $\sigma^j_{k+1}$ and $\pi^{j}_{k+1}$ that gets adopted has different $j$ for each $k$, then this is akin to a version of the central-agent setting where the index of the representative learning agent may differ for each $k$.
\end{remark}

\begin{remark}\label{dynamic_graph_remark}
    Thm. \ref{order_of_difference_theorem} depends on the assumptions that the communication network is static and fixed, and has the same diameter $d_\mathcal{G}$ for all $k$. If we instead only assume that the network repeatedly becomes jointly connected, we can replace $d_\mathcal{G}$ in the results in Thm. \ref{order_of_difference_theorem} with $d_{avg}\cdot\omega$, namely the average diameter of the union of each jointly connected collection of graphs multiplied by the average number $\omega$ of graphs in each jointly connected collection. As noted in Rem. \ref{generalisation_remark}, max-consensus is reached if $C$ is large enough that the number of jointly connected collections of graphs occurring within $C$ is equal to the largest diameter of the union of any collection. This is equivalent to the central-agent case; there is no added benefit to higher values of $C$ than this. 
\end{remark}

\begin{remark}\label{non_static_tau_remark}
    Thm. \ref{order_of_difference_theorem} assumes $\tau_k$ is a small positive value close to 0 such that the softmax function becomes a max function. If we assume instead $\tau_k \in \mathbb{R}_{>0}$ is not close to 0 such that the softmax function is less peaked, then we have $ub_{net}$ $\rightarrow$ $ub_{ind}$ as $C \rightarrow 0$, and   $ub_{net} \rightarrow$ $ub_{cent}$ as $C \rightarrow \infty$. This is because the spread of policies is now probabilistic rather than deterministic, and depends on the interplay of $\tau_k$ with how large are the differences in the received values of $\sigma^j_{k+1}$. Therefore consensus (and hence reduction in divergence between policies) is reached only asymptotically. This applies to both static, connected networks and to repeatedly jointly connected ones, assuming the latter becomes jointly connected infinitely often.
\end{remark}

\iffalse
\begin{remark}\label{spectrum_remark}
\textcolor{red}{The expected sample complexity of the centralised algorithm is known to be significantly better than that of the independent case \citep{policy_mirror_independent}. Where the expected sample complexity of the networked case lies on the spectrum between the centralised and independent cases depends on how much closer the divergence between the networked agents' policies is to 0, as in the centralised case, than it is to the divergence in the independent case (Sec. \ref{non_random_adoption_proof}). This in turn depends on the interaction between the parameters $\tau_k$ and $C$, and also on the structure of the time-varying network, i.e. how frequently it becomes jointly connected. See Rem. \ref{spectrum_remark_continued} in Sec. \ref{spectrum_remark_full_section} for further details.}
\end{remark}
\fi

%\subsection{Policy-Update Stability Guarantee}\label{Consecutive_stability_guarantee_section}
%\setcounter{theorem}{3}

For completeness, we finally give a stability guarantee that follows from the earlier theorems.

\begin{theorem}[Policy-update stability guarantee]\label{Consecutive_stability_guarantee_theorem}
Let Alg. \ref{networked_algorithm} run as per Thm. \ref{random_adoption_theorem_full} or Thms. \ref{non_random_adoption_theorem_full}/\ref{order_of_difference_theorem}, and say that $\varepsilon_k$ is the error term at iteration $k =\frac{\log8\varepsilon_k^{-1}}{\log L^{-1}_{\Gamma_\eta}}$. For all agents $i$, the maximum possible distance between $\pi^{i,net}_{k}$ and $\pi^{i,net}_{k+1}$ is given by $\mathbb{E}\left[||\pi^{i,net}_k - \pi^{i,net}_{k+1}||_1\right]  \leq  \varepsilon_{k} + \varepsilon_{k+1} + \mathcal{O}\left(\frac{1}{\sqrt{N}}\right)$. This bound provides a stability guarantee during the learning process; moreover the bound shrinks with each successive $k$ since $\varepsilon_{k}$ decreases with $k$. Equivalent analysis can also be conducted for both the centralised and independent cases.\end{theorem}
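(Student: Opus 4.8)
The plan is to derive the stability bound directly from the two convergence theorems by inserting the equilibrium policy $\pi^*$ as a common reference point and applying the triangle inequality. First I would observe that both Thm.~\ref{random_adoption_theorem} and Thm.~\ref{non_random_adoption_theorem} in fact deliver a \emph{per-iteration} guarantee rather than merely a terminal one: the recursion underlying the proof of Lem.~\ref{independent_learning_theorem} (Thm.~D.9 of \citet{policy_mirror_independent}) bounds $\mathbb{E}[\|\pi^{i,net}_k - \pi^*\|_1]$ at every step $k$, and the stated relation $k = \frac{\log 8\varepsilon_k^{-1}}{\log L_{\Gamma_\eta}^{-1}}$ is exactly the identification of the residual error at step $k$, equivalently $\varepsilon_k = 8 L_{\Gamma_\eta}^{k}$. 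Thus for each $k$ (and for $k$ and $k+1$ in particular) we have $\mathbb{E}[\|\pi^{i,net}_k - \pi^*\|_1] \leq \varepsilon_k + \mathcal{O}(1/\sqrt{N})$, with the $\mathcal{O}(1/\sqrt{N})$ term independent of $k$.

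The central step is then a single application of the triangle inequality in the $\|\cdot\|_1$ norm,
\[
\|\pi^{i,net}_k - \pi^{i,net}_{k+1}\|_1 \;\leq\; \|\pi^{i,net}_k - \pi^*\|_1 + \|\pi^* - \pi^{i,net}_{k+1}\|_1,
\]
followed by taking expectations and using linearity. Substituting the per-step bound at iterations $k$ and $k+1$ gives $\mathbb{E}[\|\pi^{i,net}_k - \pi^{i,net}_{k+1}\|_1] \leq (\varepsilon_k + \mathcal{O}(1/\sqrt{N})) + (\varepsilon_{k+1} + \mathcal{O}(1/\sqrt{N}))$, and collapsing the two identical-order residuals into one yields the claimed $\varepsilon_k + \varepsilon_{k+1} + \mathcal{O}(1/\sqrt{N})$. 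The shrinkage assertion is then immediate from $\varepsilon_k = 8 L_{\Gamma_\eta}^{k}$ with $L_{\Gamma_\eta} < 1$ (Lem.~\ref{learning_operator_lipschitz_lemma}), since both $\varepsilon_k$ and $\varepsilon_{k+1}$ decrease geometrically in $k$.

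To extend to the centralised and independent cases I would repeat the identical argument using their respective per-step convergence bounds; the only change is the value of the divergence term $\Delta_k$ (zero in the centralised case, larger in the independent one, per the analysis in Appx.~\ref{non_random_adoption_proof}), which affects the constants but not the structure $\varepsilon_k + \varepsilon_{k+1} + \mathcal{O}(1/\sqrt{N})$.

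The main obstacle is not any hard estimate but rather the bookkeeping needed to justify the \emph{intermediate}-iteration bound: the cited theorems are phrased with a designated terminal index, so I must confirm that the same $\varepsilon_k$-versus-$k$ relation and the same $\mathcal{O}(1/\sqrt{N})$ term hold uniformly along the run, and that the two residual terms genuinely share one asymptotic order in $N$ so that their sum remains $\mathcal{O}(1/\sqrt{N})$. Both facts follow from the step-wise recursive form of the underlying proof, so once that observation is recorded the remainder is routine.
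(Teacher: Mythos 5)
Your proposal is correct and takes essentially the same route as the paper: the paper's argument that $\pi^i_k$ and $\pi^i_{k+1}$ lie in balls centred on $\pi^*$ with radii $\varepsilon_k + \mathcal{O}(1/\sqrt{N})$ and $\varepsilon_{k+1} + \mathcal{O}(1/\sqrt{N})$, so their maximum distance is the sum of these radii, is exactly your triangle inequality through $\pi^*$ followed by summing the per-iteration bounds. If anything, your version is slightly more careful, since you make explicit both the passage from pointwise triangle inequality to expectations and the (paper-implicit) fact that the convergence theorems yield bounds at every intermediate iteration $k$ with the corresponding $\varepsilon_k$, not only at the terminal index $K$.
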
 

\begin{proof} Thms. \ref{random_adoption_theorem_full}, \ref{non_random_adoption_theorem_full} and \ref{order_of_difference_theorem} bound the difference between each agent’s current policy $\pi^i_{k}$ and the unique equilibrium policy $\pi^*$, with the difference depending on the bias term $\varepsilon_k$ that relates to the iteration $k$ as indicated. Policies $\pi^i_{k}$ and $\pi^i_{k+1}$ fall within balls centred on $\pi^*$ with radii of $\varepsilon_{k} + \mathcal{O}\left(\frac{1}{\sqrt{N}}\right)$ and $\varepsilon_{k+1} + \mathcal{O}\left(\frac{1}{\sqrt{N}}\right)$ respectively. This means that the maximum possible distance between $\pi^i_{k}$ and $\pi^i_{k+1}$ is the sum of these radii, i.e. $\mathbb{E}\left[||\pi^{i}_k - \pi^i_{k+1}||_1\right]  \leq  \varepsilon_{k} + \varepsilon_{k+1} + \mathcal{O}\left(\frac{1}{\sqrt{N}}\right)$, giving the result.\end{proof}

\section{Practical modifications to theoretical algorithms for empirical use}\label{practical_enhancements}

%The theoretical analysis in Sec. \ref{properties} requires algorithmic hyperparameters that render learning impractically slow in all of the centralised, independent and networked cases (see Rem. \ref{hyperparameters_remark}, Sec. \ref{hyperparameters_remark_section}). 
%\subsection{Remark on Theoretical Hyperparameters when Used in Practical Settings}\label{hyperparameters_remark_section}

%\begin{remark}\label{hyperparameters_remark}
The theoretical analysis in Sec. \ref{properties} %and Sec. \ref{Full_theorems_and_proofs} 
requires algorithmic hyperparameters (see Thm. \ref{random_adoption_theorem_full}) that render convergence impractically slow in all of the central-agent, independent and networked cases.  {In particular, the values of $\delta_{mix}$ and $p_{inf}$ give rise to very large $t_0$, causing very small learning rates $\{\beta_m\}_{m \in \{0,\dots,M_{pg}-1\}}$, and necessitating very large values for $M_{td}$ and $M_{pg}$%(Thms. D.8,D.9, \citet{policy_mirror_independent})
.} Indeed \citet{policy_mirror_independent} do not provide empirical demonstrations of their algorithms for the central-agent and independent cases. Our experiments in Sec. \ref{ablation_buffer_experiment_section} show that with these algorithms, none of the architectures appear to improve their returns at all without extremely high numbers of inner loops that would take impractically long to run on standard computers, taking many days or even many weeks. 
%\end{remark}

For convergence of the algorithms in practical time, we seek to drastically increase $\{\beta_m\}_m$ and reduce $M_{td}$ and $M_{pg}$. We found empirically that the two algorithmic enhancements below helped achieve feasible convergence times with significantly reduced numbers of loops. The first involves recycling transitions using a buffer, and the second gives a principled way of selecting $\sigma^{i}_{k+1}$ in Line \ref{start_comm_theoretical} in Alg. \ref{networked_algorithm}.

\begin{algorithm}[t]
    \caption{Networked learning with experience replay and performance-related generation of $\sigma^{i}_{k+1}$} \label{networked_algorithm_experience_replay}
    \begin{algorithmic}[1]
    \REQUIRE loop parameters $K, M_{pg}, M_{td}, C$, \textcolor{blue}{$L$}, \textcolor{orange}{E}, learning parameters
    $\eta$, \textcolor{blue}{$\beta$}, $\lambda, \gamma$, $\{\tau_k\}_{k \in \{0,\dots,K-1\}}$
    
    \REQUIRE initial states \{$s^i_0$\}$_{i=1}^{N}$
    \STATE Set $\pi^i_0 = \pi_{\mathrm{max}}, \forall i$ and $t \leftarrow 0$ \label{init_policy}
    \FOR{$k = 0,\dots,K-1$}
        \STATE $\forall s,a,i : \hat{Q}^i_0 (s,a) = Q_{\mathrm{max}}$ 
        \color{blue}
        \STATE $\forall i$: Empty $i$'s buffer \label{empty_buffer}
        \color{black}
        \FOR{$m = 0,\dots,M_{pg}-1$}\label{mpg_loop}
            \FOR{$M_{td}$ iterations}\label{mtd_loop}
                \STATE Take step $\forall i : a^i_t \sim \pi^i_{k}(\cdot|s^i_t), r^i_{t} = R(s^i_{t},a^i_{t},\hat{\mu}_t),s^i_{t+1} \sim P(\cdot|s^i_{t},a^i_{t},\hat{\mu}_t)$; $t \leftarrow t + 1$
                %\STATE $t \leftarrow t + 1$
            \ENDFOR
            \color{blue}
            \STATE $\forall i$: Add $\zeta^i_{t-2}$ to $i$'s buffer \label{store_transition}
            \color{black}
        \ENDFOR 

        \color{blue}
        \FOR{$l = 0,\dots,L-1$}\label{beginlearningwithbuffer}
            \STATE $\forall i : $ Shuffle buffer
            \FOR{transition $\zeta^i_b$ in $i$'s buffer ($\forall i$)} 
                \STATE Compute TD update ($\forall i$): $\hat{Q}^i_{m+1} = \tilde{F}^{\pi^i_k}_{\beta}(\hat{Q}^i_{m},\zeta^i_{t-2})$ (see Def. \ref{CTD_operator})
            \ENDFOR
        \ENDFOR \label{endlearningwithbuffer}
        \color{black}
        
        \STATE PMA step $\forall i : \pi^i_{k+1} = \Gamma^{md}_\eta(\hat{Q}^i_{M_{pg}},\pi^i_k)$ (see Def. \ref{PMA_operator}) \label{pma_line_replay_algorithm}
        %\STATE PMA step $\forall i : $\pi^i_{k+1}(s) = \underset{u \in \mathcal{U}_{L_h}}{\arg\max}\langle u,\hat{Q}^i_{M_{pg}}(s,\cdot)\rangle + h(u) - \frac{1}{2\eta}||u - \pi^i_k(s)||^2_2, \forall s \in \mathcal{S}.$
        \color{orange}\STATE $\forall i :$ $\sigma^i_{k+1} \gets 0$
        \FOR{$e = 0,\dots,E-1$ evaluation steps}\label{evaluation_begin}
        \STATE Take step $\forall i : a^i_t \sim \pi^i_{k+1}(\cdot|s^i_t), r^i_{t} = R(s^i_{t},a^i_{t},\hat{\mu}_t),s^i_{t+1} \sim P(\cdot|s^i_{t},a^i_{t},\hat{\mu}_t)$
        \STATE $\forall i :$ $\sigma^i_{k+1} \gets \sigma^i_{k+1} + \gamma^{e}(r^i_{t}+h(\pi^i_{k+1}(s^i_t)))$ 
        \STATE $t \leftarrow t + 1$
        \ENDFOR\label{evaluation_end}
        \color{black}
        \FOR{$C$ rounds}
            \STATE $\forall i :$ Broadcast $\sigma^i_{k+1}, \pi^{i}_{k+1}$
            
            \STATE $\forall i : J^i_t = i \cup \{j \in \mathcal{N} : (i,j) \in \mathcal{E}_{t}$\} 
            
           \STATE $\forall i:$ Select $\mathrm{adopted}^i \sim$ Pr$\left(\mathrm{adopted}^i = j\right)$ = $\frac{\exp{(\sigma^j_{k+1}}/\tau_k)}{\sum_{x\in J^i_t}\exp{(\sigma^x_{k+1}}/\tau_k)}$ $\forall j \in J^i_t$   \label{select}
            
            \STATE $\forall i : \sigma^i_{k+1} \leftarrow \sigma^{\mathrm{adopted}^i}_{k+1}, \pi^i_{k+1} \leftarrow \pi^{\mathrm{adopted}^i}_{k+1}$
            
            \STATE Take step $\forall i : a^i_t \sim \pi^i_{k+1}(\cdot|s^i_t), r^i_{t} = R(s^i_{t},a^i_{t},\hat{\mu}_t),s^i_{t+1} \sim P(\cdot|s^i_{t},a^i_{t},\hat{\mu}_t)$; $t \leftarrow t + 1$\label{stepInComm}
            %\STATE $t \leftarrow t + 1$
            
            \ENDFOR\label{end_comm}

    \ENDFOR
    \STATE \textbf{return} policies \{$\pi^i_K$\}$_{i=1}^{N}$
    \end{algorithmic}
\end{algorithm}

\subsection{Algorithm acceleration by use of experience-replay buffer}\label{buffer_section}

We modify our Alg. \ref{networked_algorithm} (and accordingly the algorithms of the two non-networked architectures) as follows, shown in \textcolor{blue}{\textit{blue}} in Alg. \ref{networked_algorithm_experience_replay}. Instead of using a transition $\zeta^i_{t-2}$ to compute the TD update within each $M_{pg}$ iteration and then discarding the transition, we store the transition in a buffer (Line \ref{store_transition}) until after the $M_{pg}$ loops. {Replay buffers are a common (MA)RL tool used especially with deep learning, precisely to improve data efficiency and reduce autocorrelation \citep{original_buffer,revisiting_replay,xu2024higher}.} %\textcolor{red}{[replay buffer for MARL: \citep{xu2024higher}]} 
When learning does take place in our modified algorithm (Lines \ref{beginlearningwithbuffer}-\ref{endlearningwithbuffer}), it involves cycling through the buffer for $L$ iterations - randomly shuffling the buffer between each - and thus conducting the TD update on each stored transition $L$ times. This allows us to {reduce the number of $M_{pg}$ loops}, as well as not requiring as small a learning rate $\{\beta_m\}_m$, allowing much faster learning in practice. Moreover, by shuffling the buffer before each cycle we reduce bias resulting from the dependency of samples along the continued, non-episodic system run, which may justify being able to {achieve adequate stable learning even when reducing the number of $M_{td}$ waiting steps} within each $M_{pg}$ loop (Sec. \ref{discussion}). 

Our replay buffer allows the first practical demonstrations of all three architectures for learning from a single continued system run - all of our experiments after those in Sec. \ref{ablation_buffer_experiment_section} use the buffer, with which we can see that learning can now occur. %Without it, the empirical learning of our original algorithm is too slow for practical demonstration, as also in the centralised and independent cases - see the ablation study in Sec. \ref{ablation_buffer_experiment_section}. %Shuffling may also help to loosen the ergodicity assumption necessary for the sample guarantees given in \citep{policy_mirror_independent}, which would not hold for some policies in certain settings which lead to periodicity in the Markov chain \citep{zaman2023oraclefree}.
%\section{Algorithm Acceleration by Use of Experience-Replay Buffer (Further Details)}\label{algorithmInAppendix}
%\citep{policy_mirror_independent} did not conduct numerical testing of their algorithms for these settings, which we suspect may have been due to the very small learning rates and consequently high sample complexities necessitated by their theoretical proofs, even in the centralised case. We found the algorithms intractable in practice, and so in order to compare them with our networked setting, we first make some modifications to improve their empirical feasibility. Foremost among these innovations is that we temporarily store transitions in a replay buffer that can be cycled through multiple times to make learning more efficient. We do not provide updated analysis of their theoretical sample guarantees for these altered algorithms, and show instead the tractability of learning through our experiments. As such we show empirically, for the first time, the possibility of mean-field learning using \textit{N} agents along a single sample path not just in the networked setting, but also the centralised and independent settings.
The intuition behind the better learning efficiency resulting from the buffer is as follows. The value of a state-action pair $p$ is dependent on the values of subsequent states reached, but the value of $p$ is only updated when the TD update is conducted on $p$, rather than every time a subsequent pair is updated. By learning from each stored transition multiple times, we not only make repeated use of the reward and transition information in each costly experience, but also repeatedly update each state-action pair in light of its likewise updated subsequent states. 

\begin{remark}
    Our introduction of the replay buffer means that the specific sample guarantees given in our theoretical results no longer apply. This is because these assume learning from a single stream of samples that are discarded straight after their first and only use, and these results do not account for temporary storage and shuffled reuse of samples. We do not directly update the sample guarantees in light of this algorithmic modification, and leave such additional proofs to future work. However we emphasise here that \textit{we expect the relative performances of the architectures to be preserved}, i.e. the central-agent architecture learns as fast or faster than the networked one, which in turn learns as fast or faster than the independent one. This is because, although the use of samples in learning has changed, the underlying machinery that drives the difference in performance between the architectures has not. The independent architecture will still have worse sample guarantees than the central-agent one due to bias caused by policy divergence, and our networked communication and adoption can reduce this divergence depending on network structure and the number of communication rounds, as before. In summary, \textit{the conceptual gap between our theoretical and empirical algorithms is minimal, and our theoretical results still give heuristic insight to explain our experimental results that use the buffer}, which show networked populations outperforming independent ones while underperforming or performing similarly to the central-agent populations, as predicted by our original theory. %but the replay buffer and reduced numbers of loops do break the theoretical guarantees above, which we trade off for practical convergence. Future works lies in updating the guarantees in light of the practical enhancements. 
\end{remark}

We leave $\beta$ fixed across all iterations, as we found empirically that this yields sufficient learning. We have not experimented with decreasing $\beta$ as $l$ increases, though this may benefit learning. The transitions in the buffer are discarded after the replay cycles and a new buffer is initialised for the next iteration $k$, as in Line \ref{empty_buffer}. As such the space complexity of the buffer only grows linearly with the number of $M_{pg}$ iterations within each outer loop $k$, rather than with the number of $K$ loops.

%\textcolor{red}{Our theoretical analysis of online learning from a single system run relies on mixing assumptions \cite[Assumptions 3,4]{policy_mirror_independent}. Excitation and mixing constants given in these assumptions lead to very small learning rates $\{\beta_m\}_{m \in \{0,\dots,M_{pg}-1\}}$ (\cite[Thm. 4.2]{policy_mirror_independent}), which in turn results in a number of TD updates - and hence iterations of the $M_{pg}$ loop (Line \ref{mpg_loop}, Alg. \ref{networked_algorithm}) - that is very large. For one of our experimental environments, for example, we calculated that $M_{pg}$ would need to be of the order \textbf{e+17}, even in the centralised case. Recall additionally that, within each $M_{pg}$ iteration,  there is also an inner loop of $M_{td}$ iterations (Line \ref{mtd_loop}, Alg. \ref{networked_algorithm}), which can further increase by several orders of magnitude the total number of environment steps required. Overall, these requirements, which are  derived from the theoretical bounds, render the algorithm's convergence impractically slow. }

\subsection{Generation of $\sigma^{i}_{k+1}$}\label{practical_generation}

Reducing the number of loops in the hope of achieving practical convergence times can lead to poorer estimation of the Q-function $\hat{Q}^i_{M_{pg}}$, and hence a greater variance in the quality of the updated policies $\pi^i_{k+1}$. This problem will increase with the size of the state and action spaces. In such cases we found empirically that an appropriate method for generating $\sigma^{i}_{k+1}$ dependent on $\pi^{i}_{k+1}$ allows our networked algorithm to significantly outperform the independent case by advantageously biasing the spread of particular policies. This is instead of generating $\sigma^{i}_{k+1}$ arbitrarily %or independently of $\pi^{i}_{k+1}$ 
as required in the theoretical settings in Sec. \ref{properties}.

We do so via the steps added in \textcolor{orange}{\textit{orange}} in Alg. \ref{networked_algorithm_experience_replay}, which replace Line \ref{start_comm_theoretical} in Alg. \ref{networked_algorithm}: for $\boldsymbol\pi_{k+1}$ := ($\pi^1_{k+1},\dots,\pi^N_{k+1}$), we set $\sigma^{i}_{k+1}$ to a finite-step approximation $\widehat{\Psi}^i_{h,k+1}(\boldsymbol\pi_{k+1},\upsilon_0)$ of the discounted return $\Psi^i_{h,k+1}(\boldsymbol\pi_{k+1},\upsilon_0)$ (Def. \ref{psi_definition}). The approximation is given by, $\forall i,j \in \{1,\dots,N\}$ \[\widehat{\Psi}^i_{h,k+1}(\boldsymbol\pi_{k+1},\upsilon_0) =  \left[\sum^{E}_{e=0}\gamma^{e}(R(s^i_t,a^i_t,\hat{\mu}_t) \right.+h(\pi^i(s^i_t)))\left|\substack{t = t + e \\ a^j_t\sim \pi^j_{k+1}(s^j_t)\\ s^j_{t+1} \sim P(\cdot|s^j_{t},a^j_{t},\hat{\mu}_t)}\right].\] This is calculated by tracking each agent's discounted return for $E$ evaluation steps (Lines \ref{evaluation_begin}-\ref{evaluation_end}).

Generating $\sigma^{i}_{k+1}$ in this way means policies that are more likely to spread through the network are those estimated to receive a higher return in reality, despite being generated from poorly estimated Q-functions, biasing the population towards faster learning. %This explains why our networked method can in practice outperform even the centralised case, where the updated policy of the arbitrary agent $i = 1$ %with arbitrary index $i = 1$ gets pushed to all other agents regardless of its quality (see Sec. \ref{networked_outperforms_centralised} for a more formal explanation).
%likewise it can also outperform the centralised setting where $\pi^i_{k+1} = \pi^1_{k+1}$ $\forall i$, i.e. it depends on the quality of the central learning agent's estimation of the Q-function $\hat{Q}^{cent}_{M_{pg}} = \hat{Q}^{1}_{M_{pg}}$.}
Naturally the quality of the finite-step approximation depends on the number of evaluation steps $E$, but we found empirically that $E$ can be much smaller than $M_{pg}$ and still give marked convergence benefits.

\section{Experiments}\label{experiments}

Our technical contribution of the replay buffer to MFG algorithms for online learning from non-episodic system runs allows us also to {contribute the first empirical demonstrations of these algorithms, not just in the networked case but also in the central-agent and independent cases}. The latter two serve as baselines to show the advantages of the networked architecture. % in terms of both reducing the number of learning iterations required for convergence, and also of system robustness to failure by learners or other unanticipated population changes. 
%We present the first practical demonstrations of mean-field learning with empirical distributions from single trajectories. We compare our networked communication architecture given in Alg. \ref{networked_algorithm_experience_replay} with the centralised and independent settings from Algorithms 2 and 3 respectively from \citep{policy_mirror_independent}, which we have modified to display practical convergence speed, as described in Sec. \ref{buffer_section}. Our experiments will show the benefits of the new networked algorithm on both convergence speed and overall robustness. 
Experiments were conducted on a MacBook Pro, Apple M1 Max chip, 32 GB, 10 cores. We use \texttt{scipy.optimize.minimize} (employing Sequential Least Squares Programming) to conduct the optimisation step in Def. \ref{PMA_operator}, and the JAX framework to accelerate and vectorise some elements of our code. For reproducibility, our code is included in the publicly available Supplementary Material.

\subsection{Games}\label{objectives_appendix}

We follow the gold standard in prior works on stationary MFGs regarding the types of game demonstrated: we focus on grid-world environments where agents can move in the four cardinal directions or remain in place \citep{lauriere2021numerical,scalable_deep,zaman2023oraclefree,algumaei2023regularization,cui2023multiagent,wu2024populationaware}. While this type of experiment is characteristic of similar MFG works, we recognise that these are simple games. They nevertheless serve as useful preliminary demonstrations of the validity of our algorithms and the considerations necessary for achieving practical learning; we leave experiments in more complex environments to future work, which would likely require extending the algorithms to handle non-tabular Q-functions. Moreover, grid-world environments naturally reflect the deployed, spatial applications in which we are interested in our setting, where agents learn online and communicate with neighbours on a network (which is likely to be defined spatially, though is not restricted to such a case). 

We conduct numerical tests with two tasks (defined by the agents' reward functions), chosen for being particularly amenable to intuitive understanding of whether the agents are learning behaviours that are appropriate and explainable for the respective objective functions. In all cases, rewards are normalised in [0,1] after they are computed.

\paragraph{Cluster.} 

This is the inverse of the `exploration' game in \citep{scalable_deep}, where in our case agents are encouraged to gather together by the reward function $R(s^i_{t},a^i_{t},\hat{\mu}_t) =$ log$(\hat{\mu}_t(s^i_{t}))$. That is, agent $i$ receives a reward that is logarithmically proportional to the fraction of the population that is co-located with it at time $t$. We give the population no indication where they should cluster, agreeing this themselves over time.

%\textbf{Target agreement.} The agents are rewarded for visiting any of a given number of targets, but their reward is proportional to the number of other agents co-located at that target. The agents must therefore coordinate on which single target they will all meet at to maximise their individual rewards.
\paragraph{Agree on a single target.} Unlike in the above `cluster' game, the agents are given options of locations at which to gather, and they must reach consensus among themselves. If the agents are co-located with one of a number of specified targets $\phi \in \Phi$ (in our experiments we place one target in each of the four corners of the grid), and other agents are also at that target, they get a reward proportional to the fraction of the population found there; otherwise they receive a penalty of -1. In other words, the agents must coordinate on which of a number of mutually beneficial points will be their single gathering place. The reward function is given by $R(s^i_{t},a^i_{t},\hat{\mu}_t) = r_{targ}(r_{collab}(\hat{\mu}_t(s^i_{t})))$, where \[r_{targ}(x)=\begin{cases}
          x \quad &\text{if} \, \exists \phi \in \Phi \text{ s.t. dist}(s^i_{t},\phi) = 0 \\
          -1 \quad &  \text{otherwise,} \\
     \end{cases} \]
\[r_{collab}(x)=\begin{cases}
          x \quad &\text{if} \, \hat{\mu}_t(s^i_{t}) > 1/N \\
          -1 \quad &  \text{otherwise.} \\
     \end{cases}
\]

These are both coordination games, where selfish agents can increase their individual rewards by following the same strategy as others and therefore have an incentive to communicate policies. Moreover, they require more sophisticated solutions than the dispersal/exploration games often considered in similar MFG works \citep{scalable_deep,zaman2023oraclefree,wu2024populationaware}, where a trivial starting policy that encourages agents to move across the grid at random may already be close to the equilibrium policy.

\subsection{Experimental metrics}\label{metrics_appendix}

To give as informative results as possible about both performance and proximity to the NE, we provide three metrics for each experiment. All metrics are plotted with 2-sigma confidence intervals (2 $\times$ standard deviation), computed over 10 trials (each with a random seed) of the system evolution in each setting. This is computed based on a call to \texttt{numpy.std} for each metric over each run.

\subsubsection{Exploitability}\label{exploitability}

Works on MFGs most commonly use the \textit{exploitability} metric to evaluate how close a given policy $\pi$ is to a NE policy $\pi^*$ \citep{continuous_fictitious_play,scaling_MFG,survey_learningMFGs,scalable_deep,algumaei2023regularization,wu2024populationaware}. The metric usually assumes that all agents are following the same policy $\pi$, and quantifies how much an agent could benefit by deviating from $\pi$, by measuring the difference between the return $V_h$ (Def. \ref{mean_field_return_def}) gained by $\pi$ and that gained by a policy that best responds to the population distribution generated by $\pi$. Let us denote by $\mu^{\pi}$ the distribution generated when $\pi$ is the policy followed by all of the population aside from the deviating agent; then the exploitability of policy $\pi$ is defined as follows:

\begin{definition}[Exploitability of $\pi$]\label{exploitability_definition}
The exploitability $\mathcal{E}$ of policy $\pi$ is given by:
\[\mathcal{E}(\pi) = \max_{\pi'}V_h(\pi',\mu^{\pi}) - V_h(\pi,\mu^{\pi}).\]\end{definition}
If $\pi$ has a large exploitability then an agent can significantly improve its return by deviating from $\pi$, meaning that $\pi$ is far from $\pi^*$, whereas an  exploitability of 0 implies that $\pi = \pi^*$ - i.e. lower exploitability is considered better.

Since we do not have access to the exact best response policy $\arg\max_{\pi'}V_h(\pi',\mu^{\pi})$ as in some related works \citep{scalable_deep,wu2024populationaware}, we instead approximate the exploitability metric, similarly to \citep{flock}, as follows. We freeze the policy of all agents apart from a deviating agent, for which we store its current policy and then conduct 40 `deviation' $k$ loops of policy improvement. To approximate the expectations in Def. \ref{exploitability_definition}, we take the best return of the deviating agent across the 40 $k$ loops, as well as the mean of all the other agents' returns across these same loops. We then revert the agent back to its stored policy, before learning continues for all agents. Due to the expensive computations required for this metric, we evaluate it only on alternate $k$ iterations of the actual system evolution (for our ablation study of the experience replay buffer in Sec. \ref{ablation_buffer_experiment_section}, we evaluate only every 20 $k$). 

Since prior works conducting empirical testing have generally focused on the centralised setting, evaluations have not had to consider the exploitability metric when not all agents are following a single policy $\pi_k$, as may occur in the independent or networked settings, i.e. when $\pi^i_k \neq \pi^j_k$ for $i,k \in \{1,\dots,N\}$. The method described above for approximating exploitability involves calculating the mean return of all non-deviating agents' policies. While this is $\pi_k$ in the centralised case, if the non-deviating agents do not share a single policy, then this method is in fact approximating the exploitability of their joint policy $\boldsymbol\pi^{-d}_k$, where $d$ is the deviating agent. 

\textit{The exploitability metric has a number of limitations in our setting}. In coordination games (the setting for our tasks), agents benefit by following the same behaviour as others, and so a deviating agent generally stands to gain less from a `best-responding' policy than it might in the non-coordination games on which many other works focus. For example, the return of a best-responding agent in the `cluster’ game still depends on the extent to which other agents coordinate on where to cluster, meaning it cannot significantly increase its return by deviating from a badly clustering policy. This means that the downward trajectory of the exploitability metric is less clear in our plots than in other works. 

Moreover, our approximation takes place via policy improvement steps (as in the main algorithm) for an independent, deviating agent while the policies of the rest of the population are frozen. As such, the quality of our approximation is limited by the number of policy-improvement/expectation-estimation rounds, which must be restricted for the sake of the running speed of the experiments. Moreover, since one of the findings of our paper is that networked agents can improve their policies faster than independent agents, it is arguably unsurprising that approximating the best response by an independently deviating agent sometimes gives an unclear and noisy metric.

Given the limitations presented by approximating exploitability, we also provide the second metric to indicate the progress of learning.

\subsubsection{Average discounted return}

We record the average finite-step discounted return of the agents' policies $\pi_k^i$ during the $M_{pg}$ steps of each outer $k$ loop. This allows us to observe that settings that converge to similar exploitability values may not have similar average agent returns, suggesting that some algorithms are better than others not just at finding equilibria, but also at finding preferable equilibria (when the assumption of a unique MFG-NE is removed by reducing regularisation; see Sec. \ref{discussion}) - cf. \citet{graber2025trembling,li2025repositioning}. See, for example, Fig. \ref{cluster_16}, where the networked agents converge to similar exploitability as the independent agents, but receive higher average reward.

\subsubsection{Policy divergence}

We record the population's average policy divergence $\frac{1}{N}\Delta_k$ := $\frac{1}{N}\sum_{i=1}^N||\pi_k^i - {\pi_k^1}||_1$ for the arbitrary policy $\bar{\pi} = {\pi^1}$. Many of our theoretical results and proofs relate to the policy divergence, and in Sec. \ref{properties} we show extensively how the comparatively worsening sample complexities between the centralised, networked and independent cases are the result of their range of policy divergences. We therefore include this metric to show how this relationship affects learning in practice.

Furthermore, the theoretical guarantees assume that the population is trying to learn the unique equilibrium policy $\pi^*$, with the implication that all agents should end up with this identical policy, regardless of the learning architecture (Sec. \ref{properties}). However, we find in practice that populations may be converging (in terms of exploitability/return) while having non-diminishing policy divergence, particularly in the independent setting. We therefore also include this metric to indicate the difference between theoretical and empirical convergence.

\subsection{Hyperparameters}\label{lambda}
%\begin{xtabular}
\begin{table*}
\vspace{-0.5cm}
\caption{Hyperparameters}\label{Hyperparameters}
    \centering
    \begin{tabular}{p{0.8cm}p{0.8cm}p{13.8cm}}\\ 
\toprule
  Hyper-param.     & Value     & Comment \\
  \midrule
Grid-size     & 8x8 / 16x16 & Most experiments are run on the smaller grid, while Figs. {\ref{cluster_16} and \ref{agree_16}} demonstrate learning in a larger state space.    \\
    
    \midrule
    Trials & 10 & We run 10 trials with different random seeds for each experiment. We plot the mean and 2-sigma error bars for each metric across the trials.   \\
    \midrule
    Pop.     & 250 & We tested $N$ in \{25,50,100,200,250\}, with the networked architecture generally performing equally well with all population sizes $\geq$ 50. We chose 250 for our demonstrations, to show that our algorithm can handle large populations, indeed often larger than those demonstrated in other mean-field works, especially for grid-world environments  \citep{mfrl_yang,rl_stationary,subramanian2022multi,pomfrl,approximately_entropy,subjective_equilibria,DMFG,general_framework,cui2023multiagent}. %, while also being feasible to simulate with respect to time and computation constraints. 
    In experiments testing robustness to population increase, the population instead begins at 50 agents and has 200 added at the marked point.
    \\ 
    \midrule
    $K$     & 200 / 400       & $K$ is chosen to be large enough to see exploitability reducing, and converging where possible.\\
    \midrule
    $M_{pg}$     & 500 / 1000       &  We wish to illustrate the benefits of our networked architecture and replay buffer in reducing the number of loops required for convergence, i.e. we wish to select a low value that still permits learning. We tested $M_{pg}$ in \{300,500, 600,800,1000,1200,1300,1400,1500,1800,2000,2500,3000\}, and chose 500 for demonstrations on the 8x8 grids, and 1000 for the 16x16 grids. It may be possible to optimise these values further in combination with other hyperparameters. \\ 
    \midrule

    $M_{td}$     &  1  & We tested  $M_{td}$  in \{1,2,10,100\}, and found that we could still achieve convergence with $M_{td} = 1$. This is much lower than the requirements of the theoretical algorithms, essentially allowing us to remove the innermost nested learning loop.\\
     
    \midrule
    $C$     &   1   & We tested $C$ in \{1,5,10\}. We choose 1 to show the convergence benefits brought by even a single communication round, even in networks that may have limited connectivity.\\ %The higher numbers of rounds we tested did not appear to bring significant additional benefit in our experimental settings.\\
\midrule
     $L$     &   100    &  As with $M_{pg}$, we wish to select a low value that still permits learning. We tested $L$ in \{50,100,200,300,400,500\}. In combination with our other hyperparameters, we found $L \leq 50$ led to less good results, but it may be possible to optimise this hyperparameter further.\\
    \midrule
    $E$     &   100    & We tested $E$ in \{100,300,1000\}, and choose the lowest value to show the benefit to convergence even from few evaluation steps. It may be possible to reduce this value further and still achieve similar results. \\
     \midrule
    $\gamma$     & 0.9 & Standard choice across RL literature.     \\
    \midrule
    
    $\beta$     &   0.1    & We tested $\beta$ in \{0.01,0.1\} and found 0.1 to be small enough for adequate learning at an acceptable speed.  Further optimising this hyperparameter (including by having it decay with increasing $l \in 0,\dots,L-1$, rather than leaving it fixed) may lead to better results.\\ 
    \midrule
    $\eta$     &   0.01     & We tested $\eta$ in \{0.001,0.01,0.1,1,10\} and found that 0.01 gave stable learning that progressed sufficiently quickly. %We also experiment with decaying $\eta$, see below.\\
     %\midrule
    %$d_{\eta}$    &   0.995 / 0.999 / 1    & As discussed in Appendix \ref{additional_experiments}, we explore the effect of decaying $\eta$, implemented as $\eta_k$ = $\eta$ * ($d_{\eta}$ ** $k$). We tested $d_{\eta}$ in \{0.99,0.995,0.999\}. The setting with no decay is equivalent to $d_{\eta} = 1$. Since the `cluster' game generally requires greater $K$, we use the slower decay rate 0.999 in this case, and use 0.995 for the `target agreement' game.
    \\
    \midrule
    $\lambda$    &    0    & We tested $\lambda$ in \{0,0.0001,0.001,0.01,0.1,1\}. {Since we can reduce $\lambda$  to 0 with no detriment to empirical convergence, we do so in order not to bias the NE.}\\
    \midrule
    $\tau_k$    &   cf. comment    & For fixed $\tau_k$ $\forall k$, we tested \{1,10,100,1000\}. In our experiments for fixed $\tau_k$ the value is 100 (see Figs. \ref{cluster_fixed} and \ref{agree_fixed}); this yields learning, but does not perform as well as if we anneal $\tau_k$ as follows. We begin with $\tau_0 = 10000 / (10 ** \lceil(K-1)/10\rceil$), and multiply $\tau_k$ by 10 whenever $k\bmod10=1$ i.e. every 10 iterations. Further optimising the annealing process may lead to better results.
    \\
    \bottomrule
    \end{tabular}
    
    \end{table*}
  %\end{xtabular}

See Table \ref{Hyperparameters} for our hyperparameter choices. In general, we seek to show that our networked algorithm is robust to `poor' choices of hyperparameters, such as low numbers of iterations, as may be required when aiming for practical convergence times in complex real-world problems. By contrast, the convergence speed of the independent algorithm suffers much more significantly without idealised hyperparameter choices. As such, our experimental demonstrations in the plots generally involve hyperparameter choices at the low end of the values we tested during our research.

\subsection{Results and discussion}\label{discussion}

We first provide results for learning without a replay buffer as in the theoretical algorithms. We then give the rest of our experiments with the replay buffer, beginning with our standard experimental setting and then robustness studies and numerous ablations (note that the independent setting serves as an implicit ablation of our communication scheme). We summarise findings in the body of each sub-section, while the specific results are discussed fully in each figure's caption. In each plot the decimals refer to each agent's broadcast radius as a fraction of the maximum possible distance in the grid (i.e. the diagonal). Note that the networked population with the largest radius is always fully connected.

We pre-empt possible concerns regarding the wide confidence intervals in many of our plots by saying that many works with similar experiments do not report error bars at all, and if they do they usually only give 1-sigma intervals, whereas we give  2-sigma  \citep{lauriere2021numerical,scalable_deep,zaman2023oraclefree,algumaei2023regularization,cui2023multiagent}. Moreover, the central-agent architecture usually has similar or higher variance to the networked agents in the plots, indicating that this is not an issue introduced by our communication algorithm; it is instead likely to be due to poor estimation of the Q-function when using the small numbers of loops required for practical runtimes. The independent agents have very low variance, but this is because they hardly appear to increase their returns at all in most cases.

We also give the following remark regarding the exploitability metric in some of our experimental plots, relating to the issues with this metric in coordination games, as discussed in Sec. \ref{exploitability}:

\begin{remark}\label{increasing_exploitability_remark}
The reward structure of our coordination games is such that exploitability sometimes increases from its initial value before it decreases down to 0 (e.g. Fig. \ref{cluster_regular}). This is because agents are rewarded proportionally to how many other agents are co-located with them: when agents are evenly dispersed at the beginning of the run, it is difficult for even a deviating, best-responding agent to significantly increase its reward. However, once some agents start to aggregate, a best-responding agent can take advantage of this to substantially increase its reward (giving higher exploitability), before all the other agents catch up and aggregate at a single point, reducing the exploitability down to 0. Due to this arc, in some of our plots the independent case may have lower exploitability at certain points than the other architectures, but this is not necessarily a sign of good performance. In fact, in such cases we can often see that the independent agents are hardly learning at all, with the independent agents' average return not increasing and the exploitability staying level rather than ultimately decreasing (see, for example, Figs. \ref{cluster_regular}, \ref{cluster_fail}, \ref{cluster_add} and \ref{cluster_16}). 
\end{remark}

\subsubsection{Learning with no experience replay buffer}\label{ablation_buffer_experiment_section}

\begin{figure}[t]
  \centering
  \includegraphics[width=1.0\linewidth]{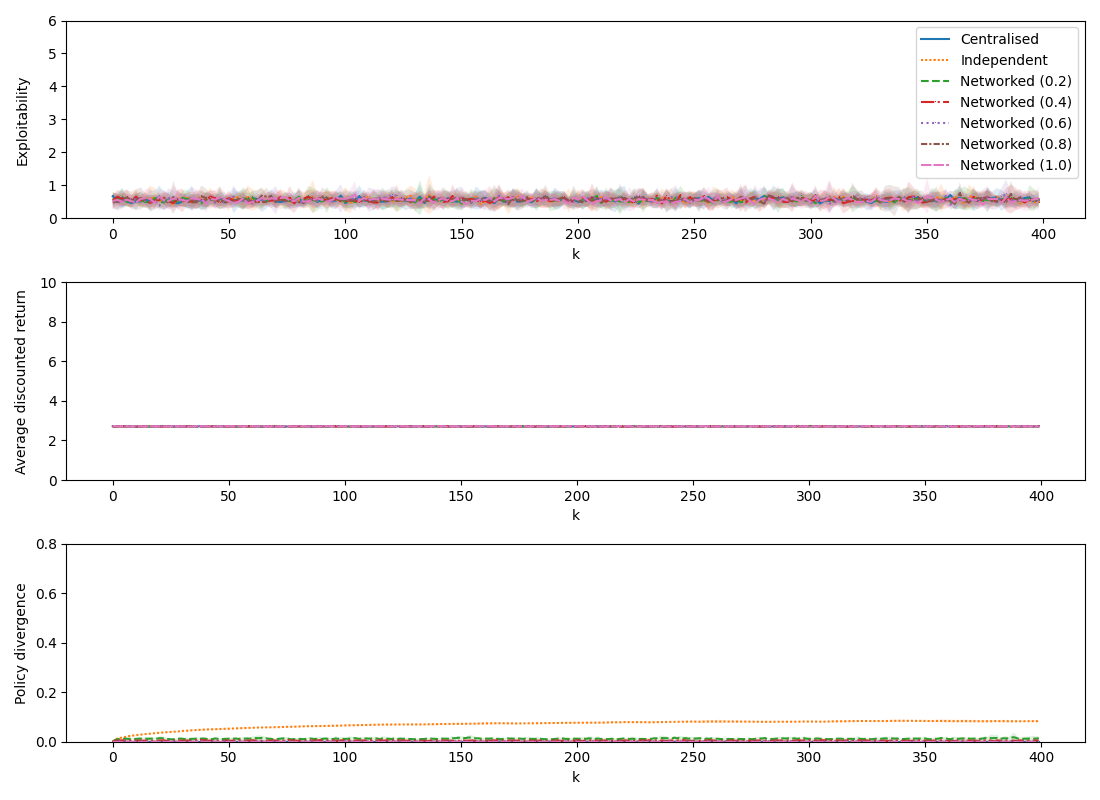}\caption{`Cluster' game without our experience replay buffer. There is no noticeable improvement in any of the agents' returns, i.e. no noticeable learning, even after $K = 400$ iterations.}
  \label{cluster_no_buffer}
\end{figure}

\begin{figure}[t]
  \centering
  \includegraphics[width=1.0\linewidth]{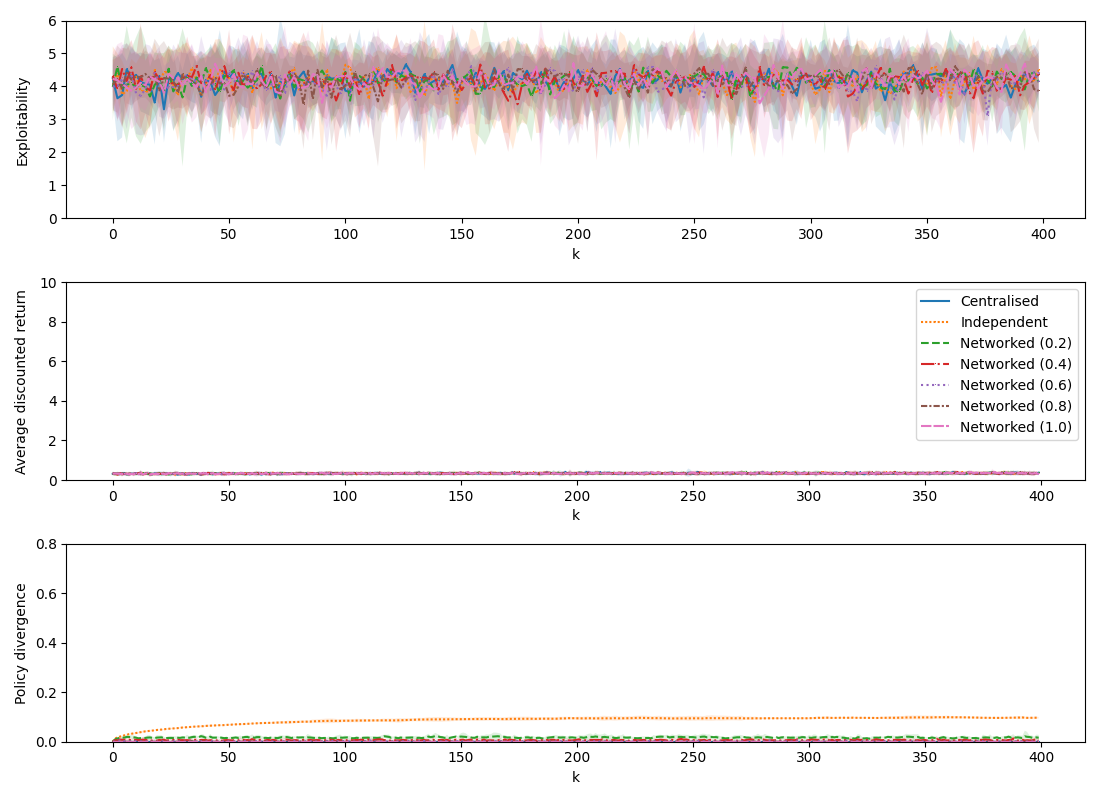}\caption{`Target agreement' game without our experience replay buffer. There is no noticeable improvement in any of the agents' returns, i.e. no noticeable learning, even after $K = 400$ iterations.
  }
  \label{agree_no_buffer}
\end{figure}

Figs. \ref{cluster_no_buffer} and \ref{agree_no_buffer} illustrate the importance of our incorporation of the experience replay buffer. Without it, as in the original theoretical version of the algorithms, there is no noticeable improvement in any of the agents' returns, i.e. no noticeable learning, even after $K = 400$ iterations. In these experiments without a replay buffer we run the core learning section of the algorithm as in Lines \ref{begin_core_theoretical}-\ref{end_core_theoretical} of Alg. \ref{networked_algorithm}, keeping the hyperparameters the same as in our main experiments, i.e. $M_{pg} = 500$, $M_{td}=1$, etc. (see Tab. \ref{Hyperparameters}). The theoretical results in fact require that these values are many orders of magnitude higher. While setting them as such might mean that empirically we do see some learning occurring within $K<400$, such experiments would take impractically long to run on standard computers, taking many days or even many weeks.

These experiments are run for 5 trials rather than 10 as in all other cases, and with exploitability evaluated every 20 $k$ instead of every 2 $k$ for computational efficiency. 

The remainder of our experiments all include our replay buffer, and therefore do permit learning in practical time, albeit at different rates for the different architectures.

\begin{figure}[t]
  \centering
  \includegraphics[width=1.0\linewidth]{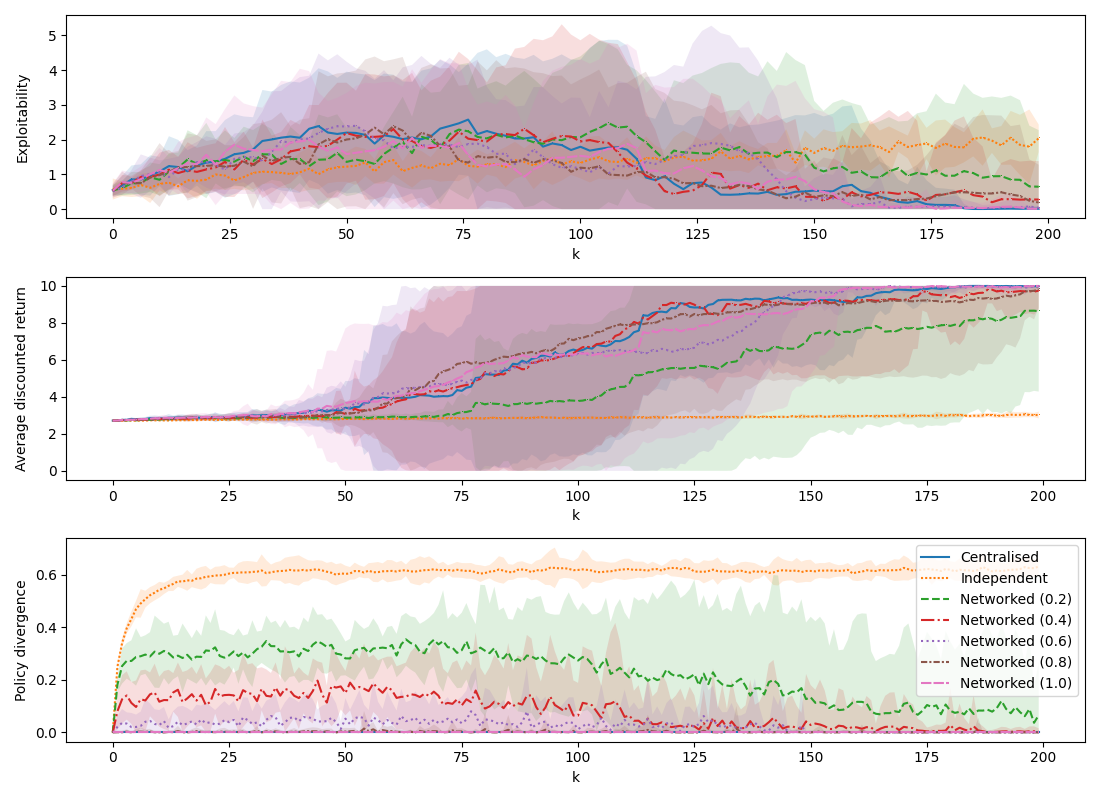}
  \caption{`Cluster' game. {Even with only a single communication round, our networked architecture significantly outperforms the independent case}, which hardly appears to be learning at all. All broadcast radii except the smallest (0.2, green) have similar mean exploitability and return to the centralised case.}
  \label{cluster_regular}
\end{figure}

\begin{figure}[t]
  \centering
  \includegraphics[width=1.0\linewidth]{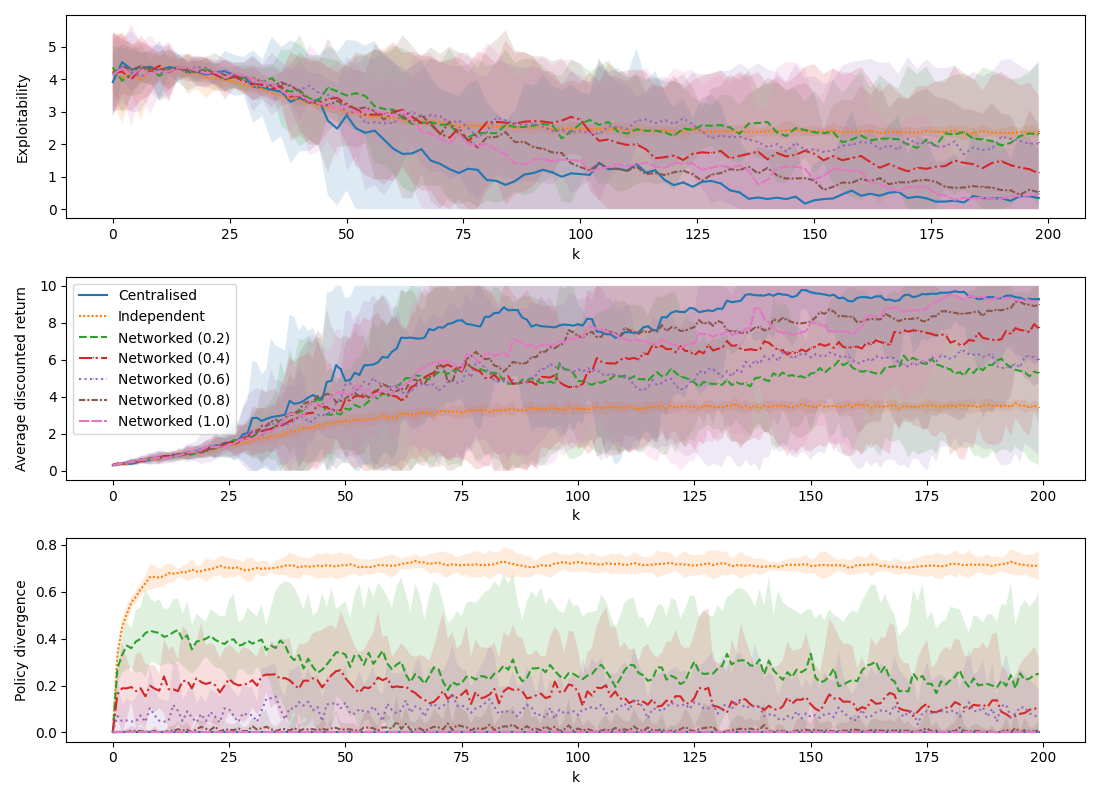}\caption{`Target agreement' game. {Even with only a single communication round, our networked case outperforms the independent case} with respect to exploitability and return. The fact that the lowest broadcast radius (0.2, green) ends with similar exploitability to the independent case yet higher return suggests {our networked algorithm might help agents find `preferable’ equilibria}. 
  %agree_regular
  }
  \label{agree_regular}
\end{figure}

\subsubsection{Standard experimental setting with replay buffer}

Even with only a single communication round in each of the $K$ loops, networked agents learn faster and reach higher returns than independent agents, which hardly appear to learn at all. Moreover networked agents appear to match the central-agent population in the `cluster' game (Fig. \ref{cluster_regular}). Our experiments show that our practical algorithmic enhancements enable convergence within a practical number of iterations even when we remove a number of the assumptions required for the theoretical algorithms:

\begin{itemize}
     \item We reduce $M_{pg}$ by many orders of magnitude from its theoretically required value (see Sec. \ref{practical_enhancements}), while still converging within a reasonable $K$. We keep the learning rate $\beta$ fixed, removing the annealing scheme for $\{\beta_m\}_{m \in \{0,\dots,M_{pg}-1\}}$ required in the theorems, and use a much higher value.%Our initial ith smaller values for $M_{pg}$ (the number of samples in the buffer) and $L$ (the number of loops through the buffer for updating the Q-function), and hence {with worse estimation of the Q-function, the networked architecture outperforms the independent case to an even greater extent}. This underlines its advantages in allowing faster practical learning. 

     \item In our experiments we do not ensure that the communication network $\mathcal{G}_t$ remains static and connected, nor that the diameter $d_\mathcal{G}$ of the network is equal for all $k$. Nevertheless, even with a single communication round the networked agents learn faster than independent ones (which hardly learn at all), sometimes performing similarly to the centralised case.

     \item The $M_{td}$ parameter is theoretically required for the learner to wait between collecting samples when learning from the empirical distribution in a non-episodic system run. However, our replay buffer allows us to reduce it to 1, effectively {removing the innermost loop of the nested learning algorithm} (see Line \ref{mtd_loop_theoretical} of Alg. \ref{networked_algorithm}). 
    
    \item We can reduce the scaling parameter $\lambda$ of the entropy regulariser to 0, i.e. we converge even without regularisation, allowing us to leave the MFG-NE unbiased and also removing Assumption \ref{Persistence_excitation}. % (Sec. \ref{Further_definitions_and_assumptions})%\citep{zeng2024single}
    {In general an unregularised MFG-NE is not unique \citep{policy_mirror_independent}; the ability of centralised and networked agents to coordinate on one of the multiple possible solutions may explain why they outperform the independent case, as discussed further below (cf. \citet{graber2025trembling,li2025repositioning}).}
    \item For the PMA operator (Def. \ref{PMA_operator}), we conduct the optimisation over the set $u \in \Delta_{\mathcal{A}}$ instead of $u \in \mathcal{U}_{L_h}$, i.e. we can {choose from all possible distributions over actions instead of needing to identify the Lipschitz constants given in Assumption \ref{lip_cont_assumpt}}.
\end{itemize}

We now give further intuition into the benefits of our communication scheme in our empirical settings where multiple equilibria are possible. For sufficiently high $\lambda$ the MFG-NE is unique, and involves all the agents constantly moving about with high entropy, at the cost of biasing the problem. However, when $\lambda$ is 0, the `target agreement' and `cluster' tasks both explicitly admit multiple Nash equilibria. In a given trial of the `target agreement' task, all the agents could converge to remaining stationary at any one of the four corners, and any one of these four situations would lead to the highest possible returns. We found in our experiments that with the different random seeds for each trial, agents did end up converging to a different corner at random each time. Similarly in the `cluster' task: for a given trial all the agents could converge to remaining stationary in any one of the grid points, and any one of these $height\times width$ situations would lead to the highest possible returns. (In practice, empirically we found that the agents usually converged at random to one of the corners in the `cluster' task as well, rather than to anywhere on the grid. This is because in the early stages of the trial, when agents start with random policies, they already spend more time visiting corners, because at any corner three actions will keep them in place, since they cannot move off the edge of the grid).

The discussion so far applies to \textit{Nash} equilibria, i.e. the situations where agents end up with the highest possible returns (equivalent to a normalised average return of 10 in the plots). Population distributions can also be at an equilibrium that is not Nash nor one that receives particularly high returns: we can broadly characterise three situations here:

\begin{enumerate}
    \item Agents, which begin the trial with random policies, never manage to reach any critical mass that breaks the ties between the possible coordination points, so continue moving about the grid with a high degree of entropy forever, even if $\lambda$ is 0. This is most likely what is happening for the independent agents across the experiments, and is why they usually converge to low returns. 
    \item The population gets segregated into two or more isolated parts of the grid, each of which would otherwise give a Nash equilibrium if the whole population were present e.g. half the population learns a policy that remains in the top left corner while the other half learns to stay at the bottom right. If the policies do not retain enough exploration, the agents will never discover the other isolated groups with which they could combine for mutual benefit (whilst if there is too much exploration, we revert to one of the other suboptimal situations, depending on the value of $\lambda$). 
    \item The population is not segregated, but oscillates between two or more locations that would otherwise represent Nash equilibria, without ever being able to settle on stable policies that agree on one location. This is similar to Case 1, but with the number of meeting points that are visited having been narrowed down. 
\end{enumerate}

Case 1 is likely to receive the worst returns. How much worse Case 2 and 3 are than the Nash equilibria depends on the size of the segregated populations and/or the frequency of the visitations caused by the oscillations. The ability of learning architectures to align the behaviour of the \textit{whole} population on a \textit{single} choice of Nash equilibrium location determines how close to the maximum return the population will receive. The independent case has no way to align policies outside of the signal from the returns themselves; if no critical mass ever forms to show differentiation in the returns, then the independent population will always remain at a low performing equilibrium. The central-agent case has an inherent method for aligning the policies of the whole population, but these policies may still oscillate between locations that would otherwise be Nash equilibria, which is why central-agent populations do not always reach the maximum returns in our plots. 

Our communication algorithm provides a method both for 1) aligning agents' policies, and for 2) choosing better performing policies on which to align (where both of these elements contribute to the selection of better equilibria). This is why we see our decentralised, networked populations receiving higher returns than the independent ones, as our algorithm helps agents to get out of the worse performing equilibria. (In principle, under the right conditions, our communication paradigm could even outperform the central-agent case: the latter aligns the population on a policy update of arbitrary quality, generated by arbitrary agent $i=1$, rather than aligning on better performing policies.) The degree to which our communication algorithm leads to policy consensus depends upon the network connectedness and the number of communication rounds. Since in our experiments we use $C=1$, it is the network connectedness - determined by the size of the broadcast radius - that has the greatest effect (for greater numbers of communication rounds, this may matter less). This is why we see the populations with higher broadcast radii converging to higher returns faster than populations with lower broadcast radii, which are in turn more capable than entirely independent agents - they are better able to align the population so as to converge to equilibria that are closer to optimal (Nash) equilibria. 

In summary, the fact that different populations in our experiments do not just improve their returns at different speeds, but actually appear to converge to different final returns, is reflective of them settling at different equilibria that give different returns. Our communication algorithm actively helps populations to settle at equilibria that are closer to optimal, i.e. `preferable' (and in so doing, to choose between multiple possible Nash equilibria).

%\subsection{Additional Experiments and Discussion}\label{additional_experiments}

%\textbf{ADD ANNEALING ABLATION!!!We additionally give studies on the robustness afforded by inter-agent communication in two settings (Figs \ref{!!!} and \ref{!!!}),} as well as an ablation study of our replay buffer in Figs. \ref{cluster_no_buffer} and \ref{agree_no_buffer}.%\textcolor{red}{We underline here two findings not emphasised in the main text. Firstly, as well as significantly outperforming the independent algorithm in almost every setting, in numerous settings {our networked architecture also outperforms the centralised case} (Figs. \ref{agree_fail}, \ref{cluster_fail}, \ref{cluster_add}, \ref{cluster, one sharing, no eta decay, temp evolves}, \ref{agree game, ONE sharing, no eta decay, temp decays}, \ref{cluster_16}), even when not specifically testing for robustness to learning failures (where the centralised case is expected to suffer from the single point of failure). Secondly, in addition to providing robustness to learning failures and changes in population size, it seems {our networked architecture may also afford robustness to difficulties in learning with a MFG algorithm when $N$ is small} (see Fig. \ref{cluster_add}).}

\subsubsection{Robustness experiments}\label{robustness_experiments}

We consider two scenarios to which we desire real-world many-agent systems (e.g. robotic swarms, autonomous vehicle traffic, etc.) to be robust%; these scenarios form the basis of our experiments on robustness (see Sec. \ref{additional_experiments} and Figs. \ref{cluster_fail}, \ref{agree_fail}, \ref{cluster_add} and  \ref{agree_add})
. The networked setup affords population \textbf{fault-tolerance} and \textbf{online scalability}, which are motivating qualities of many-agent systems.

\paragraph{Fault-tolerance} We consider a scenario in which the learning/updating procedure of agents fails with a certain probability within each iteration, in which cases $\pi^{i}_{k+1} = \pi^{i}_{k}$ (see Figs. \ref{cluster_fail} and \ref{agree_fail} for our experimental results in this scenario). In real-life decentralised settings, this might be particularly liable to occur since the updating process might only be synchronised between agents by internal clock ticks, such that some agents may not complete their update in the allotted time but will nevertheless be required to take the next step in the environment. Regardless of their cause, such failures slow the improvement of the population in the independent case, and in the central-agent population it means no improvement occurs at all in any iteration in which failure occurs, as there is a single point of failure.  Networked communication instead provides redundancy in case of update failures, with the updated policies of any agents that have managed to learn spreading through the population to those that have not (cf. \citet{10971233}). This feature thus ensures that improvement can continue for potentially the whole population even if a high number of agents do not manage to learn at a given iteration. 

Our experimental setup for this scenario is as follows: at every $k$ iteration each learner (whether centralised or decentralised) fails to update its policy (i.e. Line \ref{end_core_theoretical} of Alg. \ref{networked_algorithm} is not executed such that $\pi^{i}_{k+1} = \pi^{i}_{k}$) with a 50\% probability. The communication network allows agents that have successfully updated their policies to spread this information to those that have not, providing redundancy that the centralised and independent settings do not have. See Figs. \ref{cluster_fail} and \ref{agree_fail}.

\paragraph{Online scalability} We may want to arbitrarily increase the size of a population of agents that are already learning or operating in the environment (we can imagine extra fleets of autonomous cars or drones being deployed) - see Sec. \ref{related_work} for comparison with other works considering this type of robustness \citep{dawood2023safe,10.1002/aaai.12131,10.1007/978-981-97-1087-4_9,wu2024populationaware}. A purely independent setting would require all the new agents to learn a policy individually given the existing distribution, and the process of their following and improving policies from scratch may itself disturb the MFG-NE that has already been achieved by the original population. 
%On the other hand, it may be unrealistic to expect that a centralised controller can coordinate and communicate with such a large population. 
With a communication network, however, the policies that have been learnt so far can quickly be shared with the new agents in a decentralised way, hopefully before their unoptimised policies can destabilise the current MFG-NE. This would provide, for example, a way to bootstrap a large population from a smaller pre-trained group, if training were considered expensive in a given setting. %See Figs. \ref{cluster_add} and \ref{agree_add}  for our experimental results and discussion of this scenario.

%Otherwise, an independent learning architecture would require failed or new agents to `catch up' with the rest of the population on their own, slowing down or destabilising the learning of the rest of the population. 
Our experimental setup for this scenario is as follows: instead of having 250 agents throughout, the population begins with 50 agents learning normally, and a further 200 agents are added to the population at the marked point. The networked architectures are quickly able to spread the learnt policies to the newly arrived agents such that learning progress is minimally disturbed, whereas convergence is significantly impacted in the independent case. See Figs.  \ref{cluster_add} and \ref{agree_add}.

\iffalse
\begin{itemize}
%\item \textbf{Number of communication rounds}. \textcolor{red}{In Figs. \ref{cluster, one sharing, no eta decay, temp evolves} and \ref{agree game, ONE sharing, no eta decay, temp decays}, we show the results of setting $C$ to 1 rather than 20 as in all other experiments. \textbf{Despite having only one round of communication, the networked architecture continues to significantly outperform the independent case} in all settings. In fact, in the `cluster' game \textbf{all broadcast radii also outperform the centralised case}; in the `target agreement' game the largest broadcast radius also outperforms the centralised case, while the other radii perform similarly or only slightly worse.}
%\item \textbf{Number of communication rounds}. Exploitability is evaluated every 20 $k$ instead of every 2 $k$ for computational efficiency.
    %\item \textbf{$\eta$ learning rate decay}. We test decaying the learning rate $\eta$ of the PMA operator. Comparing Figs. \ref{cluster_regular} and \ref{Cluster game; 20 communication rounds per K; softmax temp decays; with policy update rate decay}, and Figs. \ref{agree_regular} and \ref{Agree game; 20 communication rounds per K; softmax temp decays; policy update rate decay}, indicates that this generally improves the convergence of all algorithms, though with the knock-on effect of reducing the degree to which the networked cases are able to outperform the independent cases.
\end{itemize}
\fi

\iffalse
\begin{figure}[t]
  \centering
  \includegraphics[width=1.0\linewidth]{}
  \caption{Larger version of Fig. \ref{cluster_regular}. `Cluster' game. {Even with only a single communication round, our networked architecture significantly outperforms the independent case}, which hardly appears to be learning at all. All broadcast radii except the smallest (0.2) appear to match and at times even outperform the centralised case.}
  \label{big_cluster_regular}
\end{figure}

\begin{figure}[t]
  \centering
  \includegraphics[width=1.0\linewidth]{}
  \caption{Larger version of Fig. \ref{agree_regular}. `Target agreement' game. {Even with only a single communication round, our networked case outperforms the independent case} with respect to exploitability, and markedly outperforms with respect to return. The fact that the lowest broadcast radius (0.2) ends with similar exploitability to the independent case yet much higher return shows {our networked algorithm can help agents find `preferable’ equilibria}. 
  %agree_regular
  }
  \label{big_agree_regular}
\end{figure}

\fi

\begin{figure}[t]
  \centering
    \includegraphics[width=1.0\linewidth]{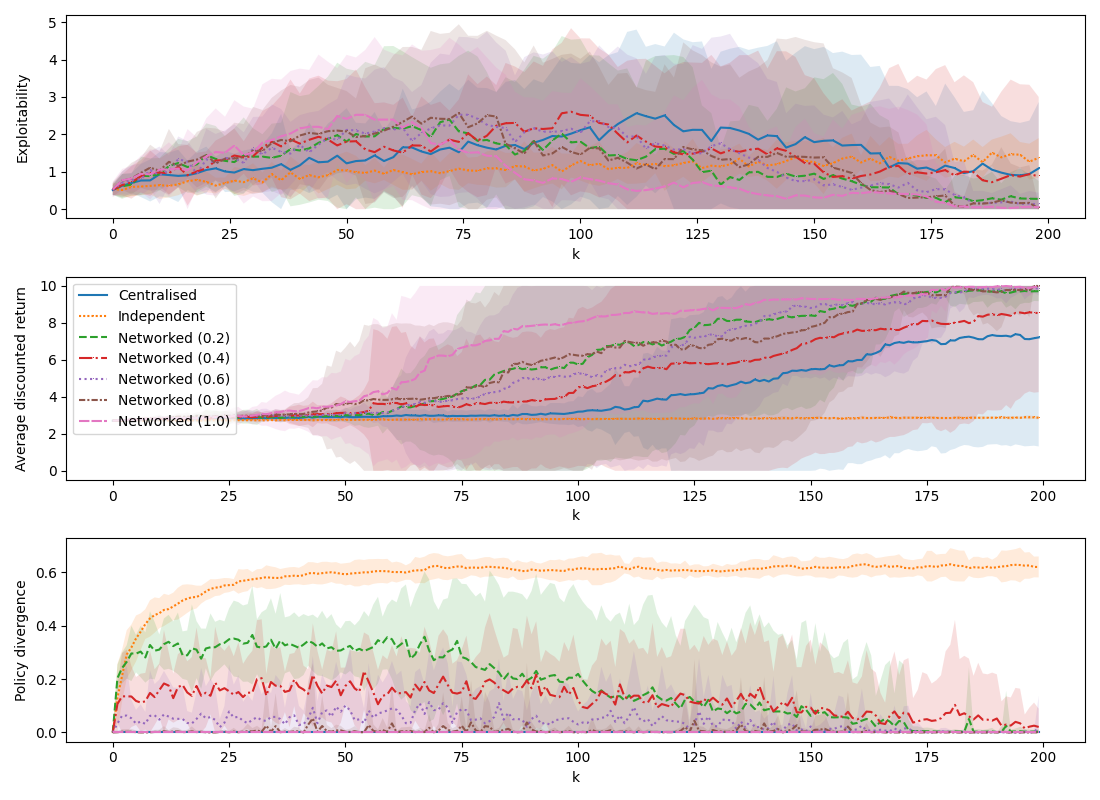}
  \caption{`Cluster' game, testing robustness to 50\% probability of policy update failure. The communication network allows agents that have successfully updated their policies to spread this information to those that have not, providing redundancy. Independent learners cannot do this and hardly appear to learn at all (no increase in return); likewise the centralised population is susceptible to its single point of failure and learns slower than before. Thus {our networked architecture outperforms both the centralised and independent cases}.}
  \label{cluster_fail}
 \end{figure}

\begin{figure}[t]
  \centering 
  \includegraphics[width=1.0\linewidth]{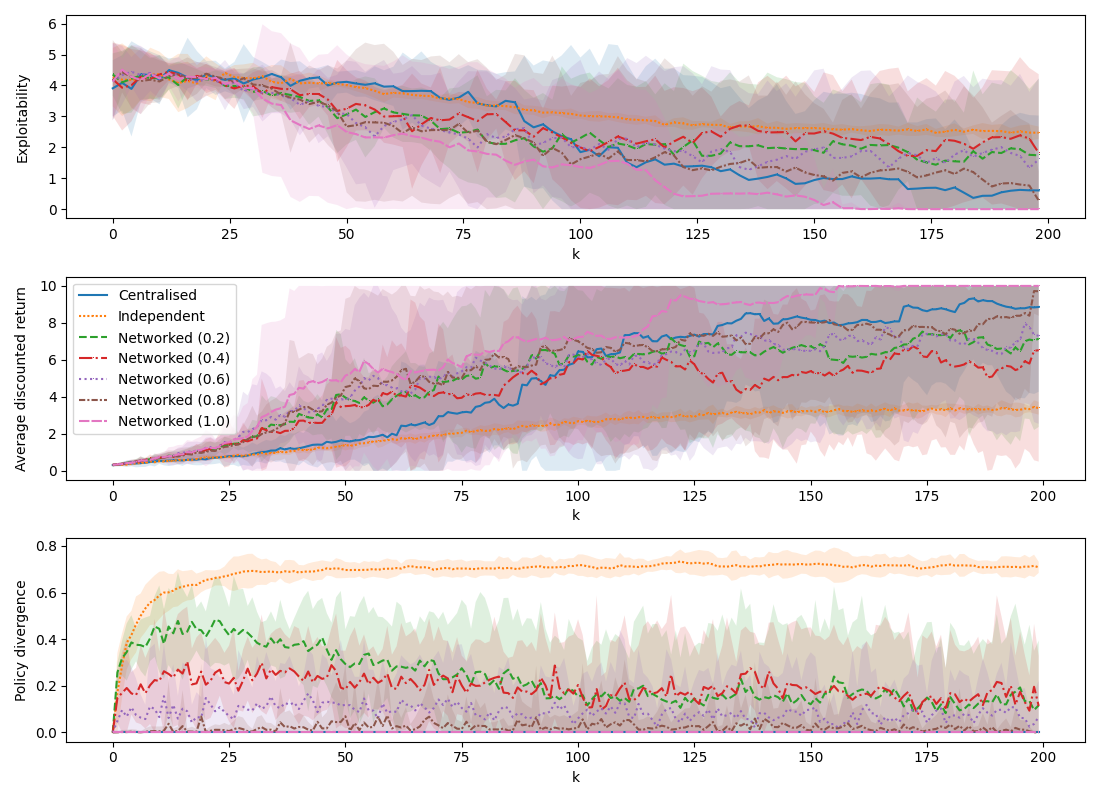} % Include your first figure
    \caption{`Target agreement' game, testing robustness to 50\% probability of policy update failure. {All the networked cases outperform the independent case and also learn faster than the centralised case for long periods}. The communication network allows agents that have successfully updated their policies to spread this information to those that have not, providing redundancy. Independent learners cannot do this so have even slower convergence than normal in this task; likewise the centralised architecture is susceptible to its single point of failure, hence learning can be slower than in the networked case.
}
    \label{agree_fail}
  
\end{figure}

\begin{figure}[t]
  \centering
  \includegraphics[width=1.0\linewidth]{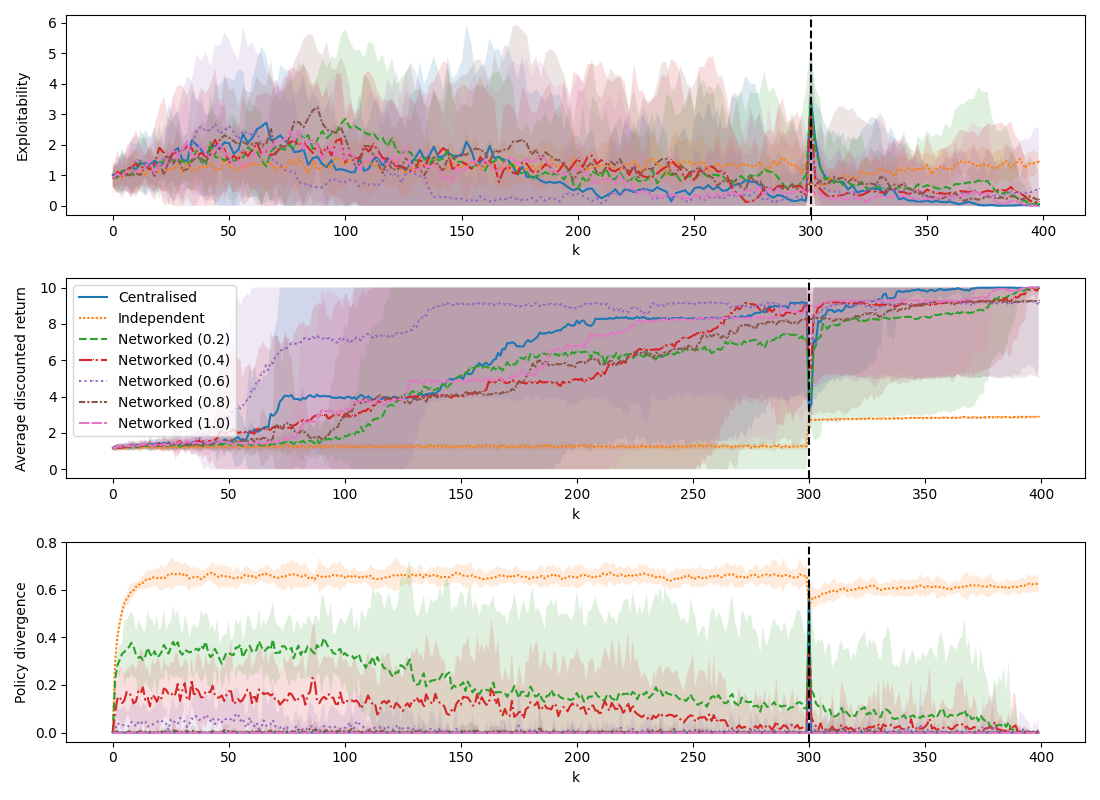}
  \caption{`Cluster' game, testing robustness to a five-times increase in population. While the independent algorithm appears to enjoy similar exploitability to the other cases (see Rem. \ref{increasing_exploitability_remark}), we can see from its average return that it is not in fact learning at all; while the return rises after the increase in population size this is only because there are now more agents with which to be co-located, rather than because learning has progressed. Since here, unlike in the `target agreement’ game in Fig. \ref{agree_add}, independent agents have hardly improved their return in the first place, we do not see the adverse effect that the addition of agents to the population has on the progress of learning. All networked populations {perform similarly to or outperform the centralised case, and all markedly outperform the independent case} in terms of return. The communication network allows the learnt policies to quickly spread to the newly arrived agents, such that the progression of learning is minimally disturbed, without needing to rely on the assumption of a centralised learner. The fact that, in all cases, the return prior to the population increase at $k=300$ is lower than in Fig. \ref{cluster_regular}, is reflective of the fact that the error in the solution reduces as $N$ tends to infinity.
  }
  \label{cluster_add}
\end{figure}

\begin{figure}[t]
    \centering
    \includegraphics[width=1.0\linewidth]{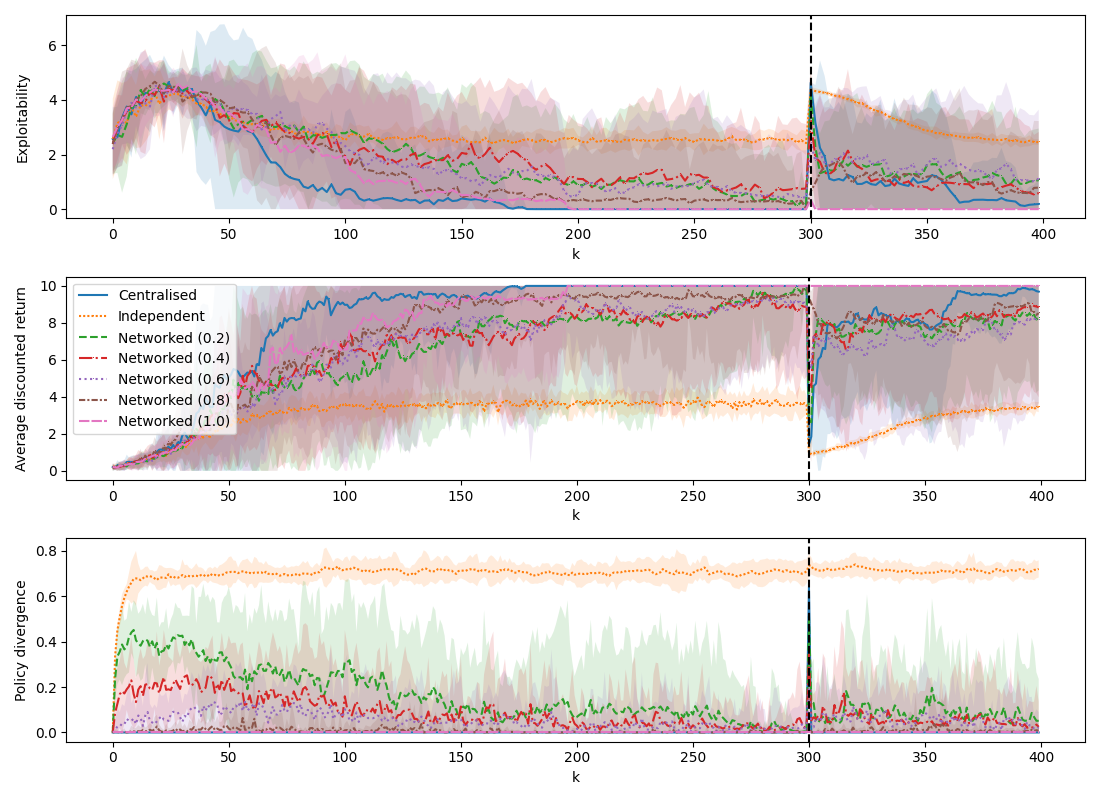} % Include your second figure
    \caption{`Target agreement' game, testing robustness to a five-times increase in population. The networked architectures are quickly able to spread the learnt policies to the newly arrived agents such that {learning progress is minimally disturbed, whereas convergence is significantly impacted in the independent case. The largest broadcast radius (1.0, pink), in particular, suffers no disturbance at all, being more robust than the centralised case}, which takes a significant amount of time to return to equilibrium.} 
    \label{agree_add}
\end{figure}

The remainder of our experiments provide further studies and ablations in the standard settings (i.e. not the robustness scenarios):

\subsubsection{Experiments on larger grid}

Figs. \ref{cluster_16} and \ref{agree_16} show the result of learning on a grid of size 16x16 instead of 8x8 as in all other experiments. There is at times greater differentiation in this setting than in the 8x8 grid between the performances of the different broadcast radii of the networked architecture (as is to be expected in a less densely populated environment). The {networked architecture continues to outperform the independent case for most broadcast radii}.

\begin{figure}[t]
  \centering
  \includegraphics[width=1.0\linewidth]{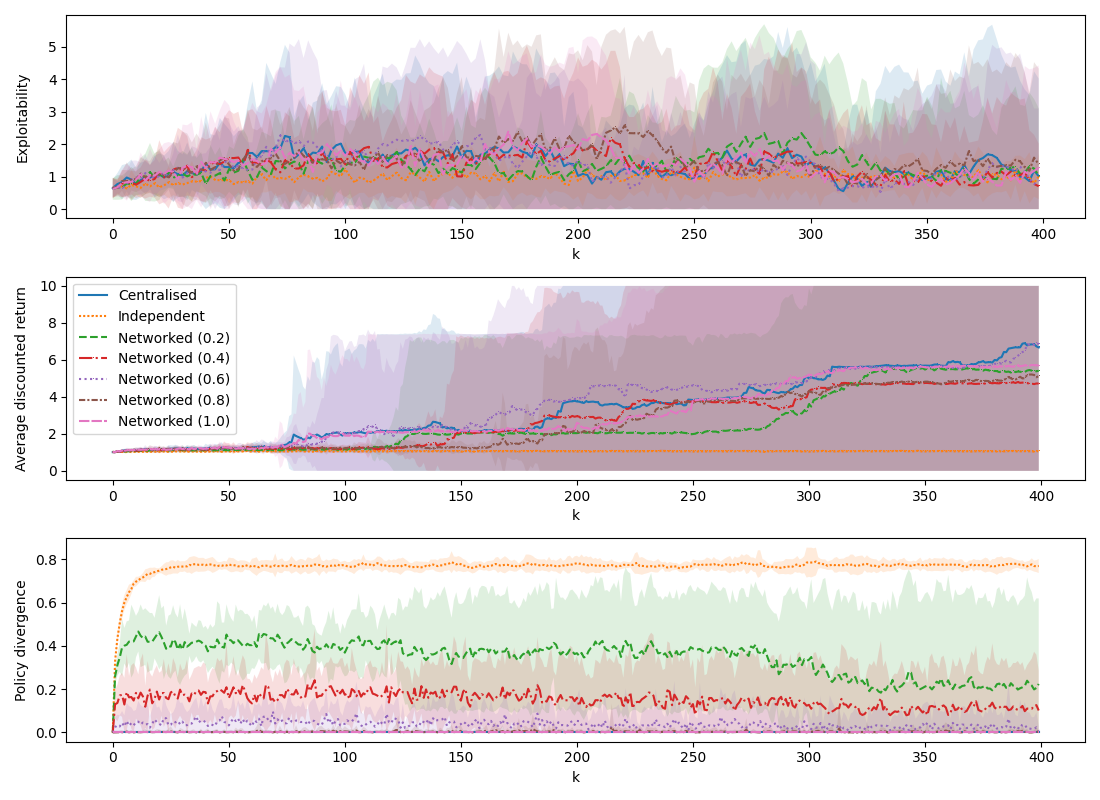}\caption{`Cluster' game on the larger 16x16 grid. While the independent-learning case has similar exploitability to the other settings, we can see that it is not actually learning to increase its return at all, making this an undesirable equilibrium. (I.e. agents are moving about randomly so there is little a deviating agent can do to increase its reward, hence exploitability is low even though the agents are not in fact clustered - see Rem. \ref{increasing_exploitability_remark}.) {All the networked settings perform similarly to the centralised case and outperform the return of the independent agents.}
  }
  \label{cluster_16}
\end{figure}

\begin{figure}[t]
  \centering
  \includegraphics[width=1.0\linewidth]{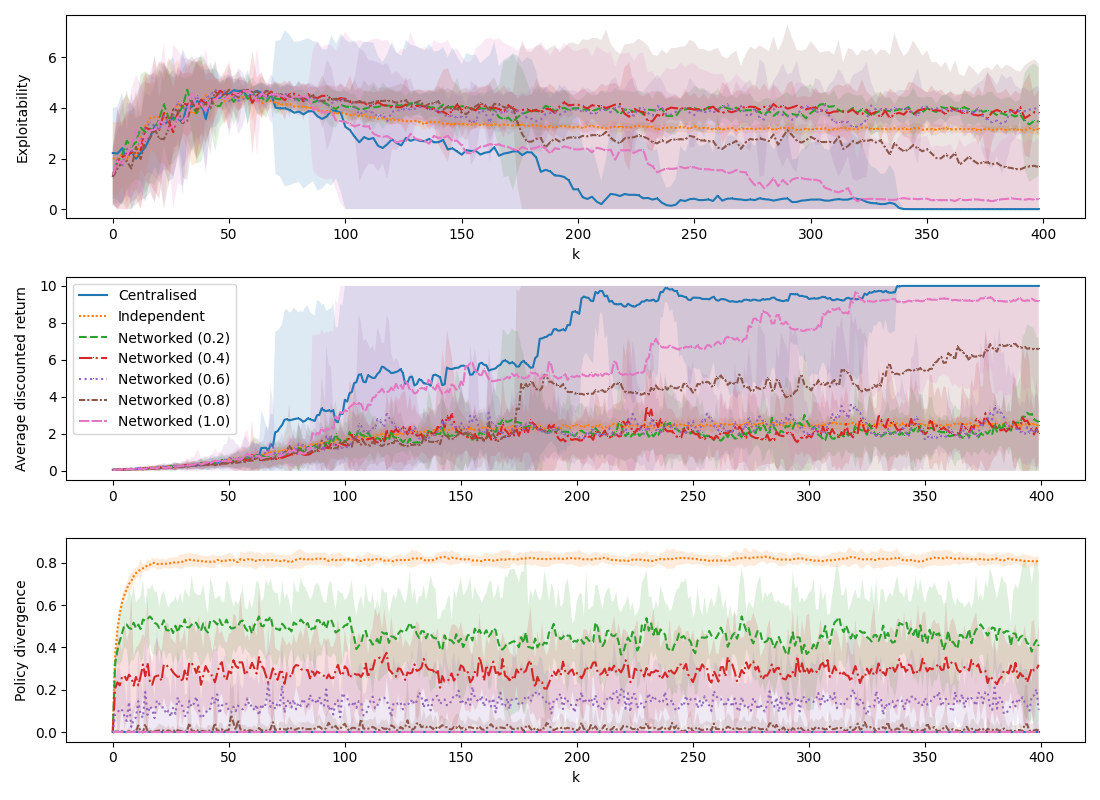}\caption{`Target agreement' game on the larger 16x16 grid. There is greater differentiation in this setting than in the 8x8 grid (Fig. \ref{agree_regular}) between the different broadcast radii in the networked cases, as might be expected in a less densely populated environment. The two largest broadcast radii (1.0, pink, and 0.8, brown), which have the most connected networks, outperform the independent case in terms of both exploitability and return. However, the other broadcast radii perform similarly to the independent case. 
  }
  \label{agree_16}
\end{figure}

\subsubsection{Ablation study of softmax temperature annealing scheme}

\begin{figure}[t]
  \centering
  \includegraphics[width=1.0\linewidth]{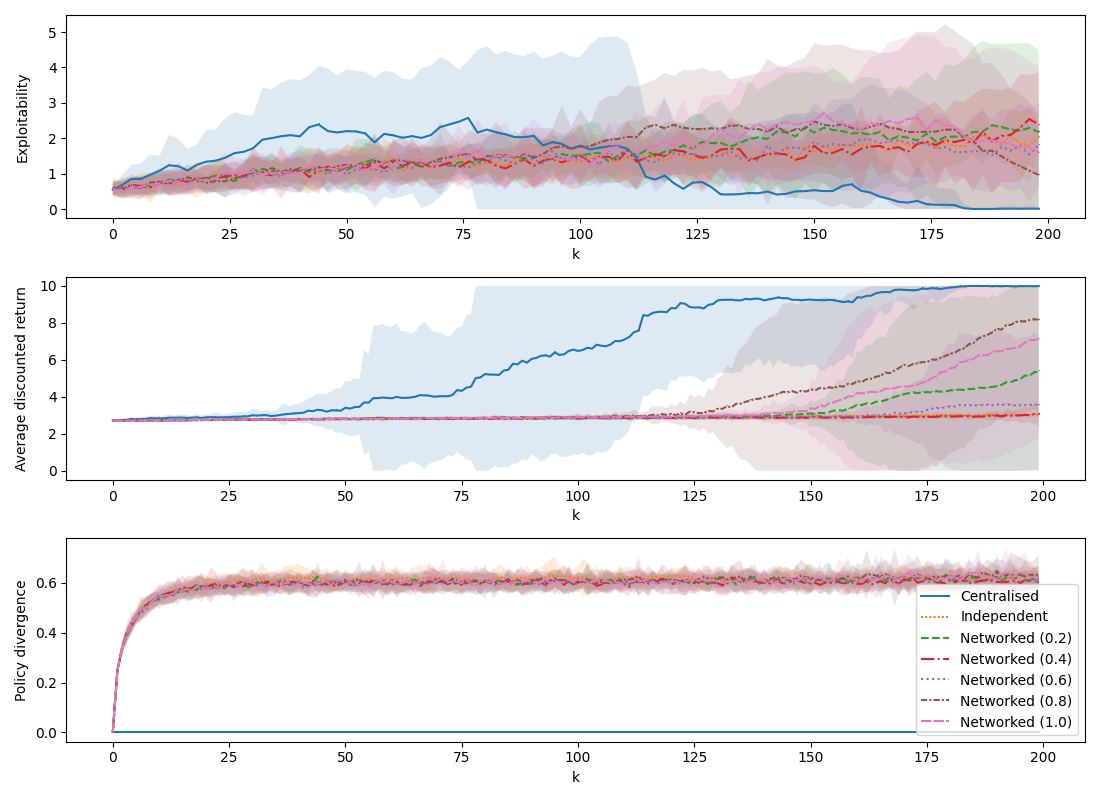}\caption{`Cluster' game with $\tau_k$ fixed as 100 for all $k$; compare this to Fig. \ref{cluster_regular} where $\tau_k$ is annealed. Without the annealing scheme, the networked architecture appears to perform similarly to the independent case in terms of exploitability, but {several broadcast radii outperform the independent case in terms of return, demonstrating that our networked algorithm can still help agents find `preferable’ equilibria}. However, whereas with annealing the networked architecture converges similarly to the centralised case, here it performs less well. 
  }
  \label{cluster_fixed}
\end{figure}

\begin{figure}[t]
  \centering
  \includegraphics[width=1.0\linewidth]{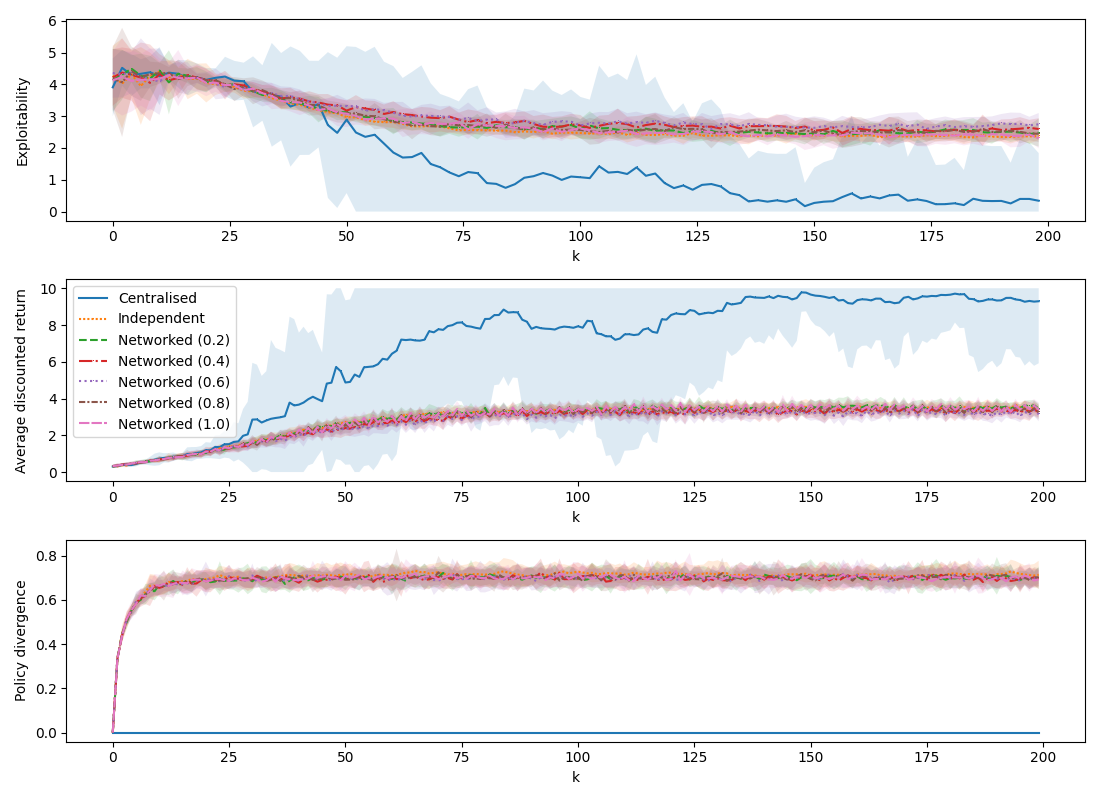}\caption{`Target agreement' game with $\tau_k$ fixed as 100 for all $k$. Without our annealing scheme for the softmax temperature, the networked architecture does not outperform the independent case. Compare this to Fig. \ref{agree_regular} which shows the benefit of annealing $\tau_k$. 
  }
  \label{agree_fixed}
\end{figure}

Figs. \ref{cluster_fixed} and \ref{agree_fixed} illustrate the effect of fixed $\{\tau_k\}_{k \in \{0,\dots,K-1\}} = 100$, where the networked architecture does not perform as well as if we use the stepped annealing scheme employed in all the other experiments and detailed in Table \ref{Hyperparameters}. The intuition behind the better performance achieved with the annealing scheme is as follows. If we begin with small $\tau_k$ (such that the softmax approaches being a max function), we heavily favour the adoption of the highest rewarded policies to speed up progress in the early stages of learning. Subsequently we increase $\tau_k$ in steps, promoting greater randomness in adoption, so that as the agents come closer to equilibrium, poorer policy updates that nevertheless receive a high return (due to randomness) do not introduce too much instability to learning and prevent convergence.

\section{Conclusion and future work}\label{future_work}

We contributed networked communication as a novel framework for learning MFGs from the empirical distribution, and provided accompanying theoretical and practical algorithms. We showed theoretically and experimentally that networked agents can considerably outperform independent ones, often performing similarly to the central-agent architecture while avoiding the restrictive assumption of the latter and its single point of failure.

Our experiments are based on relatively simple examples that demonstrate the advantages of our new approach, but which lack the complexity of the real-world applications to which we wish to address the approach. Moreover in our current experiments only the reward function depends on the mean-field distribution, and not the transition function, even though this is possible in theory; we will explore this element in future experiments. It is feasible that in more complex problems, it may not be possible to reduce hyperparameter values to the same extent we have demonstrated in our experimental examples. 

Moreover, real-world examples would likely require handling larger and continuous state/action spaces (the latter perhaps building on related work such as \citet{10448058}), which in turn may require (non-linear) function approximation. Future work therefore involves incorporating neural networks into our networked communication architecture for oracle-free, non-episodic MFG settings. Extending our algorithms in this way, which would depend on modifying the PMA step \citep{NEURIPS2020_2c6a0bae, wu2024populationaware}, would allow us to introduce communication networks to MFGs with \textit{non-stationary} equilibria, in addition to those with larger state/action spaces. Our method for non-stationary games will likely have agents' policies depending both on their local state and also on the population distribution \citep{9304340, survey_learningMFGs,perrin2022generalization,carmona2021modelfree}, but such a high-dimensional observation object is only possible with function approximation. The present work demonstrates the benefits of the networked communication architecture when the Q-function is poorly estimated and introduces experience relay buffers to the setting of learning from a non-episodic run of the empirical system. Both elements are an important bridge to employing (non-linear) function approximation in this setting, where the problems of data efficiency and imprecise value estimation can be even more acute, and where we may want to employ experience replay buffers to provide uncorrelated data to train the neural networks \citep{zhang2017deeper}. When the policy functions are approximated rather than tabular, our agents would communicate the functions' parameters instead of the whole policy as now.

In our future work with non-stationary equilibria, where agents' policies will also depend on the population distribution, it may be a strong assumption to suppose that decentralised agents with local state observations and limited communication radius would be able to observe the entire population distribution. We will therefore explore a framework of networked agents estimating the empirical distribution from only their local neighbourhood as in \citep{pomfrl}, and possibly also improving this estimation by communicating with neighbours \citep{subjective_equilibria}, such that this useful information spreads through the network along with policy parameters. 

Our algorithm for the networked case (Alg. \ref{networked_algorithm}), as well as the central-agent and independent baselines from \citet{policy_mirror_independent}, all have multiple nested loops. This is a potential limitation for real-world implementation, since the decentralised agents might be sensitive to failures in synchronising these loops. However, in practice, we show that our networked architecture provides redundancy and robustness (which the independent-learning algorithm lacks) in case of learning failures that may result from the necessities of synchronisation (see Sec. \ref{robustness_experiments}). We have also shown that networked communication in combination with the replay buffer allows us to reduce the hyperparameter $M_{td}$ to 1, essentially removing the inner `waiting' loop. Nevertheless, our algorithm still features multiple loops, and future work lies in simplifying the algorithms further to aid practical implementation, possibly by techniques such as asynchronous communication \citep{10490266}. Future works should also consider updating our theoretical guarantees in light of our current practical algorithmic enhancements, as well as any future modifications. 

Since the MFG setting is technically non-cooperative, we have pre-empted objections that agents would not have incentive to communicate their policies by focusing on coordination games, i.e. where agents seek to maximise only their individual returns, but receive higher rewards when they follow the same strategy as other agents. In this case they stand to benefit by exchanging their policies with others. Future work lies in extending our networked communication algorithms to mean-field control, the cooperative counterpart to MFGs, where agents would have incentive to communicate across different types of game. Nevertheless, in real-world settings, the communication network could still be vulnerable to malfunctioning agents or adversarial actors poisoning the equilibrium by broadcasting untrue policy information \citep{agrawal2024impact}, or equally to unreliable communication channels. It is outside the scope of this paper to analyse how much false information would have to be broadcast by how many agents to affect the equilibrium, but real-world applications may need to compute this and prevent it. Future research to mitigate this risk might build on work such as \citet{piazza2024power}, where `power regularisation' of information flow is proposed to limit the adverse effects of communication by misaligned agents.

%\textcolor{red}{[`power'/influence regularisation to minimise effect of communication from misaligned agents: \citep{piazza2024power} ((babbling equilibrium also is argument against this possible limitation]}

While our MFG \textit{algorithms} are designed to handle arbitrarily large numbers of agents (and theoretically perform better as $N \rightarrow \infty$), the \textit{code} for our experiments naturally still suffers from a bottleneck of computational speed when simulating agents that in the real world would be acting and learning in parallel, since the GPU can only process JAX-vectorised elements in batches of a certain size.

\subsubsection*{Broader Impact Statement}
%\textbf{In this optional section, TMLR encourages authors to discuss possible repercussions of their work, notably any potential negative impact that a user of this research should be aware of.  Authors should consult the TMLR Ethics Guidelines available on the TMLR website for guidance on how to approach this subject.}
We identified no specific ethical concerns regarding our work, which explores new game theoretical and machine learning algorithms in general settings.

\iffalse
\subsubsection*{Author Contributions}
If you'd like to, you may include a section for author contributions as is done
in many journals. This is optional and at the discretion of the authors. Only add
this information once your submission is accepted and deanonymized. 

\subsubsection*{Acknowledgments}
Use unnumbered third level headings for the acknowledgments. All
acknowledgments, including those to funding agencies, go at the end of the paper.
Only add this information once your submission is accepted and deanonymized. 
\fi
\bibliography{main}
\bibliographystyle{tmlr}

\end{document}